\renewcommand{\Function}[2]{%
	\csname ALG@cmd@\ALG@L @Function\endcsname{#1}{#2}%
	\def\jayden@currentfunction{#1}%
}
\newcommand{\funclabel}[1]{%
	\@bsphack
	\protected@write\@auxout{}{%
		\string\newlabel{#1}{{\jayden@currentfunction}{\thepage}}%
	}%
	\@esphack
}
\newcommand{\onlay}{{\rm ONLAY}}
\newcommand{\stakedag}{{\rm StakeDag}}
\newcommand{\stair}{{\rm STAIR}}
\newtheorem{thm}{Theorem}[section]
\newtheorem{lem}[thm]{Lemma}
\newtheorem{prop}[thm]{Proposition}
\newtheorem{defn}{Definition}[section]
\newcommand{\dfnn}[2]{ \textbf{\emph{[#1]}} {#2}}
\renewcommand{\vec}[1]{\mathbf{#1}}
\newcommand{\eself}{\hookrightarrow^{s}}
\newcommand{\eref}{\hookrightarrow^{r}}
\newcommand{\eancestor}{\hookrightarrow^{a}} 
\newcommand{\eselfancestor}{\hookrightarrow^{sa}} 
\newcommand{\erefz}{\hookrightarrow}
\newcommand{\efork}{\pitchfork}
\newcommand{\dom}{\gg}
\newcommand{\sdom}{\gg^{s}}
\newcommand{\domf}{\gg^{f}}
\newcommand{\hibefore}{\mapsto}
\newcommand{\hbefore}{\rightarrow}
\newcommand{\concur}{\parallel}
\def\BState{\State\hskip-\ALG@thistlm}
\title{Lachesis: Scalable Asynchronous BFT on DAG Streams}
\author{Quan Nguyen, Andre Cronje, Michael Kong, Egor Lysenko, Alex Guzev}
\affil{FANTOM}
\begin{document}
\maketitle

\begin{abstract}
This paper consolidates the core technologies and key concepts of our novel Lachesis consensus protocol and Fantom Opera platform, which is permissionless, leaderless and EVM compatible.

We introduce our new protocol, so-called Lachesis, for distributed networks achieving Byzantine fault tolerance (BFT)~\cite{lachesis01}. Each node in Lachesis protocol operates on a local block DAG, namely \emph{OPERA DAG}. 
Aiming for a low time to finality (TTF) for transactions, our general model considers DAG streams of high speed but asynchronous events. 
We integrate Proof-of-Stake (PoS) into a DAG model in Lachesis protocol to improve performance and security. Our general model of trustless system leverages participants' stake as their validating power~\cite{stakedag}.
Lachesis's consensus algorithm uses Lamport timestamps, graph layering and concurrent common knowledge to guarantee a consistent total ordering of event blocks and transactions. In addition, Lachesis protocol allows dynamic participation of new nodes into Opera network. Lachesis optimizes DAG storage and processing time by splitting local history into checkpoints (so-called epochs). We also propose a model to improve stake decentralization, and network safety and liveness ~\cite{stairdag}.

Built on our novel Lachesis protocol, Fantom's Opera platform is a public, leaderless, asynchronous BFT, layer-1 blockchain, with guaranteed deterministic finality. Hence, Lachesis protocol is suitable for distributed ledgers by leveraging asynchronous partially ordered sets with logical time ordering instead of blockchains.
We also present our proofs into a model that can be applied to abstract asynchronous distributed system.
\end{abstract}

\keywords{Lachesis protocol \and Consensus algorithm \and DAG \and Proof of Stake \and Permissionless \and Leaderless \and aBFT \and  Lamport timestamp \and Root \and Clotho \and Atropos \and Main chain \and Happened-before \and Layering \and CCK \and Trustless System \and Validating power \and Staking model}

% Table of Contents
\newpage
\pagenumbering{arabic} 
\tableofcontents 
%%
%% Start line numbering here if you want
%%
% \linenumbers
\newpage
%% main text

\section{Introduction}\label{ch:intro}
Interests in blockchains and distributed ledger technologies have surged significantly since Bitcoin~\cite{bitcoin08} was introduced in late 2008. Bitcoin and blockchain technologies have enabled numerous opportunities for business and innovation.
Beyond the success over cryptocurrency, the decentralization principle of blockchains has attracted numerous applications across different domains from financial, healthcare to logistics. Blockchains provide immutability and transparency of blocks that facilitate highly trustworthy, append-only, transparent public distributed ledgers and area promising solution for building trustless systems. Public blockchain systems support third-party auditing and some of them offer a high level of anonymity. 

Blockchain is generally a distributed database system, which stores a database of all transactions within the network and replicate it to all participating nodes. A distributed consensus protocol running on each node to guarantee consistency of these replicas so as to maintain a common transaction ledger.
In a distributed database system, \emph{Byzantine} fault tolerance (BFT)~\cite{Lamport82} is paramount to reliability of the system to make sure it is tolerant up to one-third of the participants in failure. 

Consensus algorithms ensure the integrity of transactions over the distributed network, and they are equivalent to the proof of BFT~\cite{randomized03, paxos01}. 
In practical BFT (pBFT) can reach a consensus for a block once the block is shared with other participants and the share information is further shared with others \cite{zyzzyva07, honey16}.

Despite of the great success, blockchain systems are still facing some limitations. 
Recent advances in consensus algorithms \cite{ppcoin12, dpos14, dagcoin15, algorand17, sompolinsky2016spectre, PHANTOM08} have improved the consensus confirmation time and power consumption over blockchain-powered distributed ledgers. 

\emph{Proof of Work} (PoW): is introduced in the Bitcoin's Nakamoto consensus protocol~\cite{bitcoin08}. Under PoW, validators are randomly selected based on their computation power. PoW protocol requires exhaustive computational work and high demand of electricity from participants for block generation. It also requires longer time for transaction confirmation.

\emph{Proof Of Stake} (PoS): leverages participants' stakes for selecting the creator of the next block~\cite{ppcoin12,dpos14}. Validators have voting power proportional to their stake. PoS requires less power consumption and is more secure than PoW, because the stake of a participant is voided and burnt if found dishonest.

\emph{DAG} (Directed Acyclic Graph): DAG based currency was first introduced in DagCoin paper~\cite{dagcoin15}. DAG technology allows cryptocurrencies to function similarly to those that utilize blockchain technology without the need for blocks and miners. 
DAG-based approaches utilize directed acyclic graphs (DAG)~\cite{dagcoin15, sompolinsky2016spectre, PHANTOM08, PARSEC18, conflux18} to facilitate consensus. 
Examples of DAG-based consensus algorithms include Tangle~\cite{tangle17}, Byteball~\cite{byteball16}, and Hashgraph~\cite{hashgraph16}. Tangle selects the blocks to connect in the network utilizing accumulated weight of nonce and Monte Carlo Markov Chain (MCMC). Byteball generates a main chain from the DAG and reaches consensus through index information of the chain. Hashgraph connects each block from a node to another random node. Hashgraph searches whether 2/3 members can reach each block and provides a proof of Byzantine fault tolerance via graph search.

\subsection{Motivation}
Practical Byzantine Fault Tolerance (pBFT) allows all nodes to successfully reach an agreement for a block (information) when a Byzantine node exists \cite{Castro99}. In pBFT, consensus is reached once a created block is shared with other participants and the share information is shared with others again \cite{zyzzyva07, honey16}. After consensus is achieved, the block is added to the participants’ chains~\cite{Castro99, Blockmania18}.
It takes $O(n^4)$ for pBFT, where $n$ is the number of participants.

DAG-based approaches~\cite{dagcoin15, sompolinsky2016spectre, PHANTOM08, PARSEC18, conflux18} and 
Hashgraph~\cite{hashgraph16} leverage block DAG of event blocks to reach consensus. The propose to use virtual voting on the local block DAG to determine consensus, yet their approach has some limitations.
First, the algorithm operates on a known network comprised of known authoritative participants aka permissioned network. Second, gossip propagation is slow with a latency of $O(n)$ for $n$ participants. Third, it remains unclear whether their consensus and final event ordering algorithms are truly asynchronous BFT.

In addition, there is only a few research work studying Proof-of-Stake in DAG-based consensus protocols. One example is ~\cite{hashgraph16}, but it is briefly mentioned in that paper.

Hence, in this paper, we are interested in a new approach to address the aforementioned issues in existing pBFT and DAG-based approaches, and the new approach leverages participant's stake to improve DAG-based consensus decisions. Specifically, we propose a new consensus algorithm that addresses the following questions:
\begin{itemize}
	\item Can we achieve a public Proof-Of-Stake + DAG-based consensus protocol?
	\item Can pBFT be achieved when time-to-finality (TTF) for transactions is kept closed to 1 second?
	\item Can we reach local consensus in a $k$-cluster faster for some $k$?
	\item Can we achieve faster event propagation such as using a subset broadcast?
	\item Can continuous common knowledge be used for consensus decisions with high probability?
\end{itemize}

\subsection{Lachesis protocol: Proof-Of-Stake DAG aBFT}
In this paper, we introduce our Lachesis protocol, denoted by $\mathcal{L}$ to reach faster consensus using topological ordering of events for an asynchronous non-deterministic distributed system that guarantees pBFT with deterministic finality.

The core idea of Lachesis is the OPERA DAG, which is a block DAG.
Nodes generate and propagate event blocks asynchronously and the Lachesis algorithm achieves consensus by confirming how many nodes know the event blocks using the OPERA DAG. 
In Lachesis protocol, a node can create a new event block, which has a set of 2 to $k$ parents. OPERA DAG is used to compute special event blocks, such as Root, Clotho, and Atropos. 
The Main chain consists of ordered Atropos event blocks. It can maintain reliable information between event blocks. The OPERA DAG and Main chain are updated frequently with newly generated event blocks and can respond strongly to attack situations such as forking and parasite attack.

{\bf PoS + DAG}: We introduce StakeDag protocol~\cite{stakedag} and StairDag protocol~\cite{stairdag} presents a general model that integrates Proof of Stake model into DAG-based consensus protocol Lachesis. Both generate blocks asynchronously to build a \emph{weighted DAG} from Validator blocks. Consensus on a block is computed from the gained validating power of validators on the block. We use Lamport timestamp, Happened-before relation between event blocks, graph layering and hierarchical graphs on the weighted DAG, to achieve deterministic topological ordering of finalized event blocks in an asynchronous leaderless DAG-based system.

\begin{figure}[h] \centering
	\includegraphics[width=0.9\columnwidth]{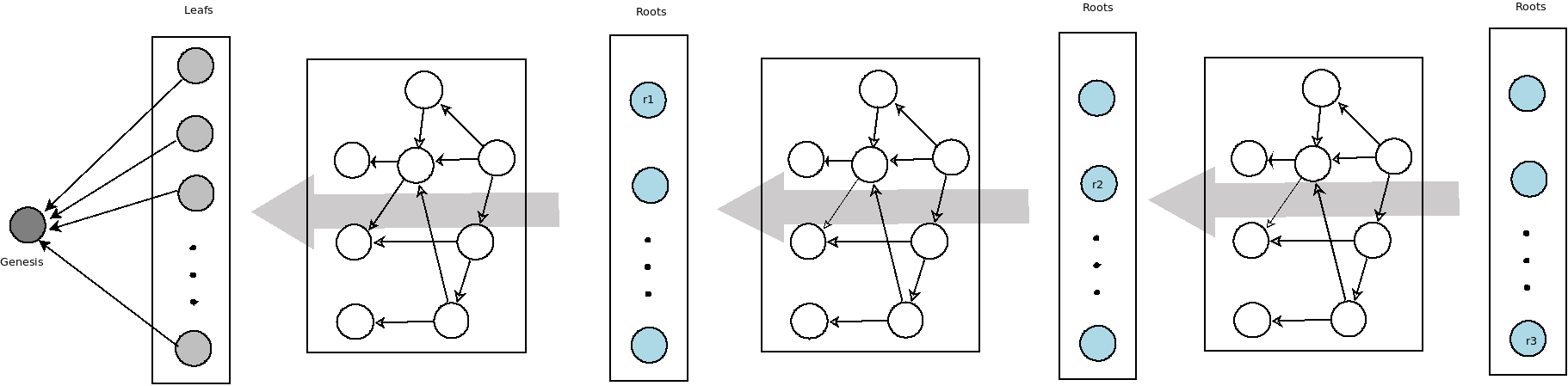}
	\caption{Consensus Method through Path Search in a DAG (combines chain with consensus process of pBFT)}
	\label{fig:pBFTtoPath}
\end{figure}
Figure~\ref{fig:pBFTtoPath} illustrates how consensus is reached through the path search in the OPERA DAG. Leaf set, denoted by $R_{s0}$, consists of the first event blocks created by individual participant nodes. Let $R_{si}$ be the $i$-th root set.
A root $r_1$ in $R_{s1}$ can reach more than a quorum in the leaf set. A quorum is 2/3W, where W is the total validating power in a weighted DAG, or 2/3n where n is the number nodes in an unweighted DAG. A root $r_3$ in a later root set may confirm some root $r'$ in $R_{s1}$ and  the subgraph of $r'$. Lachesis protocol reaches consensus in this manner and is similar to the proof of pBFT approach.

{\bf Dynamic participation}: Lachesis protocol supports \emph{dynamic participation} so that all participants can join the network~\cite{fantom18}.

{\bf Layering}: Lachesis protocol leverages the concepts of graph layering and hierarchical graphs on the DAG, as introduced in our ONLAY framework~\cite{onlay19}. Assigned layers are used to achieve deterministic topological ordering of finalized event blocks in an asynchronous leaderless DAG-based system.

The main concepts of Lachesis protocol are given as follows:
\begin{itemize}
	\item Event block: Nodes can create event blocks. Event block includes the signature, generation time, transaction history, and reference to parent event blocks.
	\item Happened-before: is the relationship between nodes which have event blocks. If there is a path from an event block $x$ to $y$, then $x$ Happened-before $y$. ``$x$ Happened-before $y$" means that the node creating $y$ knows event block $x$.
	\item Lamport timestamp: For topological ordering, Lamport timestamp algorithm uses the happened-before relation to determine a partial order of the whole event block based on logical clocks.
	\item Stake: This corresponds to the amount of tokens each node possesses in their deposit. This value decides the validating power a node can have.
	\item User node: A user node has a small amount stake (e.g., containing 1 token).
	\item Validator node: A validator node has large amount of stake ($\geq$ 2 tokens).
	\item Validation score: Each event block has a validation score, which is the sum of the weights of the roots that are reachable from the block.	
	\item OPERA DAG: is the local view of the DAG held by each node, this local view is used to identify topological ordering, select Clotho, and create time consensus through Atropos selection.
	\item S-OPERA DAG: is the local view of the weighted Directed Acyclic Graph (DAG) held by each node. This local view is used to determine consensus.
	\item Root: An event block is called a \emph{root} if either (1) it is the first event block of a node, or (2) it can reach more than $2/3$ of the network's validating power from other roots. A \emph{root set} $R_s$ contains all the roots of a frame. A \emph{frame} $f$ is a natural number assigned to Root sets and its dependent event blocks.
	\item Clotho: A Clotho is a root at layer $i$ that is known by a root of a higher frame ($i$ + 1), and which in turns is known by another root in a higher frame ($i$ +2).
	\item Atropos: is a Clotho assigned with a consensus time.
	\item Main chain: \stakedag's Main chain is a list of Atropos blocks and the subgraphs reachable from those Atropos blocks.
\end{itemize}

\subsection{Contributions}

In summary, this paper makes the following contributions:
\begin{itemize}
	\item We propose a novel consensus protocol, namely Lachesis, which is Proof-Of-Stake DAG-based, aiming for practical aBFT distributed ledger.
	\item Our Lachesis protocol uses Lamport timestamp and happened-before relation on the DAG, for faster consensus. Our consensus algorithms for selection of roots, Clothos and Atropos are deterministic and scalable.
	\item Lachesis protocol allows \emph{dynamic participation} for all participants to join the network.
	\item Our protocol leverages \emph{epochs} or \emph{checkpoints} to significantly improve storage usage and achieve faster consensus computation.
	\item We define a formal model using continuous consistent cuts of a local view to achieve consensus via layer assignment. Our work is the first that give a detailed model, formal semantics and proofs for an aBFT PoS DAG-based consensus protocol. 
\end{itemize}

\subsection{Paper structure}

The rest of this paper is organized as follows. 
Section~\ref{se:related} gives the related work. 
Section~\ref{se:lachesis} presents our model of Lachesis protocol that is Proof of Stake DAG-based. 
Section~\ref{se:staking} 
introduces a staking model that is used for our \stair\ consensus protocol.
Section~\ref{se:network} describes the Opera network's overall architecture, node roles and running services to support Fantom's ecosystem. We also give a brief performance analysis and some statistics of the Opera network.
Section~\ref{se:discuss} gives some discussions about our Proof of Stake \stair\ protocol, such as fairness and security.
Section~\ref{se:con} concludes.

%\newpage
\section{Related work}\label{se:related}

\subsection{An overview of Blockchains}

A blockchain is a type of distributed ledger technology (DLT) to build record-keeping system in which its users possess a copy of the ledger. The system stores transactions into blocks that are linked together. The blocks and the resulting chain are immutable and therefore serving as a proof of existence of a transaction. 
In recent years, the blockchain technologies have seen a widespread interest, with applications across many sectors such as
finance, energy, public services and sharing platforms. 

For a public or permissionless blockchain, it has no central authority. Instead, consensus among users is paramount to guarantee the security and the sustainability of the system. For private, permissioned or consortium blockchains, one entity or a group of entities can control who sees, writes and modifies the data on it. 
We are interested in the decentralization of BCT and hence only public blockchains are considered in this section. 

In order to reach a global consensus on the blockchain, users must follow the rules set by the consensus protocol of the system.

\subsubsection{Consensus algorithms}

In a consensus algorithm, all participant nodes of a distributed network share transactions and agree integrity of the shared transactions~\cite{Lamport82}. It is equivalent to the proof of Byzantine fault tolerance in distributed database systems~\cite{randomized03, paxos01}. 
The Practical Byzantine Fault Tolerance (pBFT) allows all nodes to successfully reach an agreement for a block when a Byzantine node exists \cite{Castro99}. 

There have been extensive research in consensus algorithms. 
Proof of Work (PoW)~\cite{bitcoin08}, used in the original Nakamoto consensus protocol in Bitcoin, requires exhaustive computational work from participants for block generation. Proof Of Stake (PoS)~\cite{ppcoin12,dpos14} uses participants' stakes for generating blocks.
Some consensus algorithms~\cite{algorand16, algorand17, sompolinsky2016spectre, PHANTOM08} have addressed to improve the consensus confirmation time and power consumption over blockchain-powered distributed ledges. These approaches utilize directed acyclic graphs (DAG)~\cite{dagcoin15, sompolinsky2016spectre, PHANTOM08, PARSEC18, conflux18} to facilitate consensus. 
Examples of DAG-based consensus algorithms include Tangle~\cite{tangle17}, Byteball~\cite{byteball16}, and Hashgraph~\cite{hashgraph16}.
Lachesis protocol~\cite{lachesis01} presents a general model of DAG-based consensus protocols.

\subsubsection{Proof of Work} 
Bitcoin was the first and most used BCT application. Proof of Work (PoW) is the consensus protocol introduced by the Bitcoin in 2008~\cite{bitcoin08}.  
In PoW protocol, it relies on user's computational power to solve a cryptographic puzzle that creates consensus and ensures the integrity of data stored in the chain. Nodes validate transactions in blocks (i.e. verify if sender has sufficient funds and is not double-spending) and competes with each other to solve the puzzle set by the protocol. The incentive for miners to join this mining process is two-fold: the first miner, who finds a solution, is rewarded (block reward) and gains all the transaction fees associated to the transactions. Nodes validate transactions in blocks with an incentive to gain block reward and transaction fees associated to the transactions. 

The key component in Bitcoin protocol and its successors is the PoW puzzle solving. The miner that finds it first can issue the next block and her work is rewarded in cryptocurrency.
PoW comes together with an enormous energy demand. 

From an abstract view, there are two properties of PoW blockchain:
\begin{itemize}
	\item Randomized leader election: The puzzle contest winner is elected to be the leader and hence has the right to issue the next block. The more computational power a miner has, the more likely it can be elected.
	\item Incentive structure: that keeps miners behaving honestly and extending the blockchain. In Bitcoin this is achieved by miners getting a block discovery reward and transaction fees from users. In contrast, subverting the protocol would reduce the trust of the currency and would lead to price loss. Hence, a miner would end up undermining the currency that she itself collects.
\end{itemize}

Based on those two characteristics, several  alternatives to Proof-of-Work have been proposed.

\subsubsection{Proof of Stake}

\emph{Proof of Stake}(PoS) is an alternative to PoW for blockchains. 
PoS relies on a lottery like system to select the next leader or block submitter.  
Instead of using puzzle solving, PoS is more of a lottery system.  Each node has a certain amount of stake in a blockchain. Stake can be the amount of currency, or the age of the coin that a miner holds.
In PoS, the leader or block submitter is randomly elected with a probability proportional to the amount of stake it owns in the system. The elected participant can issue the next block and then is rewarded with transaction fees from participants whose data are included. The more stake a party has, the more likely it can be elected as a leader. Similarly to PoW, block issuing is rewarded with transaction fees from participants whose data are included. 
The underlying assumption is that: stakeholders are incentivized to act in its interests, so as to preserve the system.

There are two major types of PoS. The first type is chain-based PoS~\cite{pass2017fruitchains}, which uses chain of blocks like in PoW, but stakeholders are randomly selected based on their stake to create new blocks. This includes Peercoin~\cite{king2012ppcoin}, Blackcoin~\cite{vasin2014blackcoin}, and Iddo Bentov’s work~\cite{bentov2016}, just to name a few. The second type is BFT-based PoS that is based on BFT consensus algorithms such as pBFT~\cite{Castro99}. Proof of stake using BFT was first introduced by Tendermint~\cite{kwon2014tendermint}, and has attracted more research~\cite{algorand16}. 
Ethereum's Parity project have investigated to migrate into a PoS blockchain~\cite{buterin2018}. 

PoS is more advantageous than PoW because it only requires cheap hardware and every validator can submit blocks. PoS approach reduces the energy demand and is more secure than PoW.

Every validator can submit blocks and the likelihood of acceptance is proportional to the \% of network weight (i.e., total amount of tokens being staked) they possess. Thus, to secure the blockchain, nodes need the actual native token of that blockchain. To acquire the native tokens, one has to purchase or earn them with staking rewards. Generally, gaining 51\% of a network’s stake is much harder than renting computation.

{\bf Security} Although PoS approach reduces the energy demand, new issues arise that were not present in PoW-based blockchains. These issues are shown as follows:
\begin{itemize}
	\item {\it Grinding attack:} Malicious nodes can play their bias in the election process to gain more rewards or to double spend their money.
	\item {\it Nothing at stake attack:} In PoS,  constructing alternative chains becomes easier. A node in PoS seemingly does not lose anything by also mining on an alternative chain, whereas it would lose CPU time if working on an alternative chain in PoW.
\end{itemize}

{\bf Delegated Proof of Stake}
To tackle the above issues in PoS, 
\emph{Delegated Proof of Stake} (DPoS) consensus protocols are introduced, such as Lisk, EOS~\cite{EOS}, Steem~\cite{Steem}, BitShares~\cite{bitshares} and Ark~\cite{Ark}. 
DPoS uses voting to reach consensus among nodes more efficiently by speeding up transactions and block creation. 
Users have different roles and have a incentive to behave honestly in their role.

In DPoS systems, users can \emph{vote} to select \emph{witnesses}, to whom they trust, to create blocks and validate transactions. For top tier witnesses that have earned most of the votes, they earn the right to validate transactions. Further, users can also \emph{delegate} their voting power to other users, whom they trust, to vote for witnesses on their behalf. 
In DPoS, votes are weighted based on the stake of each voter. A user with small stake amount can become a top tier witness, if it receives votes from users with large stakes.

Top witnesses are responsible for validating transactions and creating blocks, and then get fee rewards. Witnesses in the top tier can exclude certain transactions into the next block. But they cannot change the details of any transaction. There are a limited number of witnesses in the system.
A user can replace a top tier witness if s/he gets more votes or is more trusted. Users can also vote to remove a top tier witness who has lost their trust. Thus, the potential loss of income and reputation is the main incentive against malicious behavior in DPoS.

Users in DPoS systems also vote for a group of \emph{delegates}, who are trusted parties responsible for maintaining the network. Delegates are in charge of the governance and performance of the entire blockchain protocol. But the delegates cannot do transaction validation and block generation.
For example, they can propose to change block size, or the reward a witness can earn from validating a block. 
The proposed changes will be voted by the system's users.

DPoS brings various benefits:
(1) faster than traditional PoW and PoS Stake systems; 
(2) enhance security and integrity of the blockchains as each user has an incentive to perform their role honestly;
(3) normal hardware is sufficient to join the network and become a user, witness, or delegate;
(4) more energy efficient than PoW.

{\bf Leasing Proof Of Stake} Another type of widely known PoS is Leasing Proof Of Stake (LPoS). Like DPoS, LPoS allows users to vote for a delegate that will maintain the integrity of the system. Further, users in a LPoS system can lease out their coins and share the rewards gained by validating a block.

{\bf Proof of Authority} Another successor of PoS is \emph{Proof of Authority} (PoA). In PoA, the reputation of a validator acts as the stake~\cite{ProofofAuth}. PoA runs by a set of validators in a permissioned system. It gains higher throughput by reducing the number of messages sent between the validators. Reputation is difficult to regain once lost and thus is a better choice for ``stake''.

There are a number of surveys that give comprehensive details of PoW and PoS, such as~\cite{sheikh2018proof, panarello2018survey}.

Based on \stair~\cite{stairdag} and \stakedag~\cite{stakedag}, our Lachesis consensus protocos can be used for public permissionless asynchronous Proof of Stake systems. Participants are allowed to delegate their stake to a node to increase validating power of a node and share the validation rewards. Unlike existing DAG-based previous work, our Lachesis protocols are PoS+DAG approach.

\subsubsection{DAG-based approaches}

DAG-based approaches have currently emerged as a promising alternative to the PoW and PoS blockchains.
The notion of a \emph{DAG} (directed acyclic graph) was first coined in 2015 by DagCoin~\cite{dagcoin15}. Since then, DAG technology has been adopted in numerous systems, for example, ~\cite{dagcoin15, sompolinsky2016spectre, PHANTOM08, PARSEC18, conflux18}. Unlike a blockchain, DAG-based system facilitate consensus while achieving horizontal scalability. 

DAG technology has been adopted in numerous systems. This section will present the popular DAG-based approaches.
Examples of DAG-based approaches include Tangle~\cite{tangle17}, Byteball~\cite{byteball16}, 
Hashgraph~\cite{hashgraph16}, 
RaiBlocks~\cite{raiblock17},
Phantom~\cite{PHANTOM08},
Spectre~\cite{sompolinsky2016spectre},
Conflux~\cite{conflux18},
Parsec~\cite{PARSEC18} and Blockmania~\cite{Blockmania18}
.

%\subsubsection{Tangle}
Tangle is a DAG-based approach proposed by IOTA~\cite{tangle17}. 
Tangle uses PoW to defend against sybil and spam attacks. Good actors need to spend a considerable amount of computational power, but a bad actor has to spend increasing amounts of power for diminishing returns. Tips based on transaction weights are used to address the double spending and parasite attack. 

%\subsubsection{Byteball}
Byteball~\cite{byteball16} introduces an internal pay system called Bytes used in distributed database. Each storage unit is linked to previous earlier storage units. The consensus ordering is computed from a single main chain consisting of roots. Double spends are detected by a majority of roots in the chain.

%\subsubsection{Hashgraph}
Hashgraph~\cite{hashgraph16} introduces an asynchronous DAG-based approach in which each block is connected with its own ancestor. Nodes randomly communicate with each other about their known events. Famous blocks are computed by using \textit{see} and \text{strong see} relationship at each round. Block consensus is achieved with the quorum of more than 2/3 of the nodes.

%\subsubsection{RaiBlocks}
RaiBlocks~\cite{raiblock17} was proposed to improve high fees and slow transaction processing. Consensus is obtained through the balance weighted vote on conflicting transactions. Each participating node manages its local data history. Block generation is carried similarly as the anti-spam tool of PoW. The protocol requires verification of the entire history of transactions when a new block is added.

Phantom~\cite{PHANTOM08} is a PoW based permissionless protocol that generalizes Nakamoto’s blockchain to a DAG. A parameter $k$ is used to adjust the tolerance level of the protocol to blocks that were created concurrently. The adjustment can accommodate higher throughput; thus avoids the security-scalability trade-off as in Satoshi’s protocol. A greedy algorithm is used on the DAG to distinguish between blocks by honest nodes and the others. It allows a robust total order of the blocks that is eventually agreed upon by all honest nodes.

Like PHANTOM, the GHOSTDAG protocol selects a $k$-cluster, which induces a colouring of the blocks as Blues (blocks in the selected cluster) and Reds (blocks outside the cluster). GHOSTDAG finds a cluster using a greedy algorithm, rather than looking for the largest $k$-cluster.

Spectre~\cite{sompolinsky2016spectre} uses DAG in a PoW-based protocol to tolerate from attacks with up to 50\% of the computational power. The protocol gives a high throughput and fast confirmation time. Sprectre protocol satisfies weaker properties in which the order between any two transactions can be decided from the transactions by honest users; whilst conventionally the order must be decided by all non-corrupt nodes. 

Conflux~\cite{conflux18} is a DAG-based Nakamoto consensus protocol. It optimistically processes concurrent blocks without discarding any forks. The  protocol achieves consensus on a total order of the blocks, which is decided by all participants. Conflux can tolerate up to half of the network as malicious while the BFT-based approaches can only tolerate up to one third of malicious nodes.

Parsec~\cite{PARSEC18} proposes a consensus algorithm in a randomly synchronous BFT network. It has no leaders, no round robin, no PoW and reaches eventual consensus with probability one. Parsec can reach consensus quicker than Hashgraph~\cite{hashgraph16}. The algorithm reaches 1/3-BFT consensus with very weak synchrony assumptions. Messages are delivered with random delays, with a finite delay in average.

Blockmania~\cite{Blockmania18}
achieves consensus with several advantages over the traditional pBFT protocol. In Blockmania, nodes in a quorum only emit blocks linking to other blocks, irrespective of the consensus state machine. The resulting DAG of blocks is used to ensure consensus safety, finality and liveliness.
It reduces the communication complexity to $O(N^2)$ even in the worse case, as compared to pBFT's complexity of $O(N^4)$.

In this paper, our \stakedag\ protocol is different from the previous work. We propose a general model of DAG-based consensus protocols, which uses Proof of Stake for asynchronous permissionless BFT systems. \stakedag\ protocol is based on our previous DAG-based protocols\cite{lachesis01,fantom18} to achieve asynchronous non-deterministic pBFT. The new $S_\phi$ protocol, which is based on our ONLAY framework~\cite{onlay19}, uses graph layering to achieve scalable, reliable consensus in a leaderless aBFT DAG.

\subsection{Lamport timestamps}

Lamport~\cite{lamport1978time} defines the "happened before" relation between any pair of events in a distributed system of machines. The happened before relation, denoted by $\rightarrow$, is defined without using physical clocks to give a partial ordering of events in the system. The relation "$\rightarrow$" satisfies the following three conditions: (1) If $b$ and $b'$ are events in the same process, and $b$ comes before $b'$, then $b \rightarrow b'$. (2) If $b$ is the sending of a message by one process and $b'$ is the receipt of the same message by another process, then $b \rightarrow b'$. (3) If $b \rightarrow b'$ and $b' \rightarrow b''$ then $b \rightarrow b''$. 
Two distinct events $b$ and $b'$ are said to be concurrent if $b \nrightarrow b'$ and $b' \nrightarrow b$.

The happened-before relation can be viewed as a causality effect: that $b \rightarrow  b'$ implies event $b$ may causally affect event $b'$. Two events are concurrent if neither can causally affect the other.

Lamport introduces logical clocks which is a way of assigning a number to an event. A clock $C_i$ for each process $P_i$ is a function which assigns a number $C_i(b)$ to any event $b \in P_i$. The entire system of blocks is represented by the function $C$ which assigns to any event $b$ the number $C(b)$, where $C(b) = C_j(b)$ if $b$ is an event in process $P_j$.
The Clock Condition states that for any events $b$, $b'$: if $b \rightarrow b'$ then $C(b)$ $<$ $C(b')$. 

The clocks must satisfy two conditions. First, each process $P_i$ increments $C_i$ between any two successive events. Second, we require that each message $m$ contains a timestamp $T_m$, which equals the time at which the message was sent. Upon receiving a message timestamped $T_m$, a process must advance its clock to be later than $T_m$. 

%More precisely, we have the following rule: (a) If event $b$ is the send($m$) by process $P$, then the message $m$ contains a timestamp $T_m$= $C_i(b)$. (b) Upon receive($m$), process $P_i$ sets $C_i$ greater than or equal to its present value and greater than $T_m$.

Given any arbitrary total ordering $\prec$ of the processes, the total ordering $\Rightarrow$ is defined as follows: if $a$ is an event in process $P_i$ and $b$ is an event in process $P_j$, then $b \Rightarrow b'$ if and only if either (i) $C_i(b) < C_j(b')$ or (ii) $C(b)= Cj(b')$ and $P_i \prec P_j$. The Clock Condition implies that if $b \rightarrow b'$ then $b \Rightarrow b'$.

\subsection{Concurrent common knowledge}\label{se:cck}

In the Concurrent common knowledge (CCK) paper ~\cite{cck92}, they define a model to reason about the concurrent common knowledge in asynchronous, distributed systems. A system is composed of a set of processes that can communicate only by sending messages along a fixed set of channels. The network is not necessarily completely connected. The system is asynchronous in the sense that there is no global clock in the system, the relative speeds of processes are independent, and the delivery time of messages is finite but unbounded.

A local state of a process is denoted by $s^j_i$. Actions are state transformers; an action is a function from local states to local states. An action can be either: a send(m) action where m is a message, a receive(m) action, and an internal action.
A local history, $h_i$, of process $i$, is a (possibly infinite) sequence of alternating local states—beginning with a distinguished initial state—and actions. We write such a sequence as follows: 
$h_i = s_i^0 \xrightarrow{ \alpha_i^1 } s_i^1 \xrightarrow{\alpha_i^2} s_i^2 \xrightarrow{\alpha_i^3} ...$
The notation of $s^j_i$ ($\alpha^j_i$) refers to the $j$-th state (action) in process $i$'s local history
An event is a tuple $\langle s , \alpha, s' \rangle$ consisting of a state, an action, and a state.
The $j$th event in process $i$'s history is $e^j_i$ denoting $\langle s^{j-1}_i , \alpha^j_i, s^j_{i} \rangle$.

An asynchronous system consists of the following sets.
\begin{enumerate}
	\item A set $P$ = \{1,...,$N$\} of process identifiers, where $N$ is the total number of processes in the system.
	\item $A$ set $C$ $\subseteq$ \{($i$,$j$) s.t. $i,j \in P$\} of channels. The occurrence of $(i,j)$ in $C$ indicates that process $i$ can send messages to process $j$.
	\item A set $H_i$ of possible local histories for each process $i$ in $P$.
	\item A set $A$ of asynchronous runs. Each asynchronous run is a vector of local histories, one per process, indexed by process identifiers. Thus, we use the notation
	$a = \langle h_1,h_2,h_3,...h_N \rangle$.
	Constraints on the set $A$ are described throughout this section.
	\item A set $M$ of messages. A message is a triple $\langle i,j,B \rangle$ where $i \in P$ is the sender of the message, $j \in P$ is the message recipient, and $B$ is the body of the message. $B$ can be either a special value (e.g. a tag to denote a special-purpose message), or some proposition about the run (e.g. “$i$ has reset variable $X$ to zero”), or both. We assume, for ease of exposition only, that messages are unique.
\end{enumerate}

The set of channels $C$ and our assumptions about their behavior induce two constraints on the runs in $A$. First, $i$ cannot send a message to $j$ unless $(i,j)$ is a channel. Second, if the reception of a message $m$ is in the run, then the sending of $m$ must also be in that run; this implies that the network cannot introduce spurious messages or alter messages.

The CCK model of an asynchronous system does not mention time. Events are ordered based on Lamport's happened-before relation. They use Lamport’s theory to describe global states of an asynchronous system. A global state of run $a$ is an $n$-vector of prefixes of local histories of $a$, one prefix per process. Happened-before relation can be used to define a consistent global state, often termed a consistent cut, as follows. 

\begin{defn}[Consistent cut] A consistent cut of a run is any global state such that if $e^x_i \rightarrow e^y_j$ and $e^y_j$ is in the global state, then $e^x_i$ is also in the global state.
\end{defn} 

A message chain of an asynchronous run is a sequence of messages $m_1$, $m_2$, $m_3$, $\dots$, such that, for all $i$, $receive(m_i)$ $\rightarrow$  $send(m_{i+1})$. Consequently,
$send(m_1)$ $\rightarrow$ $receive(m_1)$ $\rightarrow$ $send(m_2)$ $\rightarrow$ $receive(m_2)$ $\rightarrow$ $send(m_3)$ $\dots$.

\section{Lachesis: Proof-Of-Stake DAG-based Protocol}\label{se:lachesis}

This section presents the key concepts of our new PoS+DAG consensus protocol, namely Lachesis.

In BFT systems, a synchronous approach utilizes a broadcast voting and asks each node to vote on the validity of each block. Instead, we aim for an asynchronous system where we leverage the concepts of distributed common knowledge and network broadcast to achieve a local view with high probability of being a consistent global view.

Each node in Lachesis protocol receives transactions from clients and then batches them into an event block. The new event block is then communicated with other nodes through asynchronous event transmission. During communication, nodes share their own blocks as well as the ones they received from other nodes. Consequently, this spreads all information through the network.
The process is asynchronous and thus it can  increase throughput near linearly as nodes enter the network.
 
\subsection{OPERA DAG}

In Lachesis protocol, each (participant) node, which is a server (machine) of the distributed system, can create messages, send messages to and receive messages from other nodes. The communication between nodes is asynchronous. Each node stores a local block DAG, so-called OPERA DAG, which is Directed Acyclic Graph (DAG) of event blocks. A block has some edges to its parent event blocks. A node can create an event block after the node communicates the current status of its OPERA DAG with its peers. 

\dfnn{OPERA DAG}{The OPERA DAG is a graph structure stored on each node. The OPERA DAG consists of event blocks and references between them as edges.}

OPERA DAG is a DAG graph $G$=($V$,$E$) consisting of $V$ vertices and $E$ edges. Each vertex $v_i \in V$ is an event block. An edge ($v_i$,$v_j$) $\in E$ refers to a hashing reference from $v_i$ to $v_j$; that is, $v_i \erefz v_j$. We extend with $G$=($V$,$E$,$\top$,$\bot$), where $\top$ is a pseudo vertex, called \emph{top}, which is the parent of all top event blocks, and 
$\bot$ is a pseudo vertex, called bottom, which is the child of all leaf event blocks.
With the pseudo vertices, we have $\bot$ happened-before all event blocks. Also all event blocks happened-before $\top$. That is, for all event $v_i$, $\bot \hbefore v_i$ and $v_i \hbefore \top$.

Figure~\ref{fig:operachain} shows an example of an OPERA DAG (DAG) constructed through the Lachesis protocol. Event blocks are represented by circles.

\begin{figure}[H] \centering
	\includegraphics[width=.8\columnwidth]{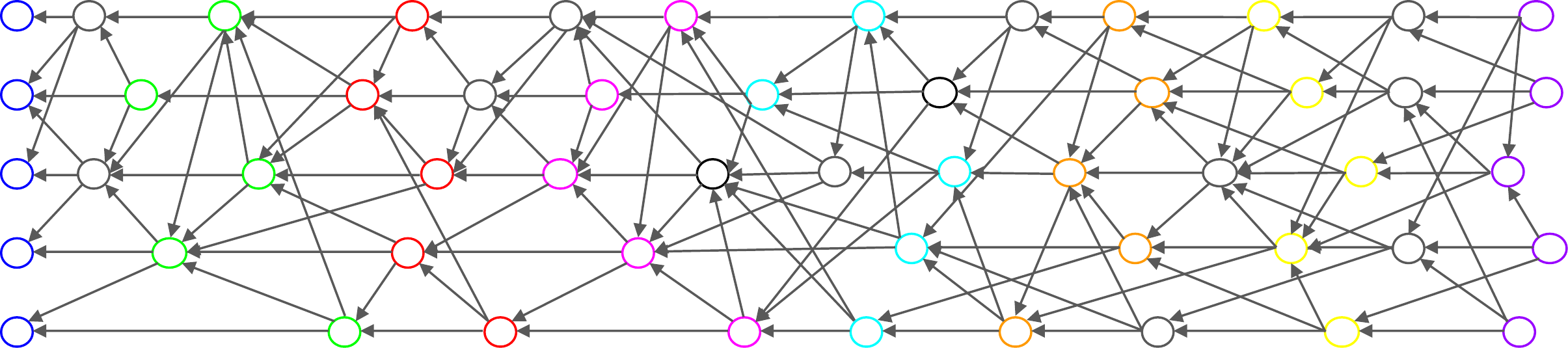}
	\caption{An Example of OPERA DAG}
	\label{fig:operachain}
\end{figure}

The local DAG is updated quickly as each node creates and synchronizes events with each other. 
For high speed transaction processing, we consider a model of DAG as \emph{DAG streams}.
Event blocks are assumed to arrive at very high speed, and they are asynchronous. 
Let $G$=($V$,$E$) be the current OPERA DAG and $G'$=($V'$,$E'$)
denote the \emph{diff graph}, which consists of the changes to $G$ at a time, either at event block creation or arrival. The vertex sets $V$ and $V'$ are disjoint; similar to the edge sets $E$ and $E'$. At each graph update, the updated OPERA DAG becomes $G_{new}$=($V \cup V'$, $E \cup E'$).

Each node uses a local view of OPERA DAG to identify Root, Clotho and Atropos vertices, and to determine topological ordering of the event blocks.

\subsection{PoS Definitions}

Inspired from the success of PoS~\cite{pass2017fruitchains,king2012ppcoin,vasin2014blackcoin,bentov2016,kwon2014tendermint,algorand16}, we introduce a general model of \stakedag\ protocols that are Proof of Stake DAG-based for trustless systems~\cite{stakedag}.
In this general model, the stakes of the participants are paramount to a trust indicator.

A stake-based consensus protocol consists of nodes that vary in their amount of stake. Based on our generic Lachesis protocol, any participant can join the network. Participants can increase their impact as well as their contributions over the Fantom network by increasing the stake amount (FTM tokens) they possess.
Validating power is based on the amount of FTM tokens a node possesses. The validating power is used for online validation of new event blocks to maintain the validity of blocks and the underlying DAG.

\subsubsection{Stake}
Each participant node of the system has an account.
The \emph{stake} of a participant is defined based on their account balance. Account balance is the number of tokens that was purchased, accumulated and/or delegated from other account.
Each participant has an amount of stake $w_i$, which is a non-negative integer.

Each participant with a positive balance can join as part of the system. A user has a stake of 1, whereas a validator has a stake of $w_i > 1$. 
The number of stakes is the number of tokens that they can prove that they possess. 
In our model, a participant, whose has a zero balance, cannot participate in the network, either for block creation nor block validation.

\subsubsection{Validating Power and Block Validation}

In the base Lachesis protocol, every node can submit new transactions, which are batched in a new event block, and it can communicate its new (own) event blocks with peers. Blocks are then validated by all nodes, which have the same validating power. 

Here, we present a general model that distinguishes Lachesis participants by their validating power.
In this model, every node $n_i$ in Lachesis has a stake value of $w_i$. This value is then used to determine their validating power to validate event blocks.

The base Lachesis protocol contains only nodes of the same weights; hence, it is equivalent to say that all nodes have a unit weight ($w_i$=1). In contrast, the full version of Lachesis protocol is more general in which nodes can have different validating power $w_i$.

In order to reach asynchronous DAG-based consensus, validators in Lachesis protocol are required to create event block (with or without transactions) to indicate that which block (and all of its ancestors) they have validated.

\subsubsection{Stake-based Validation}
In Lachesis, a node must validate its current block and the received ones before it attempts to create (or add) a new event block into its local DAG. A node must validate its (own) new event block before it communicates the new block to other nodes.

A root is an important block that is used to compute the final consensus of the event blocks of the DAG. Lachesis protocol take the stakes of participants into account to compute the consensus of blocks. The weight of a block is the sum of the validating power of the nodes whose roots can be reached from the block. 

\subsubsection{Validation Score}
The \emph{validation score} of a block is the total of all validating powers that the block has gained.
The validation score of a block $v_i$ $\in$ $G$ is denoted by $s(v_i)$.
If the validation score of a block is greater than 2/3 of the total validating power, the block becomes a root. The weight of a root $r_i$ is denoted by $w(r_i)$, which is the weight $w_j$ of the creator node $j$ of the $r_i$. 
When an event block becomes a root, its score is the validating power of the root's creator.

\subsection{S-OPERA DAG: A weighted DAG model}

Like \stakedag\ protocol, Lachesis protocol uses a DAG-based structure, namely \emph{S-OPERA DAG}, which is a weighted DAG $G$=($V$,$E$, $w$), where $V$ is the set of event blocks, $E$ is the set of edges between the event blocks, and $w$ is a weight mapping to associates a vertex $v_i$ with its weight $w(v_i)$. Each vertex (or event block) is associated with a weight (validation score). Event block has a validating score, which is the total weights of the roots reachable from it.
A root vertex has a \emph{weight} equal to the validating power of the root's creator node.

\begin{figure}[ht]
	\centering
	\includegraphics[width=0.4\linewidth]{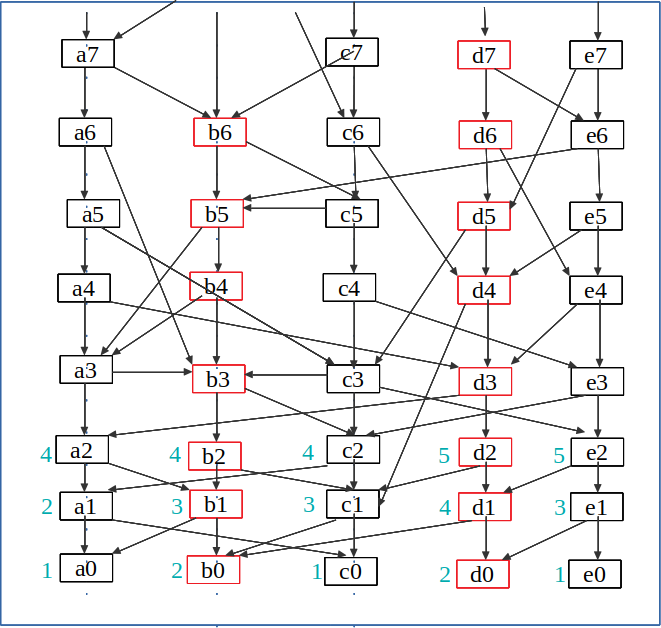}
	\caption{An example of an S-OPERA DAG in Lachesis. The validation scores of some selected blocks are shown.}
	\label{fig:stakedag-validatorscore}
\end{figure}
Figure \ref{fig:stakedag-validatorscore} depicts an S-OPERA DAG obtained from the DAG.
In this example, validators have validating power of 2, while users have validating power of 1. Blocks created by validators are highlighted in red color. First few event blocks are marked with their validating power. Leaf event blocks are special as each of them has a validation score equal to their creator's validating power.

\dfnn{S-OPERA DAG}{In Lachesis, a S-OPERA DAG is weighted DAG stored on each node.}

Let $\mathcal{W}$ be the total validating power of all nodes. 
For consensus, the algorithm examines whether an event block has a validation score of at least a quorum of $2\mathcal{W}/3$, which means the event block has been validated by more than two-thirds of total validating power in the S-OPERA DAG.

\subsection{Overall Framework}
\begin{figure}[ht]
	\centering
	\includegraphics[width=0.8\linewidth]{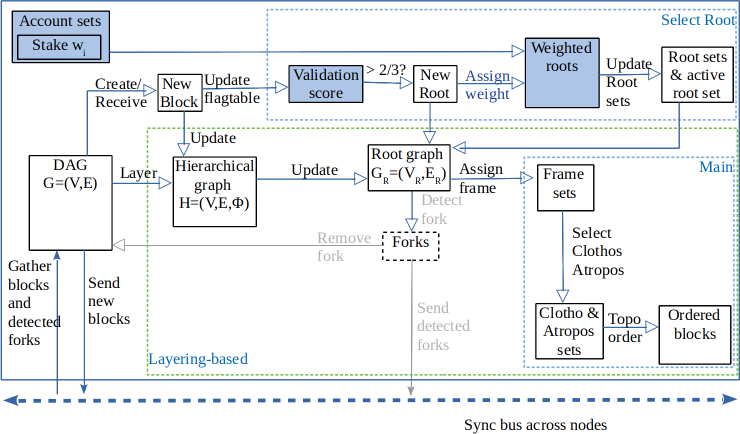}
	\caption{A General Framework of \stakedag\ Protocol}
	\label{fig:stakedagframework}
\end{figure}
Figure~\ref{fig:stakedagframework} shows a general framework of Lachesis protocol. Each node contains a DAG consisting of event blocks. In each node, the information of accounts and their stakes are stored. The main steps in a PoS DAG-based consensus protocol include (a) block creation, (b) computing validation score, (c) selecting roots and updating the root sets, (d) assigning weights to new roots, (e) decide frames, (f) decide Clothos/Atropos, and (f) order the final blocks.

\subsection{Quorum}
To become a root, an event block requires more than 2/3 of validating power (i.e., 2W/3) rather than 2n/3 in unweighted version of Lachesis.
For simplicity, we denote quorum $Q$ to be $2W/3 + 1$ for weighted DAG. In the original unweighted version, quorum $Q$ is $2n/3 +1$ for unweighted DAG.

\subsection{Event Block Creation}

Each event block has at most $k$ references to other event blocks using their hash values. One of the references must be a self-ref (or self-parent)  that references to an event block of the same node. The other references, which are also known as other-parent or other-ref references, are the top event blocks of other nodes.

Prior to creating a new event block, the node will first validate its current block and the selected top event block(s) from other node(s). 
With cryptographic hashing, an event block can only be created or added if the self-ref and other-ref event blocks exist.

Nodes with a higher stake value $w_i$ have a higher chance to create a new event block than other nodes with a lower stake.

\subsection{Event Block Structure}

An event (or an event block) is a vertex in the DAG, which stores the structure of consensus messages. Each event has a number of links to past events (parents). The links are the graph edges. Sometimes, "event" and "event block" can be used interchangeably, but they are different from "block", which means a finalized block in the blockchain.

Each event is a signed consensus message by a validator sent to the network, which guarantees the following that:
"The creator of this event has observed and validated the past events (and their parents), including the parents of this event. The event's creator has already validated the transactions list contained in this event."

The structure of an event block in Lachesis is as follows:

\begin{itemize}
	\item event.Epoch: epoch number (see epoch). Not less than 1.
	\item event.Seq: sequence number. Equal to self-parent's seq + 1, if no self-parent, then 1.
	\item event.Frame: frame number (see consensus). Not less than 1.
	\item event.Creator: ID of validator which created event.
	\item event.PrevEpochHash: hash of finalized state of previous epoch.
	\item event.Parents: list of parents (graph edges). May be empty. If Seq > 1, then first element is self-parent.
	\item event.GasPowerLeft: amount of not spent validator's gas power for each gas power window, after connection of this event.
	\item event.GasPowerUsed: amount of spent validator's gas power in this event.
	\item event.Lamport: Lamport timestamp. If parents list is not empty, then Lamport timestamp of an event v, denoted by LS(v) is max{LS(u) | u is a parent of v} + 1. Otherwise, LS(v) is 1
	\item event.CreationTime: a UnixNano timestamp that is specified by the creator of event. Cannot be lower than creation time of self-parent (if self-parent exists). Can be too-far-in-future, or too-far-in-past.
	\item event.MedianTime: a UnixNano timestamp, which is a weighted median of highest observed events (their CreationTime) from each validator. Events from the cheater validators are not counted. This timestamp is used to protect against "too-far-in-future" and "too-far-in-past".
	\item event.TxHash: Merkle tree root of event transactions.
	\item event.Transactions: list of originated transactions.
	\item event.Sig: ECDSA256 secp256k1 validator's signature in the R/S format (64 bytes).
	
\end{itemize}

Each event has two timestamps. Event's creationTime is the created timestamp  an assigned by the creator of event block.  Event's MedianTime is a weighted median of highest observed events (their CreationTime) from each validator. Events from the cheater validators aren't counted.

\subsection{Root Selection}\label{se:rootsel}

Root events get consensus when 2/3 validating power is reached.
When a new root is found, 'Assign weight' step will assign a weight to the root.
The root's weight is set to be equal to  the validating power $w_i$ of its creator.
The validation score of a root is unchanged.
Figure~\ref{fig:stakedagframework} depicts the process of selecting a new root in Lachesis. A node can create a new event block or receive a new block from other nodes.

To guarantee that root can be selected correctly even in case a fork exists in the DAG, we introduce $forklessCause$ concept.
The relation $forklessCause(A,B)$ denote that event $x$ is forkless caused by event $y$. It means, in the subgraph of event $x$, $x$ does not observe any forks from $y$’s creator and at least QUORUM non-cheater validators have observed event $y$.

An event is called a root, if it is \emph{forkless causes} by QUORUM roots of previous frame. The root starts a new frame. For every event, its frame number is no smaller than self-parent's frame number. The lowest possible frame number is 1. Relation $forklessCause$ is a stricter version of \texttt{happened-before} relation, i.e. if $y$ forkless causes $x$, then $y$ is happened before A. 
Yet, unlike \texttt{happened-before} relation, \texttt{forklessCause} is not transitive because it returns false if fork is observed. So if $x$ forkless causes $y$, and $y$ forkless causes $z$, it does not mean that $x$ forkless causes $z$. If there's no forks in the graph, then the \texttt{forklessCause} is always transitive.

\subsection{Clotho Selection}\label{se:Clotho-sel}

Next, we present our algorithm to select Clotho from roots. A Clotho is a root that is known by more than a quorum $Q$ of nodes, and there is another root that knows this information. In order for a root $r_i$ in frame $f_i$ to become a Clotho, $r_i$ must be reach by some root $r_{i+3}$ in frame $f_{i+3}$. This condition ensures an equivalence to a proof of practical BFT.

%When a root $r$ is decided in frame $f_{i+3}$, then the list of the decided roots in frame $f_i$ can be processed to find a Clotho.

\begin{algorithm}
	\caption{Clotho Selection}\label{al:selClotho}
	\begin{algorithmic}[1]
		\Procedure{DecideClotho}{}
		\For{$x$ range roots: lastDecidedFrame + 1 to $up$}
			\For{$y$ range roots: $x.frame$ + 1 to $up$}
				\State // y is a root which votes, x is a voting subject root
				\State round $\leftarrow$ y.frame - x.frame
				\If{ round == 1}
					\State // y votes positively if x forkless causes y
					\State y.vote[x] $\leftarrow$ forklessCause(x, y)
				\Else
					\State prevVoters $\leftarrow$ getRoots(y.frame-1) // contains at least QUORUM events, from definition of root
					
					\State yesVotes $\leftarrow$ 0
					\State noVotes $\leftarrow$ 0
					\For{prevRoot range prevRoots}
						\If{forklessCause(prevRoot, y) == TRUE} 
							\State // count number of positive and negative votes for x from roots which forkless cause y
							\If{ prevRoot.vote[x] == TRUE}
								\State yesVotes $\leftarrow$ yesVotes + prevVoter's STAKE
							\Else
								\State noVotes $\leftarrow$ noVotes + prevVoter's STAKE
							\EndIf
						\EndIf	
					\EndFor
					// y votes likes a majority of roots from previous frame which forkless cause y				
					\State y.vote[x] $\leftarrow$ (yesVotes - noVotes) $>=$ 0
					\If{yesVotes $>=$ QUORUM}
						\State x.candidate $\leftarrow$ TRUE // decided as a candidate for Atropos
						\State decidedRoots[x.validator] $\leftarrow$ x
						\State break
					\EndIf
					\If{noVotes $>=$ QUORUM}
						\State x.candidate $\leftarrow$ FALSE // decided as a non-candidate for Atropos
						\State decidedRoots[x.validator] $\leftarrow$ x
					 	\State break
					\EndIf
				\EndIf
			\EndFor
		\EndFor
		\EndProcedure
	\end{algorithmic}
\end{algorithm}

Algorithm~\ref{al:selClotho} shows a pseudo code to select Clothos from the current set of roots. For every root $x$ from last decided frame + 1, it then loops for every root $y$ from $x$.frame + 1, and executes the inner loop body. Coloring of the graph is performed such that each root is assigned with one of the colors: \texttt{IS-CLOTHO} (decided as candidate for Clotho), \texttt{IS-NOT-CLOTHO} (decided as not a candidate for Clotho), or a temporary state \texttt{UNDECIDED} (not decided). Once a root is assigned a color by a node, the same color is assigned to the root by all other honest nodes, unless more than 1/3W are Byzantine. This is the core property of Lachesis consensus algorithm.

\begin{itemize}
	\item If a root has received $w$ $\geq$ QUORUM votes for a color, then for any other root on next frame, its \texttt{prevRoots} (contains at least QUORUM events, from definition of root) will overlap with V, and resulting overlapping will contain more than 1/2 of the \texttt{prevVoters}. It iscrucial that QUORUM is $2/3W+1$, not just 2/3W. The condition implies that ALL the roots on next frame will vote for the same color unanimously.
	\item A malicious validator may have multiple roots on the same frame, as they may make some forks. The \texttt{forklessCause} relation and the condition that no more than $1/3W$ are Byzantine will guarantee that no more than one of these roots may be voted \texttt{YES}. The algorithm assigns a color only in round $>=$ 2, so \texttt{prevRoots} will be "filtered" by \texttt{forklessCause} relation, to prevent deciding differently due to forks.
	\item If a root has received at least one \texttt{YES} vote, then the root is guaranteed to be included in the graph. 
	\item A root event $x$ from an offline validator will be decided \texttt{NO} by all. Thus,  $x$ is considered as non-existent event and cannot be assigned \texttt{IS-CLOTHO} color, and hence it cannot become Atropos.
	\item Technically, the election may take many rounds. Practically, however, the election ends mostly in 2nd or 3rd round.
\end{itemize}

Unlike HashGraph's \texttt{decideFame} function in page 13~\cite{hashgraph16}, our Lachesis algorithm has no coin round and is completely deterministic.

A Clotho is an Atropos candidate that has not been assigned a timestamp.

\subsection{Atropos Selection}\label{se:Atropos-sel}

From the array of roots that are decided as candidate, there are several ways to sort these Atropos candidates. Atropos candidates can be sorted based on the Lamport timestamp, layering information, validator's stake, validator's id, and the Atropos's id (see our ONLAY paper~\cite{onlay19}).

Here, we present an algorithm that sorts these Atropos candidates using validator's stake and validator's id. Algorithm~\ref{al:acs} shows the pseudo code for Atropos selection. The algorithm sorts the decided roots based on their validator's information. Validators are sorted based on their stake amount first, and then by their Id. The algorithms requires a sorted list of validators. The sorting is $O(n.log(n))$, where $n$ is the number of roots to process.

\begin{algorithm} [H]
	\caption{Atropos Selection}\label{al:acs}
	\begin{algorithmic}[1]		
		\Function{DecideAtropos}{}
		\For{validator: range sorted validators}
			\State root = decidedRoots[validator]
			\If{root == NULL} // not decided 
				\State \Return nil // frame isn't decided yet
			\EndIf
			\If{root.candidate == TRUE}
				\State \Return root // found Atropos
			\EndIf
			\If{root.candidate == FALSE}
				\State continue
			\EndIf
		\EndFor
		\EndFunction		
	\end{algorithmic}
\end{algorithm}

Nodes in Lachesis reach consensus about Atropos selection and Atropos timestamp without additional communication (i.e., exchanging candidate time) with each other. Through the Lachesis protocol, the OPERA DAGs of all nodes are ``consistent``. This allows each node to know the candidate time of other nodes based on its OPERA DAG and reach a consensus agreement. Our proof is shown in Section~\ref{se:proof}.

Atropos event block, once decided, is final and non-modifiable. All event blocks can be in the subgraph of an Atropos guarantee finality.

\subsection{Atropos timestamp}

Validators are sorted by their stake amount and then by their ID. For roots from different validators, they are sorted based on the ordering of their validators.
Atropos is chosen as a first candidate root in the sorted order.
Notice that, an Atropos may be decided even though not all the roots on a frame are decided. Our Lachesis algorithm allows to decide frames much earlier when there's a lot of validators, and hence reduce TTF (time to finality).

\subsection{Main chain}

The \emph{Main-chain} is an append-only list of blocks that caches the final consensus ordering of the finalized Atropos blocks.
The local hashing chain is useful to improve path search to quickly determine the closest root to an event block. 
Each participant has an own copy of the Main chain and can search consensus position of its own event blocks from the nearest Atropos.
The chain provides quick access to the previous transaction history to efficiently process new coming event blocks.
After a new Atropos is determined, a topological ordering is computed over all event blocks which are yet ordered.
After the topological ordering is computed over all event block, Atropos blocks are determined and form the Main chain.

\subsection{Topological ordering of events}\label{se:toposort}

Every node has a physical clock, and each event block is assigned with a physical time (creation time) from the event's creator. 
Each event block is also assigned a Lamport timestamp. For consensus, Lachesis protocol relies on a logical clock (Lamport timestamp) for each node. 

We use \textit{Lamport timestamp} \cite{lamport1978time} to determine the time ordering between event blocks in a asynchronous distributed system.
The Lamport timestamps algorithm is as follows: (1) Each node increments its count value before creating an event block; (2) When sending a message, sender includes its count value. Receiver should consider the count sent in sender’s message is received and increments its count value, (3) If current counter is less than or equal to the received count value from another node, then the count value of the recipient is updated; (4) If current counter is greater than the received count value from another node, then the current count value is updated.

\subsection{Blocks}\label{se:blocks}

Next, we present our approach to ordering event blocks under Atropos event block that reaches finality. 

Once a new Atropos is decided, a new block will be created. The new block contains all the events in the Atropos's subgraph; it only includes events, which were yet included in the subgraphs of any previous Atroposes.
The events in the new block are ordered by Lamport time, then by event's hash. Ordering by Lamport time ensures that parents are ordered before their children. Our approach gives a deterministic final ordering of the events in the block.

Figure~\ref{fig:crust} illustrates the steps to make blocks from decided Atroposes. In the example, Clotho/Atropos vertices are colored red. For each Atropos vertex, we compute the subgraph under it. Atropos vertices can be selected by using of several criteria (see ONLAY paper~\cite{onlay19}). 
Subgraphs under Atropos vertices are shown in different colors. Atropos event blocks are processed in the order from lower layer to high layer. 

\begin{figure}[ht]
	\centering
	\includegraphics[width=0.3\linewidth]{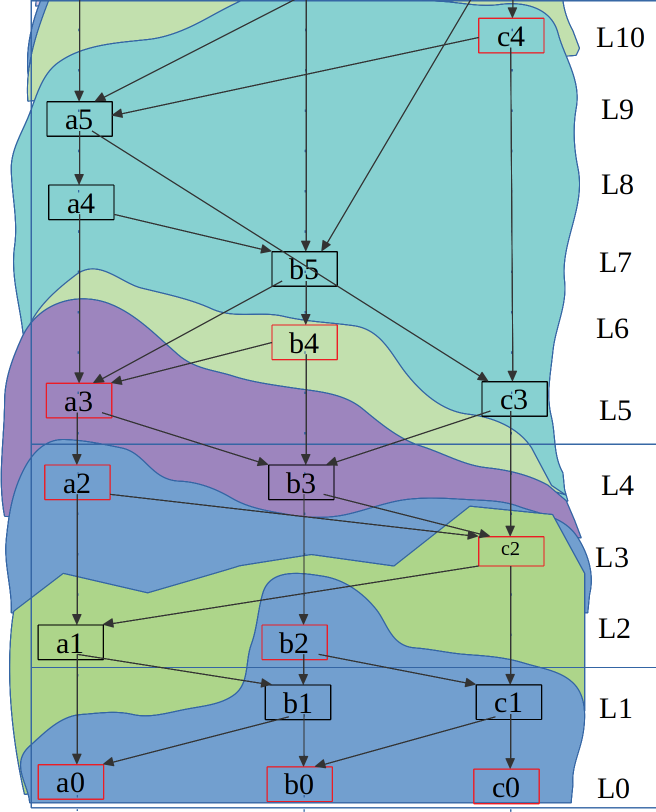}
	\caption{Peeling the subgraphs in topological sorting order}
	\label{fig:crust}
\end{figure}

We present an algorithm for topological ordering of Atropos event blocks (introduced in our ONLAY paper~\cite{onlay19}). Algorithm~\ref{algo:topoordering} first orders the Atropos candidates using \texttt{SortByLayer} function, which sorts vertices based on their layer, Lamport timestamp and then hash information of the event blocks. Second, we process every Atropos in the sorted order, and then compute the subgraph $G[a] = (V_a, E_a)$ under each Atropos. The set of vertices $V_u$ contains vertices from $V_a$ that are not yet processed in $U$.
We then apply $SorByLayer$ to order vertices in $V_u$ and then append the ordered result into the final ordered list $S$.

\begin{algorithm}[H]
	\caption{TopoSort}\label{algo:topoordering}
	\begin{algorithmic}[1]
		\State Require: OPERA DAG $H$, $\phi$, $\phi_F$.
		\Function{TopoSort}{$A$}
		\State $S \leftarrow$ empty list
		\State $U \leftarrow \emptyset$
		\State $Q \leftarrow$ SortByLayer($A$)
		\For{Atropos $a \in Q$}
		\State Compute subgraph $G[a] = (V_a, E_a)$
		\State $V_u \leftarrow V_a \setminus U$
		\State $T$ $\leftarrow$ SortByLayer($V_u$)
		\State Append $T$ into the end of ordered list $S$.
		\EndFor
		\EndFunction
		
		\Function{SortByLayer}{V}		
		\State Sort the vertices in $V_u$ by layer, Lamport timestamp and hash in that order.
		\EndFunction
	\end{algorithmic}
\end{algorithm}

With the final ordering computed by using above algorithms, we can assign the consensus time to the finally ordered event blocks.

Then the Atropos is assigned a timestamp from the \texttt{event.MedianTimestamp}, which is computed as a median of the event's timestamps sent by all the nodes.
The median time is used to guarantee aBFT for the consensus time for the Atropos to avoid incorrect time from malicious nodes.
Once Atropos consensus time is assigned, the Clotho becomes an Atropos and each node stores the hash value of Atropos and Atropos consensus time in Main-Chain. The Main-chain is used for time order between event blocks.

\subsection{Block timestamp}

In Lachesis, block timestamp is assigned using Atropos's median time.

\subsection{Peer selection}
In Lachesis protocol, we use a model in which each node only stores a subset of $n$ peers.
To create a new event block, a node first synchronizes the latest blocks from other peers and then it selects the top blocks of at most $k$ peers.
Lachesis protocol does not depend on how peer nodes are selected. 

There are multiple ways to select $k$ peers from the set of $n$ nodes. An simple approach can use random selection from the pool of $n$ nodes. 
In \onlay~\cite{onlay19}, we have tried a few peer selection algorithms as follows:
\begin{itemize}
	\item Random: randomly select a peer from $n$ peers;
	\item Least Used: select the least use peer(s).
	\item Most Used (MFU): select the most use peer(s).
	\item Fair: select a peer that aims for a balanced distribution.
	\item Smart: select a peer based on some other criteria, such as successful throughput rates, number of own events.
\end{itemize}

{Stake-based Peer Selection}: We also propose new peer selection algorithms, which utilize stakes to select the next peer. Each node has a mapping of the peer $i$ and the frequency $f_i$ showing how many times that peer was selected. New selection algorithms define some function $\alpha_i$ that take user stake  $w_i$ and frequency $f_i$ into account. We give a few examples of the new algorithms, described as follows:
\begin{itemize}
	\item Stake-based: select a random peer from $n$ peers with a probability proportional to their stakes $w_i$.
	\item Lowest: select the peer with the lowest value of $\alpha_i$.
	\item Highest: select the peer with the highest value of $\alpha_i$.
	\item Balanced: aim for a balanced distribution of selected peers of a node, based on the values $\alpha_i$.
\end{itemize}

There are other possible criteria can be used a peer selection algorithm, such as successful validation rates, total rewards, etc. A more complex approach is to define a cost model for the node selection, such as low communication cost, low network latency, high bandwidth and high successful transaction throughput.
In Lachesis, available gas and origination power are used.

\subsection{Dynamic participants}
Unlike some existing DAG-based approaches~\cite{hashgraph16}, our Lachesis protocol allows an arbitrary number of participants to dynamically join the system. 

OPERA DAG can still operate with new participants. 
Algorithms for selection of Roots, Clothos and Atroposes are flexible enough and not dependent on a fixed number of participants.

\subsection{Peer synchronization}

Algorithm~\ref{al:syncevents} shows a pseudo code to synchronize events between the nodes.

%main procedure
\begin{algorithm}[htb]
	\caption{EventSync}\label{al:syncevents}
	\begin{algorithmic}[1]
		\Procedure{sync-events()}{}		
		\State Node $n_1$ selects random peer to synchronize with
		\State $n_1$ gets local known events
		\State $n_1$ sends RPC sync request to peer
		\State $n_2$ receives RPC sync request
		\State $n_2$ does an graph-diff check on the known map	
		\State $n_2$ returns unknown events, and mapping of known events to $n_1$		
		\EndProcedure
	\end{algorithmic}
\end{algorithm}

The algorithm assumes that a node always needs the events in topological ordering (specifically in reference to the Lamport timestamps), an alternative would be to use an inverse bloom lookup table (IBLT) for completely potential randomized events.
Alternatively, one can simply use a fixed incrementing index to keep track of the top event for each node.

In Lachesis, when a node receives a coming (new) event block from a peer, it will perform several checks to make sure the event block has been signed by an authorized peer, conforms to data integrity, and has all parent blocks already added in the node's local DAG.

\subsection{Detecting Forks}\label{se:forkdetection}

\dfnn{Fork}{A pair of events ($v_x$, $v_y$) is a fork if $v_x$ and $v_y$ have the same creator, but neither is a self-ancestor of the other. Denoted by $v_x \efork v_y$.}

For example, let $v_z$ be an event in node $n_1$ and two child events $v_x$ and $v_y$ of $v_z$. if $v_x \eself v_z$, $v_y \eself v_z$, $v_x \not \eself v_y$, $v_y \not \eself v_z$, then ($v_x$, $v_y$) is a fork.
The fork relation is symmetric; that is $v_x \efork v_y$ iff $v_y \efork v_x$.

By definition, ($v_x$, $v_y$) is a fork if $cr(v_x)=cr(v_y)$, $v_x \not \eancestor v_y$ and $v_y \not \eancestor v_x$. Using Happened-Before, the second part means $v_x \not \rightarrow v_y$ and $v_y \not \rightarrow v_x$. By definition of concurrent, we get $v_x \concur v_y$.

\begin{lem}
	If there is a fork $v_x \efork  v_y$, then $v_x$ and $v_y$ cannot both be roots on honest nodes.
\end{lem}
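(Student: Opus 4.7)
The plan is to argue that a fork can only be produced by a Byzantine (dishonest) creator, so no fork pair can consist of two roots whose creator is an honest node. I will read ``roots on honest nodes'' as ``roots whose creator is an honest node'', which matches the usage elsewhere in the paper where an event lives ``on'' the node that signed it.

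First, I would unfold the fork definition. By the definition of \textbf{Fork} just above the lemma, $v_x \efork v_y$ implies $cr(v_x)=cr(v_y)$, and that neither $v_x \eselfancestor v_y$ nor $v_y \eselfancestor v_x$. Call this common creator $c$. Thus $c$ has signed at least two events that lie on two distinct branches of its own self-history.

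Second, I would invoke the protocol rule on event block structure given in the excerpt: by \textbf{event.Seq} and \textbf{event.Parents}, each new event of a node has its first parent equal to its self-parent (the immediately preceding event of the same node), with sequence number one greater. An honest node, by definition, follows this rule and signs at most one event for each self-parent. Consequently, the events created by an honest node form a single chain under $\eself$, and any two such events are comparable under $\eselfancestor$. In particular, if $c$ were honest, then either $v_x \eselfancestor v_y$ or $v_y \eselfancestor v_x$ (or $v_x = v_y$), directly contradicting the fork hypothesis. Hence $c$ is not honest, so neither $v_x$ nor $v_y$ is a root on an honest node, let alone both.

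The main obstacle, and the only real judgement call, is the interpretation of ``roots on honest nodes''. Under the reading above, the proof is essentially a definitional unwinding and the only fact needed from the rest of the paper is the self-parent rule for event creation. If the intended reading were instead ``roots in the local DAG of some honest node'', one would need a stake-overlap argument: each of $v_x,v_y$ would have to be \texttt{forklessCause}d by $Q=2W/3+1$ previous-frame roots, their supporting validator sets would intersect in more than $W/3$ of stake, and since at most $W/3$ of stake is Byzantine the intersection would contain an honest validator whose root $\eself$-descends to a single successor, contradicting support of both branches. I would only reach for that argument as a fallback; the definitional proof above is what I expect the authors intend.
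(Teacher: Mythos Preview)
Your primary reading of ``roots on honest nodes'' is not the paper's reading, and under the paper's reading your definitional argument does not establish the lemma. The paper's own proof begins by assuming ``$v_x$ is a root of $n_x$ and $v_y$ is a root of $n_y$, where $n_x$ and $n_y$ are honest nodes'': here $n_x$ and $n_y$ are the honest nodes in whose local OPERA DAGs $v_x$ and $v_y$ appear as roots, not the creators of those events. Indeed, since $cr(v_x)=cr(v_y)$ by the fork definition, both events are created by the \emph{same} node, so if ``on'' meant ``created by'' the lemma would be speaking of one node, not two. Your observation that the common creator must be Byzantine is correct but does not by itself prevent two distinct honest observers from each promoting one branch of the fork to root status in their respective local views.

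What the paper actually does is precisely your fallback: a quorum-overlap argument. If $v_x$ is a root it reaches more than $2W/3$ of validating power, and likewise $v_y$; the two sets overlap in more than $W/3$, and since fewer than $W/3$ are Byzantine the overlap contains an honest node $n_m$, which would not permit both branches. So your fallback sketch is the intended proof; you should make it the main argument and drop the definitional one.
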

Here, we show a proof by contradiction. Any honest node cannot accept a fork so $v_x$ and $v_y$ cannot be roots on the same honest node. Now we prove a more general case. Suppose that both $v_x$ is a root of $n_x$ and $v_y$ is root of $n_y$, where $n_x$ and $n_y$ are honest nodes. Since $v_x$ is a root, it reached events created by nodes having more than $2W/3$. Similarly, $v_y$ is a root, it reached events created by nodes of more than $2W/3$. Thus, there must be an overlap of more than $W$/3. Since we assume less than $W$/3 are from malicious nodes, so there must be at least one honest member in the overlap set. Let $n_m$ be such an honest member. Because $n_m$ is honest, $n_m$ does not allow the fork.

\subsection{Transaction confirmations}

Here are some steps for a transaction to reach finality in our system. 
\begin{itemize}
	\item 
	First, when user submits a transaction into a node, a successful submission receipt will be issued to the client as a confirmation of the submitted transaction. 
	\item Second, the node will batch the submitted transaction(s) into a new event block, and add it into its OPERA DAG. Then will broadcast the event block to all other nodes of the system. Peer nodes will update its own record confirming that the containing event block identifier is being processed.
	\item Third, when the event block is known by majority of the nodes (e.g., it becomes a Root event block), or being known by such a Root block, new status of the event block is updated.
	\item Fourth, our system will determine the condition at which a Root event block becomes a Clotho for being further acknowledged by a majority of the nodes. A confirmation is then sent to the client to indicate that the event block has come to the semi-final stage as a Clotho or being confirmed by a Clotho. After the Clotho stage, we will determine the consensus timestamp for the Clotho and its dependent event blocks. 
	\item Once an event block gets the final consensus timestamp, it is finalized and a final confirmation will be issued to the client that the transaction has been successfully finalized.
\end{itemize}

The above five steps are done automatically by \onlay. Steps (2), (3) and (4) are internal steps. Steps (1) and (5) are visible to the end users. Currently, the whole process will take 1-2 seconds from transaction submission to transaction confirmation (through the five steps). Once a block is confirmed, it will be assigned with a block id in the blockchain and that confirmation is final.

There are some cases that a submitted transaction can fail to reach finality. Examples include a transaction does not pass the validation, e.g., insufficient account balance, or violation of account rules. The other kind of failure is when the integrity of DAG structure and event blocks is not complied due to the existence of compromised or faulty nodes. In such unsuccessful cases, the event block's status is updated accordingly. Clients can always query the latest transaction status regardless of its success or failure.

\subsection{Checkpoint}
\stair\ uses the following procedure, which is similar to the Casper model~\cite{buterin2018}. 
Our approach can be summarized as follows:
\begin{enumerate}
	\item We can divide into checkpoints, each will take every 100th (some configurable number) frame. 
	\item Stakeholders can choose to make more deposits at each check point, if they want to become validators and earn more rewards. Validators can choose to exit, but cannot withdraw their deposits until three months later.
	\item A checkpoint is selected based on a consistent global history that is finalized with more than 2/3 of the validating power for the checkpoint. When a checkpoint is finalized, the transactions will not be reverted. 
	\item With asynchronous system model, validators are incentivized to coordinate with each other on which checkpoints the history should be updated.  Nodes gossip their latest local views to synchronize frames and accounts. Attackers may attempt double voting to target double spending attacks. Honest validators are incentivized to report such behaviors and burn the deposits of the attackers.
\end{enumerate}

Each frame in Lachesis is about 0.3-0.6s, and blocks are confirmed in 0.7-1.5s depending on the number of transactions put into the system. 
To have more deterministic checkpoints, we  divide into epochs, each of which lasts around 4 hours.

\section{Staking model}\label{se:staking}

This section presents our staking model for Opera network. We first give general definitions and variables of our staking model, and then present the model in details.

\subsubsection{Definitions and Variables}

Below are the general definitions and variables that are used in Opera network.

\textbf{General definitions}

\begin{longtable}{p{1cm} p{13cm}}
	$\mathcal{F}$ & denotes the network itself \\[2pt] 
	$SFC$          & "Special Fee Contract" -- managing the collection of transaction fees and the payment of all rewards \\[2pt] 
	$FTM$          & main network token \\[2pt] 
\end{longtable}

\textbf{Accounts}

\begin{longtable}{p{1cm} p{1.2cm} p{13cm}}
	$\mathbb{U}$  & & set of all participant accounts in the network \\[2pt] 
	$\mathbb{A}$  & $\subset \mathbb{U}$ & accounts with a positive FTM token balance \\[2pt] 
	$\mathbb{S}$  & $\subseteq \mathbb{A}$ & accounts that have staked for validation (some of which may not actually be validating nodes) \\[2pt] 
	$\mathbb{V}$  & $\subseteq \mathbb{S}$ & validating accounts, corresponding to the set of the network's validating nodes
\end{longtable}

A participant with an account having a positive FTM token balance, say $i \in \mathbb{A}$, can join the network (as a delegator or a validator).
But an account $i$ in $\mathbb{S}$ may not participate in the protocol yet. Those who join the network belong to the set $\mathbb{V}$.

\dfnn{Validating account}{A validating account has more than $U$ tokens}.

\textbf{Network parameters subject to on-chain governance decisions}

\begin{longtable}{p{1cm} p{1.2cm} p{13cm}}
	$F$  & $3.175e9$ & total supply of FTM tokens \\[2pt] 
	$\lambda$ & $7$ & period in days after which validator staking must be renewed, to ensure activity \\[2pt] 
	$\varepsilon$ & 1 & minimum number of tokens that can be staked by an account for any purpose \\[2pt] 
	$\phi$    & 30\% & SPV commission on transaction fees \\[2pt] 
	$\mu$     & 15\% & validator commission on delegated tokens
\end{longtable}

\textbf{Tokens held and staked}

Unless otherwise specified, any mention of \textit{tokens} refers to FTM tokens.

\dfnn{Token helding}{The token helding  $t_i$ of an account is number of FTM tokens held by account $i \in \mathbb{A}$.}

\begin{longtable}{p{1cm} p{1.2cm} p{13cm}}
	$t_i$        &                 & number of FTM tokens held by account $i \in \mathbb{A}$ \\[2pt] 
	$t_i^{[x]}$      & $> \varepsilon$ & transaction-staked tokens by account $i$\\[2pt] 
	$t_i^{[d]}(s)$   & $> \varepsilon$ & tokens delegated by account $i$ to account $s \in \mathbb{S}$ \\[2pt] 
	$t^{[d]}(s)$     &                 & total of tokens delegated to account $s \in \mathbb{S}$ \\[2pt] 
	$t_i^{[d]}$      &                 & total of tokens delegated by account $i$ to accounts in $\mathbb{S}$ \\[2pt] 
	$t_i^{[s]}$      &                 & validation-staked tokens by account $i$ \\[2pt] 
\end{longtable}

The sum of tokens staked or delegated by an account $i \in \mathbb{A}$ cannot exceed the amount of tokens held:
$
t_i^{[x]} + t_i^{[s]} + \sum_{s \in \mathbb{S}} t_i^{[d]}(s)  \leq t_i
$.
The total amount of tokens delegated to an account $s \in \mathbb{S}$ is: $
t^{[d]}(s) = \sum_{i \in \mathbb{A}} t_i^{[d]}(s)
$. 
The total amount of tokens delegated by an account $i \in \mathbb{A}$ is: $
t_i^{[d]} = \sum_{s \in \mathbb{S}} t_i^{[d]}(s)
$.

\subsubsection{Validating power}

We use a simple model of validating power, which is defined as the number of tokens held by an account.
The weight of an account $i \in \mathbb{A}$ is equal to its token holding $t_i$.

\subsubsection{Block consensus and Rewards}
\dfnn{Validation score} {Validation score of a block is the total validating power that a given block can achieve from the validators $v \in \mathbb{V}$.}

\dfnn{Quorum}{Validating threshold is defined by $2/3$ of the validating power that is needed to confirm an event block to reach consensus.}

Below are the variables that define the block rewards and their contributions for participants in Fantom network.

\begin{longtable}{p{1.2cm} p{2.5cm} p{11cm}}
	$Z$      & 996,341,176 & total available block rewards of $FTM$, for distribution by the $SPV$ during the first 1460 days after mainnet launch\\[2pt] 
	$F_s$    &   & $FTM$ tokens held by the $SPV$ \\[2pt] 
	$F_c$    & $F - F_s$ & total circulating supply \\[2pt] 	
\end{longtable}

Block rewards will be distributed over each epoch.

\subsubsection{Token Staking and Delegation}\label{se:stakingoverview}
Participants can chose to stake or delegate their FTM tokens. When staking or delegating, the validating power of a node is based on the number of FTM tokens held.
Three possible ways to stake are given as follows.

\dfnn{Transaction staking}{Participants can gain more stakes or tokens via the transaction staking.}
Transaction submitter can gain reward from the transaction fees of submitted transactions.
This style of staking helps increases transaction volume on the network. The more liquidity, the more transaction staking they can gain.

\dfnn{Validation staking}{By validating blocks, a participant can gain validation rewards. Honest participants can gain block rewards for  successfully validated blocks.}

Participants can achieve block rewards for blocks that they co-validated and gain transaction fees from transaction submitter for the successfully finalized transactions. The more stake they gain, the more validating power they will have and thus the more rewards they can receive as it is proportional to their validating power.

\dfnn{Validation delegation}{Validation delegation allows a participant to delegate all or part of their tokens to another participant(s). Delegating participants can gain a share of block rewards and transaction fees, based on the amount of delegated stake.}

Delegators can delegate their stake into a validator or into multiple validators. Participants with large amount of stake can delegate their stakes into multiple validators. Delegators earn rewards from their delegations.
Validators will not be able to spend delegated tokens, which will remain secured in the stakeholder’s own address. Validators will receive a fixed proportion of the validator fees attributable to delegators.

\dfnn{Validation performance}
Participants are rewarded for their staked or delegated amount. Delegators will be incentivized to choose validator nodes that have a high self-stake, i.e. are honest and high performing.
Delegators can delegate their tokens for a maximum period of days, after which they need to re-delegate. The requirements to delegate are minimal:
\begin{itemize}
	\item \textit{Minimum number of tokens per delegation}: 1
	\item \textit{Minimum lock period}: 1 day
	\item \textit{Maximum lock period}: 1 year
	\item \textit{Maximum number of delegations made by a user}: None
	\item \textit{Maximum number of delegated tokens  per validator}: 15 times of the validator's tokens.
\end{itemize}

\section{Opera Network}\label{se:network}

\subsection{Network Architecture}
Figure~\ref{fig:stairframework} depicts an overview of Opera network, as introduced in our \stair\ framework~\cite{stairdag}. 

A Validator node consists of three components: state machine, consensus and networking.  A application can communicate to a node via CLI. Opera network supports auditing by permitting participants to join  in post-validation mode. An observer (or Monitor) node consists of state machine, post validation component and networking component.

\begin{figure}[ht]
	\centering
	\includegraphics[width=0.55\linewidth]{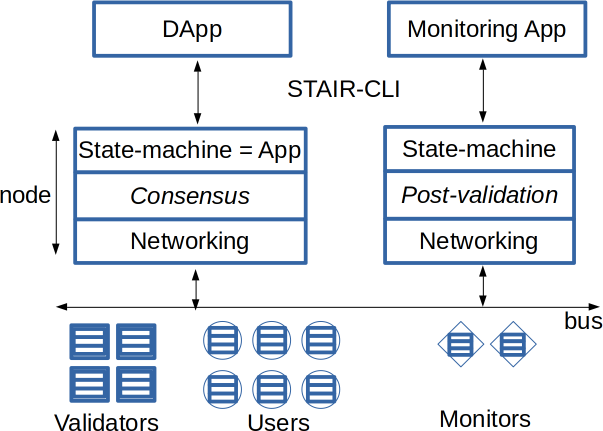}
	\caption{An overview of Opera network}
	\label{fig:stairframework}
\end{figure}

\subsection{Nodes and Roles}
For an asynchronous DAG-based system, Lachesis supports three types of participants: users,  validators and monitors.

Each validating node can create new event blocks. Generation of a new even block indicates that the new block and all of its ancestors have been validated by the creator node of that new event block.

Users have less than $U$ FTM tokens, where $U$ is a predefined threshold value. The current threshold value, which is voted by the community is $U$ = 1,000,0000 FTMs. Each validator must have more than $U$ tokens. In Lachesis, users can only delegate their stake to validator nodes. User nodes cannot run as a validating node, which can create and validate event blocks.

Besides validating nodes, Opera network allows another type of participants --- observers, who can join the network, but do not act as validating nodes. Observers are not required to have any stake to join the network and thus they cannot generate nor perform online voting (online validation). But observers can do post-validation. This is to encourage the community to join the checks-and-balances of our public network.

\subsection{Boot Node Service}
A new node that joins Opera network for the first time will connect to one of our boot nodes, which provides a discovery service to connect with other peers in the network. Currently, there are five boot nodes in Opera network and the boot nodes are distributed in different zones. Boot nodes serve RPC requests from peers and help new nodes to discover peers in the network.

\subsection{API Service}
Since the first launch of Opera mainnet, the network has served lots of dApp projects and users. We have scaled our infrastructure on EC2 AWS accordingly to cope with a huge amount of requests.

There are currently about 30 servers used for RPC API. The RPC API load is about 250,000 requests per second at peaks. Each server can serve about 9000 requests per second. They are run in 3 hives that are independently serviced by their own balancing gateways and the gateways are switched by round robin DNS setup in 60s roundtrip time. 
Also, we have additional 3 public RPC endpoints for API calls that are use for backup service.

Our infrastructure include other servers to serve explorer and wallet service. We also maintain a different set of servers for cross-chain integration.

Apart from the servers run by the Foundation, there are many more API servers operated by our community members, who are active contributors to the Opera network.

\subsection{Mainnet}
There are currently 46 validator nodes, 5 boot nodes and more than 50 servers running API service.
Our Mainnet has served more than 35 million transactions, and have produced more than 12.8 million blocks. There are more than 300k accounts.

\subsection{Testnet}
There are currently 7 validator nodes and 2 API nodes. The testnet has been used by developers to test and deploy new smart contracts and to prepare for new integration prior into launch it on our mainnet.

\subsection{Implementation}

We have implemented our Lachesis protocol in GoLang 1.15+.
Our implementation is available at \url{https://github.com/Fantom-foundation/go-opera}. Our previous repository is available at \url{https://github.com/Fantom-foundation/go-lachesis}.

Opera network is fully EVM compatible and smart contracts can be deployed and run on our Opera network.
We implemented Lachesis on top of Geth~\footnote{https://github.com/ethereum/go-ethereum}.  We provide Web3 API calls support through our API Service layer. 
Decentralized applications (dApps) can query and send transactions into the network for execution. Documentation and guides are available at \url{https://docs.fantom.foundation/}.

Figure~\ref{fig:lachesis-architecture} shows the overall architecture of a Lachesis node.
\begin{figure}
	\centering
	\includegraphics[width=0.8\linewidth]{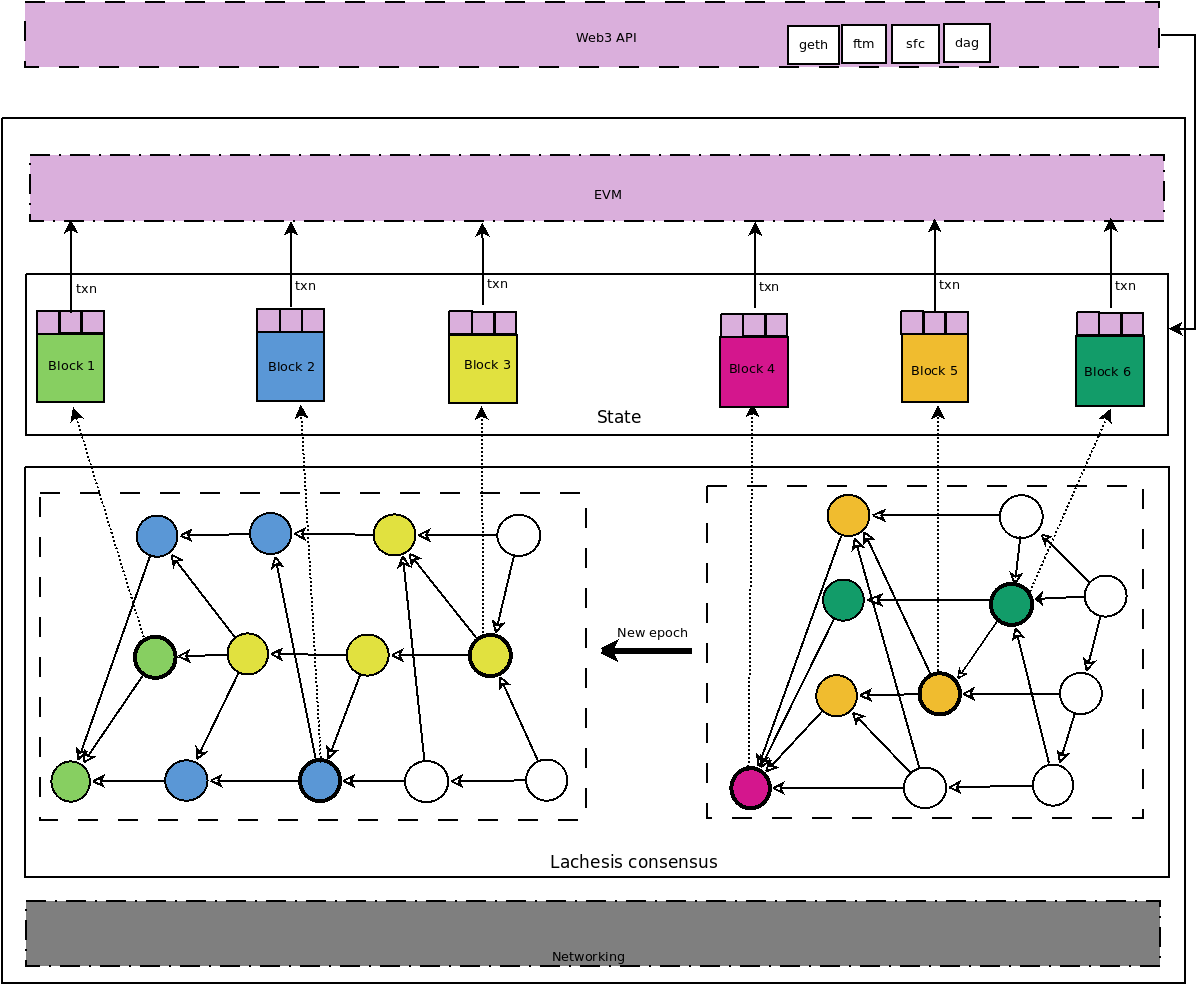}
	\caption{Architecture of a Lachesis node}
	\label{fig:lachesis-architecture}
\end{figure}

{\bf Transaction Fee}:
Like many existing platforms, Opera network leverages transaction fee, which is paramount to prevent transaction flooding. It will cost attackers when they attempt to send a lot of spamming transactions to the network.

Clients can send transactions to Opera network. Each transaction requires a small amount of fee to compensate for the cost to execute the transaction on Validator nodes. Transaction can range from simple account transfer, new smart contract deployment or a complex smart contract call. Transaction fee is proportional to the mount of op codes included in the transaction.

In Ethereum and few others, transaction fees are calculated in a separate gas token rather than the native token of the network. Instead, Fantom's Opera network calculates transaction fees in native FTM. This design removes the burden of acquiring two types of token for users and developers to be able to use and run dApps on our Opera mainnet.

{\bf Transaction Pool}: 
Transactions once submitted will be appended into a transaction pool of a node. 
The transaction pool is a buffer to receive incoming transactions sent from peers.
Similar to Ethereum's txpool implementation, transactions are placed in a priority queue in which transactions with higher gas will have more priority. 

Each node then pulls transactions from the pool and batches them into new event block.
The number of new event blocks a node can create is proportional to the validating power it has.

\subsection{Performance and Statistics}

For experiments and benchmarking, we set up a private testnet with 7 nodes on EC2 AWS. Each node is running m2.medium instance, which is 2vCPU with 3.3 GHz each, 4GB RAM and 200GB SSD storage.
We also experimented using a local machine with the following specs (CPU: AMD Ryzen 7 2700, Memory: 2x DDR4/16G 2.6 Ghz, SSD storage).

\subsubsection{Normal emission rate and gas allocation}

We ran experiments with the private testnet with 7 nodes. The experiment was run using normal emission rate (200ms) and normal gas allocation rules. We ran the local machine to sync with the testnet.

On the local machine, the maximum raw TPS is 11000, and maximum syncing TPS is ~4000-10000 TPS.
The total number of transactions executed is 14230, the total of events emitted is 109, and there are 108 finalized blocks.
The peak TPS for P2P syncing on a local non-validator node (16 CPUs, SSD) is 3705.

Our results on a validator node show that the peak TPS for P2P syncing is 2760.
Maximum syncing TPS is ~3000-7000 on testnet server, whereas maximum network TPS is 500 due to txpool bottleneck on testnet servers.

\subsubsection{High latency nodes}
We also ran an experiment where 1/3 of the nodes in the network were lagged. Additional latency of 300-700ms was added into 3 validators, whereas normal latency was used for others.

\begin{table}[ht]
\centering
\begin{tabular}{ |l |c |c|}
\hline
Test duration (seconds)	& 485 & 793 \\
Blocks finalized &	1239 &	605 \\
Transactions executed &	24283 &	39606 \\
Events emitted	& 10930 &	5959 \\
Average TTF (seconds) & 0.92 &	4.64 \\
Average TPS	& 50.06 &	49.94 \\
\hline
\end{tabular}
\caption{Statistics of with 3 nodes of high latency}
\label{fig:highlatency-ex}
\end{table}
Table~\ref{fig:highlatency-ex} shows that the added latency into 3 nodes can increase TTF by 4-fold, but TPS is slightly reduced.

\subsubsection{Emission rates}

We ran experiments in that each node will generate 1428 transactions per second and thus the whole network of 7 test servers generated 10000 transactions per second. Each generated transaction has around 100 bytes, like a simple FTM transfer.
To benchmark the consensus, this experiment disabled transaction execution, so it only counted the transaction creation, consensus messages, and finalized transactions and blocks.

\begin{figure}[ht]
	\centering
	\subfloat[tps]{\includegraphics[width=0.45\linewidth]{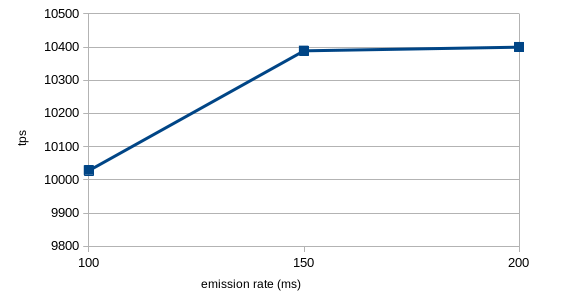}}
	\qquad
	\subfloat[ttf]{\includegraphics[width=0.45\linewidth]{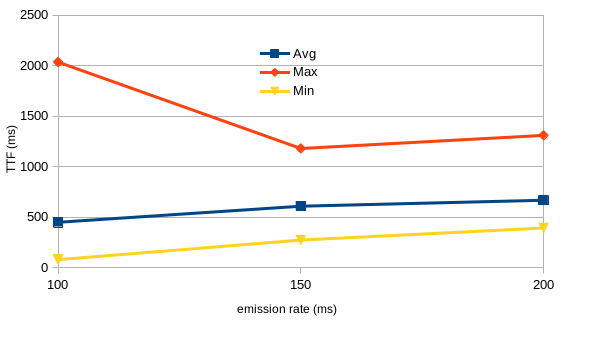}}
	\caption{TPS and TTF measures on a testnet with 7 nodes}
	\label{fig:7node-exp}
\end{figure}

Figure~\ref{fig:7node-exp} shows the statistics of block TPS and block TTF when 10000 transactions are submitted into the network per second. We experimented with different emission rates (100ms, 150ms and 200ms), which is the amount of time each new event created per node. As depicted in the figure, when increase emission rate from 100ms to 200ms, the tps increased and the average TTF also increased.

\section{Discussions}\label{se:discuss}

\subsection{Comparison with Existing Approaches}

Compared with existing PoS protocols ~\cite{pass2017fruitchains,king2012ppcoin,vasin2014blackcoin,bentov2016,kwon2014tendermint,algorand16}, our Lachesis protocol is built on top of DAG and it is more deterministic and more secure.

Compared with existing DAG-based approaches ~\cite{dagcoin15, sompolinsky2016spectre, byteball16, hashgraph16, tangle17, PHANTOM08, PARSEC18, conflux18}, Lachesis algorithms are quite different. Among them, Hashgraph~\cite{hashgraph16} is the closest to ours. 
However, our Lachesis algorithm is quite different from Hashgraph's algorithm in many ways:
\begin{itemize}
	\item Lachesis algorithm for finding roots and Atropos is deterministic, whereas Hashgraph uses coin round to decide 'famous' event blocks. 
	\item Our Lachesis leverages the notion of Happened-before relation, Lamport timestamp, and graph layering to reach consensus on the final ordering for Atropos blocks. 
	\item Our algorithm to create blocks (i.e., finalized blocks of the blockchains) is unique. New block is created for every newly decided Atropos, whereas Hashgraph tends to process with multiple Famous event blocks at once.
	\item The event blocks under Atropos are sorted based on Lamport timestamp and graph layering information. On the contrary, Hashgraph uses a heuristics, so-called \emph{findOrder} function described in their paper, to compute the ordering of events as well as their timestamps. 
	\item Lachesis protocol is permissionless aBFT, whereas it is unclear if their algorithm is truly BFT.
	\item Fantom's Opera network is a permissionless (public) network and our Lachesis protocol has run successfully. In contrast, Hashgraph aims for a permissioned network.
	\item We give a detailed model of our Proof-of-Stake model in StakeDag and StairDag framework~\cite{stakedag, stairdag}, whereas PoS is briefly mentioned in their Hashgraph paper~\cite{hashgraph16}.
	\item We also provide a formal semantics and proof for our Lachesis protocol using common concurrent knowledge~\cite{cck92}.
\end{itemize}

\subsection{Protocol Fairness and Security}

There has been extensive research in the protocol fairness and security aspects of existing PoW and PoS blockchains (see surveys~\cite{sheikh2018proof, panarello2018survey}). 
Here, we highlight the benefits of our proposed PoS+DAG approach.

\subsubsection{Protocol Fairness}
Regarding the fairness, PoW protocol is fair because a miner with $p_i$ fraction of the total computational power can win the reward and create a block with the probability $p_i$.
PoS protocol is fair given that an individual node, who has $w_i$ fraction of the total stake or coins, can a new block with $w_i$ probability.
However, in PoS systems, initial holders of coins tend to keep their coins in their balance in order to gain more rewards.

Our Lachesis protocol is fair since every node has an equal chance to create an event block. Nodes in Lachesis protocol can enter the network without an expensive hardware like what is required in PoW. Further, any node in Lachesis protocol can create a new event block with a stake-based probability, like the block creation in other PoS blockchains.

Like a PoS blockchain system, it is a possible concern that the initial holders of coins will not have an incentive to release their coins to third parties, as the coin balance directly contributes to their wealth. 
Unlike existing PoS protocols, each node in Lachesis protocol are required to validate parent event blocks before ot can create a new block. Thus, the economic rewards a node earns through event creation is, in fact, to compensate for their contribution to the onchain validation of past event blocks and it's new event block.

\begin{table}[h]
	\centering
	\begin{tabular}{|l|c|c|c|}
		\hline
		Criterion & PoW & PoS & Lachesis \\\hline
		Block creation probability & $p_i /P$ & $w_i /W $ & $w_i/W$ \\\hline
		Validation probability & $1/n$ & $w_i/W$ &  $1/n$ \\\hline
		Validation reward & $p_i /P$ & $w_i /W$ & $w_i /W$ \\\hline
		Txn Reward & $p_i /P$ & $w_i /W$ & $w_i /W$ \\\hline
		Performance reward & - & - & (+$1$ Saga point) \\
		\hline
	\end{tabular}
	\vspace{5pt}
	\caption{Comparison of PoW, PoS and Lachesis protocol}\label{tab:comp}
\end{table}

Table~\ref{tab:comp} gives a comparison of PoW, PoS and Lachesis protocols.
Let $p_i$ denote the computation power of a node and $P$ denote the total computation power of the network. Let $w_i$ be the stake of a node and $W$ denote the total stake of the whole network. Let $\alpha_i$  denote the number of Saga points a node has been rewarded from successful validations of finalized blocks.

Remarkably, our Lachesis protocol is more intuitive because our reward model used in stake-based validation can lead to a more reliable and sustainable network.

\subsubsection{Security}
Regarding security, Lachesis has less vulnerabilities than PoW, PoS and DPoS.
As for a guideline, Table~\ref{tab:vulnerability} shows a comparison between existing PoW, PoS, DPoS and our Lachesis protocol, with respect to the effects of the common types of attack on previous protocols. Note that, existing PoS and DPos approaches have addressed one or some of the known vulnerabilities in one way or the other. It is beyond the scope of this paper to give details of such approaches.

\begin{table}[ht]
	\centering
	\begin{tabular}{|l|c|c|c|c|}
		\hline
		Attack type & PoW & PoS & DPoS & Lachesis (PoS+DAG)\\
		\hline
		Selfish mining & ++ & - & - & - \\
		Denial of service & ++ & + & + & + \\
		Sybil attack & ++ & + & + & - \\
		Precomputing attack & - & + & - & - \\
		Short range attack (e.g., bribe) & - & + & - & - \\
		Long range attack  & - & + & + & - \\
		Coin age accumulation attack & - & maybe & - & - \\
		Bribe attack & + & ++ & + & - \\
		Grinding attack & - & + & + & - \\
		Nothing at stake attack & - & + & + & - \\
		Sabotage & - & + & + & + \\
		Double-Spending & - & + & + & - \\
		\hline
	\end{tabular}
	\vspace{5pt}
	\caption{Comparison of Vulnerability Between PoW, PoS, DPoS and Lachesis Protols}\label{tab:vulnerability}
\end{table}

%{\bf PoW vulnerabilities}
PoW-based systems are facing \emph{selfish mining attack}, in which an attacker selectively reveals mined blocks in an attempt to waste computational resources of honest miners.

Both PoW and PoS share some common vulnerabilities. \emph{DoS attack} and \emph{Sybil attack} are shared vulnerabilities, but PoW are found more vulnerable. A DoS attack disrupts the network by flooding the nodes. In a Sybil attack, the attacker creates multiple fake nodes to disrupt the network. Another shared vulnerable is \emph{Bribe attack}. In bribing, the attacker performs a spending transaction, and at the same time builds an alternative chain secretly, based on the block prior to the one containing the transaction. After the transaction gains the necessary number of confirmations, the attacker publishes his/her chain as the new valid blockchain, and the transaction is reversed. PoS is more vulnerable because a PoS Bribe attack costs 50x lower than PoW Bribe attack.

%{\bf PoS vulnerabilities}
PoS has encountered issues that were not present in PoW-based blockchains. These issues are: (1) \emph{Grinding attack:} malicious nodes can play their bias in the election process to gain more rewards or to double spend their money: (2) \emph{Nothing at stake attack:} A malicious node can mine on an alternative chain in PoS at no cost, whereas it would lose CPU time if working on an alternative chain in PoW.

There are possible attacks that are only encountered in PoS. Both types of attack can induce conflicting finalized checkpoints that will require offline coordination by honest users.

\emph{Double-Spending}
An attacker (a) acquires $2W/3$ of stakes; (b) submits a transaction to spend some amount and then votes to finalize that includes the transactions; (c) sends another transaction to double-spends; (d) gets caught and his stakes are burned as honest validators have incentives to report such a misbehavior. 
In another scenario, an attacker acquires (a) $W/3+\epsilon$ to attempt for an attack and suppose Blue validators own $W/3-\epsilon/2$, and Red validators own the rest $W/3-\epsilon/2$. The attacker can (b) vote for the transaction with Blue validators, and then(c) vote for the conflicting transaction with the Red validations. Then both transactions will be finalized because they have $2W/3 +\epsilon/2$ votes. Blue and Red validators may later see the finalized checkpoint, approve the transaction, but only one of them will get paid eventually.

\emph{Sabotage (going offline)}  An attacker owning $W/3 +\epsilon$ of the stakes can appear offline by not voting and hence checkpoints and transactions cannot be finalized. Users are expected to coordinate outside of the network to censor the malicious validators.

{\bf Attack cost in PoS versus PoW} It will cost more for an attack in PoS blockchain due to the scarcity of the native token than in a PoW blockchain.
In order to gain more stake for an attack in PoS, it will cost a lot for an outside attacker. 
S/he will need $2W/3$ (or $W/3$ for certain attacks) tokens, where $W$ is the total number of tokens, regardless of the token price. Acquiring more tokens will definitely increase its price, leading to a massive cost. Another challenge is that all the tokens of a detected attempt will be burned.
In contrast, PoW has no mechanism nor enforcement to prevent an attacker from reattempting another attack. An attacker may purchase or rent the hash power again for his next attempts.

Like PoS, our Lachesis protocol can effectively prevent potential attacks as attackers will need to acquire $2W/3$ tokens (or at least $W/3$ for certain attacks) to influence the validation process. Any attempt that is detected by peers will void the attacker's deposit.

\subsection{Response to Attacks}\label{se:ra}

Like other decentralized blockchain technologies, Fantom platform may also face potential attacks by attackers.
Here, we present several possible attack scenarios that are commonly studied in previous work, and we show how Lachesis protocol can be used to prevent such attacks.

{\bf Transaction flooding}:
A malicious participant may run a large number of valid transactions from their accounts with the purpose of overloading the network. In order to prevent such attempts, our Fantom platform has applied a minimal transaction fee, which is still reasonable for normal users, but will cost the attackers a significant amount if they send a large number of transactions. 

{\bf Parasite chain attack}:
In some blockchains, malicious node can make a parasite chain with an attempt to make a malicious event block. In Lachesis protocol, when a consensus is reached, finalized event block is verified before it is added into the Main chain. 

Our Lachesis protocol is 1/3-BFT, which requires less than one-third of nodes are malicious. The malicious nodes may create a parasite chain. As root event blocks are known by nodes with more than 2W/3 validating power, a parasite chain can only be shared between malicious nodes, which are accounted for less than one-third of participating nodes. Nodes with a parasite chain are unable to generate roots and they will be detected by the Atropos blocks.

{\bf Double spending}:
A double spending attack is when a malicious actor attempts to spend their funds twice. 

Let us consider an example of double spending. Entity $A$ has 10 tokens, but it sends 10 tokens to $B$ via node $n_A$ and at the same time it also sends 10 tokens to $C$ via node $n_C$. Both node $n_A$ and node $n_C$ agree that the transaction is valid, since $A$ has the funds to send to $B$ (according to $n_A$) and $C$ (according to $n_C$).
In case one or both of $n_A$ and $n_C$ are malicious, they can create a fork $x$ and $y$.

Lachesis protocol can structurally detect the fork by all honest nodes at some Atropos block. 
Specially, let us consider the case an attacker in Lachesis protocol with $W/3 + \epsilon$. Assume the attacker can manage to achieve a root block $r_b$ with Blue validators and a conflict root block $r_r$ with Red validators. However, in order for either of the two event blocks to be finalized (becoming Clotho and then Atropos), each of root blocks need to be confirmed by roots of next levels. Since peers always propagate new event blocks, an attacker cannot stop the Blue and Red validators (honest) to sent and receive event blocks to each other. Therefore, there will exist an honest validator from Blue or Red groups, who detects conflicting event blocks in the next level roots aka Atroposes. Because honest validators are incentivized to detect and prevent these misbehavior, the attacker will get caught and his stakes are burned.

{\bf Sabotage attack}:
Similarly, for Sabotage attack, the attacker may refuse to vote for a while. 
In Lachesis, honest validators will find out the absence of those high stake validators after some time period.
Validators what are offline for a while will be pruned from the network. After that, those pruned validators will not be counted and the network will function like normal with the rest of nodes.

{\bf Long-range attack}:
In some blockchains such as Bitcoin~\cite{bitcoin08}, an adversary can create another chain. If this chain is longer than the original, the network will accept the longer chain because the longer chain has had more work (or stake) involved in its creation.

In Lachesis, this long-range attach is not possible. Adding a new block into the block chain requires an agreement of $2n/3$ of nodes (or $2W/3$ in our PoS model). To accomplish a long-range attack, attackers would first need to create more than $2n/3$ malicious nodes (or gain validating power of $2W/3$ in our PoS model) to create the new chain.

{\bf Bribery attack}:
An adversary could bribe nodes to validate conflicting transactions. Since $2n/3$ participating nodes (or nodes with a total more than $2W/3$) are required, this would require the adversary to bribe more than $n/3$ of all nodes (or $W/3$) to begin a bribery attack.

{\bf Denial of Service}:
Lachesis protocol is leaderless requiring $2n/3$ participation. An adversary would have to attack more than $n/3$ (or $W/3$) validators to be able to successfully mount a DDoS attack.

{\bf Sybil}:
Each participating node must stake a minimum amount of FTM tokens to participate in the network. But staking $2/3$ of the total stake would be prohibitively expensive.

\subsection{Choices of Stake Decentralization}

This section presents discussions and benefits of our proposed DAG-based PoS approach.

\subsubsection{Low staked validators}
To have more validating nodes, \stair\ framework introduces a model which allows Users with small of tokens can still run a validating node. Details are given in the paper~\cite{stairdag}. Here, we give a summary. 

There are three general types of nodes that we are considered in our new protocol. They are described as follows.
\begin{itemize}
	\item{\bf{Validators}} {Validator node can create and validate event blocks. They can submit new transactions via their new event blocks. To be a validator, a node needs to have more than $U$ = 1,000,000 FTM tokens. The validating power of a Validator node is computed by $w_i = U \times \lfloor \frac{t_i}{U} \rfloor$.
	}
	\item{\bf{Users}} {User node can create and validate event blocks. They can submit new transactions via their new event blocks. To be a user, a node needs to have more than 1 FTM token. The number of tokens of a User is from 1 to $U$-1 = 999,999 FTM tokens. All users have the same validating power of $w_i$ = 1.
	}	
	\item{\bf{Observers:} } {Observer node can retrieve even blocks and do post-validation. Observers cannot perform onchain voting (validation). Observers have zero FTM tokens and thus zero validating power (e.g., $w_i$ = 0).
	}
\end{itemize}

The value of $U$ is system configurable, which is determined and updated by the community. Initially, $U$ can be set to a high value, but can be reduced later on as the token value will increase. Note that, there is a simple model in which validators have the validating power $w_i$ = $U$, instead of $U \times \lfloor \frac{t_i}{U} \rfloor$.

\subsubsection{Choices of $L$ and $U$}

The value of lower bound $L$=$1$ is used to separate between Observers and Users. The upper bound $U$=$1,000,000$ separates between Users and Validators. Both $L$ and $U$ are configurable and will vary based on the community's decision.

Figure~\ref{fig:stairaccounts} shows the differences between three different account types. The level of trust is proportional to the number of tokens. Observers have a trust level of zero and they have incentives to do post-validation for Saga points.
Users have low trust as their token holding varies from 1 to $U$, whereas Validators have high trust level with more than $U$ tokens in their account. Both Users and Validators have incentives to join the network to earn transaction fees and validation rewards.
\begin{figure}[ht]
	\centering
	\includegraphics[width=0.8\linewidth]{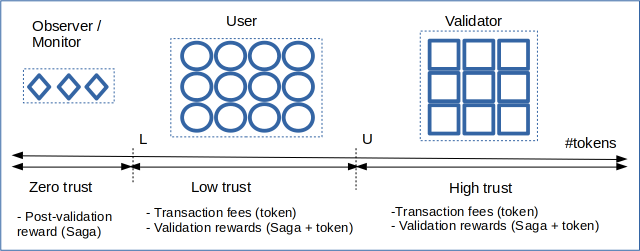}
	\caption{Token holding, trust levels and incentives of the three account types}
	\label{fig:stairaccounts}
\end{figure}

There are a limited number of Observers and a small pool of Validators in the entire network. The number of Users may take a great part of the network.

\subsubsection{Alternative models for gossip and staking}
We also propose an alternative model to ensure a more balanced chance for both Users and Validators. This is to improve liveness and safety of the network. Specifically, each User node randomly selects a Validator node in $\mathcal{V}$ and each Validator node randomly chooses a User node in $\mathcal{U}$. The probability of being selected is proportional to the value $w_i$ in the set, in both cases.
Alternatively, each Validator node can randomly select a node  from all nodes with a probability reversely proportional to the stake $w_i$.

\subsubsection{Network load}
More observers may consume more network capacity from the validating nodes. To support more nodes, observers may not be allowed to synchronize and retrieve events directly from Validators. Instead, we suggest to use a small pool of Moderator nodes that will retrieve updated events from Validators. Observers will then synchronize with the Moderator nodes to download new updates and to perform post-validation. Similarly, low staked validators can synchronize with such Moderator nodes as well.

\section{Conclusion}\label{se:con}

In this paper, we consolidate the key technologies and algorithms of our Lachesis consensus protocol used in Fantom's Opera network. Lachesis protocol integrates Proof of Stake into a DAG-based model to achieve more scalable and fast consensus. It guarantees asynchronous practical BFT. Further, we present a staking model.

By using Lamport timestamps, Happened-before relation, and layer information, our algorithm for topological ordering of event blocks is deterministic and guarantees a consistency and finality of events in all honest nodes. 

We present a formal proofs and semantics of our PoS+DAG Lachesis protocol in the Appendix. Our work extends the formal foundation established in our previous papers~\cite{fantom18, onlay19, stakedag}, which is the first that studies concurrent common knowledge sematics~\cite{cck92} in DAG-based protocols. Formal proofs for our layering-based protocol is also presented.

\subsection{Future work}
We are considering several models to improve the scalability and sustainability of Opera network. In particular, we investigate several approaches to allow more validating nodes to join the network, whilst we aim to improve decentralization, increase TPS and keep TTF low. A brief introduction of these approaches is given in the Appendix.

\newpage
\section{Appendix}

\subsection{Basic Definitions}

Lachesis protocol is run via nodes representing users' machines which together create the Opera network. The basic units of the protocol are called event blocks - a data structure created by a single node, which contains transactions. These event blocks reference previous event blocks that are known to the node. This flow or stream of information creates a sequence of history.

\dfnn{Lachesis}{Lachesis protocol is a consensus protocol}

\dfnn{Node}{Each machine that participates in the Lachesis protocol is called a \emph{node}. Let $n$ denote the total number of nodes.}

\dfnn{$k$}{A \emph{constant} defined in the system.}

\dfnn{Peer node}{A node $n_i$ has $k$ \emph{peer nodes}.}

\dfnn{Process}{A process $p_i$ represents a machine or a \emph{node}. The process identifier of $p_i$ is $i$. A set $P$ = \{1,...,$n$\} denotes the set of process identifiers.}

\dfnn{Channel}{A process $i$ can send messages to process $j$ if there is a channel ($i$,$j$). Let $C$ $\subseteq$ \{($i$,$j$) s.t. $i,j \in P$\} denote the set of channels.}

\subsubsection{Events}

\dfnn{Event block}{Each node can create event blocks, send (receive) messages to (from) other nodes.}

The structure of an event block includes the signature, generation time, transaction history, and hash information to references. All nodes can create event blocks. The first event block of each node is called a \emph{leaf event}.

Suppose a node $n_i$ creates an event $v_c$ after an event $v_s$ in $n_i$.  Each event block has exactly $k$ references. One of the references is self-reference, and the other $k$-1 references point to the top events of $n_i$'s $k$-1 peer nodes.

\dfnn{Top Event}{An event $v$ is a top event of a node $n_i$ if there is no other event in $n_i$ referencing $v$.}

\dfnn{Height Vector}{The height vector is the number of event blocks \emph{created} by the $i$-th node.}

\dfnn{Ref}{An event $v_r$ is called ``ref" of event $v_c$ if the reference hash of $v_c$ points to the event $v_r$. Denoted by $v_c \eref v_r$. For simplicity, we can use $\erefz$ to denote a reference relationship (either $\eref$ or $\eself$).}

\dfnn{Self-ref}{An event $v_s$ is called ``self-ref" of event $v_c$, if the self-ref hash of $v_c$ points to the event $v_s$. Denoted by $v_c \eself v_s$.}

\dfnn{Event references}{Each event block has at least $k$ references. One of the references is self-reference, and the others point to the top events of other peer nodes.}

\dfnn{Self-ancestor}{An event block $v_a$ is self-ancestor of an event block $v_c$ if there is a sequence of events such that $v_c \eself v_1 \eself \dots \eself v_m \eself v_a $. Denoted by $v_c \eselfancestor v_a$.}

\dfnn{Ancestor}{An event block $v_a$ is an ancestor of an event block $v_c$ if there is a sequence of events such that $v_c \erefz v_1 \erefz \dots \erefz v_m \erefz v_a $. Denoted by $v_c \eancestor v_a$.}

For simplicity, we simply use $v_c \eancestor v_s$ to refer both ancestor and self-ancestor relationship, unless we need to distinguish the two cases.

\subsubsection{OPERA DAG}

In Lachesis protocol, the history of event blocks form a directed acyclic graph, called OPERA DAG. OPERA DAG $G=(V,E)$ consists of a set of vertices $V$  and a set of edges $E$ . 
A path in $G$ is a sequence  of vertices ($v_1$, $v_2$, $\dots$, $v_k$) by following the edges in $E$ such that it uses no edge more than once.
Let $v_c$ be a vertex in $G$.
A vertex $v_p$ is the \emph{parent} of $v_c$ if there is an edge from $v_p$ to $v_c$.
A vertex $v_a$ is an \emph{ancestor} of $v_c$ if there is a path from $v_a$ to $v_c$.

\begin{defn}[OPERA DAG]
	OPERA DAG is a DAG graph $G = (V, E)$ consisting of $V$ vertices and $E$ edges. Each vertex $v_i \in V$ is an event block. An edge $(v_i,v_j) \in E$ refers to a hashing reference from $v_i$ to $v_j$; that is, $v_i \erefz v_j$.
\end{defn}

\begin{defn}[vertex]
	An event block is a vertex of the OPERA DAG.
\end{defn}

Suppose a node $n_i$ creates an event $v_c$ after an event $v_s$ in $n_i$. Each event block has at most $k$ references. Thus, each vertex has at most $k$ out-edges. One of the references is self-reference, and the other $k$-1 references point to the top events of $n_i$'s $k$-1 peer nodes.

\subsubsection{Happened-Before relation}
The ``happened before" relation, denoted by $\rightarrow$, gives a partial ordering of events from a distributed system of nodes. 
Each node $n_i$ (also called a process) is identified by its process identifier $i$. 
For a pair of event blocks $v$ and $v'$, the relation "$\rightarrow$" satisfies: (1) If $v$ and $v'$ are events of process $P_i$, and $v$ comes before $v'$, then $b \rightarrow v'$. (2) If $v$ is the send($m$) by one process and $v'$ is the receive($m$) by another process, then $v \rightarrow v'$. (3) If $v \rightarrow v'$ and $v' \rightarrow v''$ then $v \rightarrow v''$. 
Two distinct events $v$ and $v'$ are said to be concurrent if $v \nrightarrow v'$ and $v' \nrightarrow v$.

\dfnn{Happened-Immediate-Before}{An event block $v_x$ is said Happened-Immediate-Before an event block $v_y$ if $v_x$ is a (self-) ref of $v_y$. Denoted by $v_x \hibefore v_y$.}

\dfnn{Happened-Before}{An event block $v_x$ is said Happened-Before an event block $v_y$ if $v_x$ is a (self-) ancestor of $v_y$. Denoted by $v_x \hbefore v_y$.}

The happened-before relation is the transitive closure of happens-immediately-before.
An event $v_x$ happened before an event $v_y$ if one of the followings happens: (a) $v_y \eself v_x$, (b) $v_y \eref v_x$,  or (c) $v_y \eancestor v_x$.

In Lachesis, we come up with the following proposition:
\begin{prop}[Happened-Immediate-Before OPERA]
	$v_x \hibefore v_y$ iff $v_y \erefz v_x$ iff edge $(v_y, v_x)$ $\in E$ in OPERA DAG.
\end{prop}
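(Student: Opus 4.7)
The plan is to prove the two biconditionals by directly unpacking the definitions given earlier in the excerpt, since this proposition is essentially a consistency statement that the three notations name the same relation. There is no combinatorial content to extract, so the strategy is a short chain of equivalences rather than an induction or case split.

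First, I would handle $v_x \hibefore v_y \iff v_y \erefz v_x$. By the definition of Happened-Immediate-Before, $v_x \hibefore v_y$ holds exactly when $v_x$ is a ref or a self-ref of $v_y$, i.e.\ when either $v_y \eref v_x$ or $v_y \eself v_x$. By the stated convention that $\erefz$ abbreviates ``either $\eref$ or $\eself$,'' this is literally the statement $v_y \erefz v_x$. So this equivalence is immediate from notation.

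Next, I would establish $v_y \erefz v_x \iff (v_y, v_x) \in E$. This follows from the definition of OPERA DAG: an edge $(v_i, v_j) \in E$ was defined to refer precisely to a hashing reference from $v_i$ to $v_j$, i.e.\ $v_i \erefz v_j$. Instantiating with $v_i = v_y$ and $v_j = v_x$ yields the equivalence directly. Chaining the two equivalences gives the full statement of the proposition.

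The only thing to be slightly careful about is the direction of the arrows, since $\hibefore$ points from the older event to the newer one while $\erefz$ and the DAG edge point from the newer event (which carries the hash) back to its parent. I would therefore explicitly note that reversing the order of the arguments in each step is correct and is in fact built into the definitions of $\eref$ and $\eself$ (both written as $v_c \eref v_r$ and $v_c \eself v_s$ with $v_c$ the child). This is the only place a reader might trip up; beyond that, the proposition is a definitional unfolding and no deeper argument is needed.
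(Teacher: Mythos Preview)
Your proposal is correct and matches the paper's treatment: the paper states this proposition without any proof, treating it as an immediate consequence of the definitions of $\hibefore$, $\erefz$, and the edge set $E$, which is exactly the definitional unfolding you describe. Your remark about the reversed direction of the arguments is the only nontrivial point, and you handle it correctly.
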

\begin{lem}[Happened-Before Lemma]
	$v_x \hbefore v_y$ iff $v_y \eancestor v_x$.
\end{lem}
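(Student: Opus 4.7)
The plan is to prove both directions by induction on the length of a reference/ancestor chain, using the just-stated proposition (Happened-Immediate-Before OPERA) as the base case. The whole argument rests on observing that $\hbefore$ is (by the paper's preceding remark) the transitive closure of $\hibefore$, while $\eancestor$ is, by definition, the transitive closure of $\erefz$. Once the base-case correspondence $v_x \hibefore v_y \iff v_y \erefz v_x$ is in hand, both directions of the lemma follow by transporting chains one edge at a time.

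For the forward direction, I would assume $v_x \hbefore v_y$ and take a witnessing chain $v_y = u_0, u_1, \dots, u_m = v_x$ with $u_{i+1} \hibefore u_i$ for each $i$ (so that the transitive closure of $\hibefore$ yields $v_x \hbefore v_y$). Applying the Happened-Immediate-Before OPERA proposition edge by edge, each step gives $u_i \erefz u_{i+1}$, so the same sequence witnesses $v_y \erefz u_1 \erefz \dots \erefz v_x$, which is precisely the definition of $v_y \eancestor v_x$. Formally this is an induction on $m$: the case $m=1$ is the proposition itself, and the inductive step appends one more $\erefz$ edge.

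For the backward direction, I would assume $v_y \eancestor v_x$ and unfold the ancestor definition to obtain a sequence $v_y \erefz v_1 \erefz \dots \erefz v_m \erefz v_x$. Applying the proposition to each edge gives $v_1 \hibefore v_y$, $v_2 \hibefore v_1$, \dots, $v_x \hibefore v_m$. Since $\hbefore$ contains $\hibefore$ and is transitive (clause (3) in the Happened-Before definition), composing these immediate-before steps yields $v_x \hbefore v_y$. Again this is a short induction on the path length, with the proposition supplying the base case and transitivity supplying the inductive step.

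The main obstacle, and really the only subtle point, is notational: the paper overloads $\eancestor$ to cover both ancestor and self-ancestor relations, and correspondingly $\erefz$ abbreviates either $\eref$ or $\eself$. I would state explicitly at the start of the proof that the argument is agnostic as to which kind of edge each $\erefz$ step represents, so that the same induction handles mixed chains of self-references and cross-node references uniformly. Beyond that, no quantitative estimates or case analysis on the DAG structure are required; the lemma is essentially the statement that two transitive closures of corresponding base relations coincide.
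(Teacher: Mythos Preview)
Your proposal is correct and is exactly the argument the paper has in mind: the paper does not supply a separate formal proof of this lemma but simply states it immediately after the Happened-Immediate-Before OPERA proposition and the remark that $\hbefore$ is the transitive closure of $\hibefore$. Your induction on chain length, transporting each $\hibefore$/$\erefz$ step via the proposition and then composing by transitivity, is the natural unpacking of that remark, and your note about $\erefz$ uniformly covering both $\eref$ and $\eself$ matches the paper's own convention.
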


\dfnn{Concurrent}{Two event blocks $v_x$ and $v_y$ are said concurrent if neither of them  happened before the other. Denoted by $v_x \concur v_y$.}

Let $G_1$ and $G_2$ be the two OPERA DAGS of any two honest nodes.
For any two vertices $v_x$ and $v_y$, if both of them are contained in two OPERA DAGs $G_1$ and $G_2$, then they are satisfied the following:
\begin{itemize}
	\item $v_x \hbefore v_y$ in $G_1$ if $v_x \hbefore v_y$ in $G_2$.
	\item $v_x \concur v_y$ in $G_1$ if $v_x \concur v_y$ in $G_2$.
\end{itemize}

Happened-before defines the relationship between event blocks created and shared between nodes. If there is a path from an event block $v_x$ to $v_y$, then $v_x$ Happened-before $v_y$. ``$v_x$ Happened-before $v_y$" means that the node creating $v_y$ knows event block $v_x$. This relation is the transitive closure of happens-immediately-before. Thus, an event $v_x$ happened before an event $v_y$ if one of the followings happens: (a) $v_y \eself v_x$, (b) $v_y \eref v_x$,  or (c) $v_y \eancestor v_x$. The happened-before relation of events form an acyclic directed graph $G' = (V, E')$ such that an edge $(v_i,v_j) \in E'$ has a reverse direction of the same edge in $E$.

\subsubsection{Lamport timestamp}

Lachesis protocol relies on Lamport timestamps to define a topological ordering of event blocks in OPERA DAG.
By using Lamport timestamps, we do not rely on physical clocks to determine a partial ordering of events.

We use this total ordering in our Lachesis protocol to determine consensus time.

For an arbitrary total ordering `$\prec$` of the processes, a relation `$\Rightarrow$` is defined as follows: if $v$ is an event in process $P_i$ and $v'$ is an event in process $P_j$, then $v \Rightarrow v'$ if and only if either (i) $C_i(v) < C_j(v')$ or (ii) $C(v)= C_j(v')$ and $P_i \prec P_j$. This defines a total ordering, and that the Clock Condition implies that if $v \rightarrow v'$ then $v \Rightarrow v'$.

\subsubsection{Domination relation}

In a graph $G=(V, E, r)$, a dominator is the relation between two vertices. A vertex $v$ is dominated by another vertex $w$, if every path in the graph from the root $r$ to $v$ have to go through $w$. The immediate dominator for a vertex $v$ is the last of $v$’s dominators, which every path in the graph have to go through to reach $v$.

\dfnn{Pseudo top}{A pseudo vertex, called top, is the parent of all top event blocks. Denoted by $\top$.}

\dfnn{Pseudo bottom}{A pseudo vertex, called bottom, is the child of all leaf event blocks. Denoted by $\bot$.}

With the pseudo vertices, we have $\bot$ happened-before all event blocks. Also all event blocks happened-before $\top$. That is, for all event $v_i$, $\bot \hbefore v_i$ and $v_i \hbefore \top$.

Then we define the domination relation for event blocks. To begin with, we first introduce pseudo vertices, \emph{top} and \emph{bot}, of the DAG OPERA DAG $G$.
\dfnn{pseudo top}{A pseudo vertex, called top, is the parent of all top event blocks. Denoted by $\top$.}

\dfnn{dom}{An event $v_d$ dominates an event $v_x$ if every path from $\top$ to $v_x$ must go through $v_d$. Denoted by $v_d \dom v_x$.}

\dfnn{strict dom}{An event $v_d$ strictly dominates an event $v_x$ if $v_d \dom v_x$ and $v_d$ does not equal $v_x$. Denoted by $v_d \sdom v_x$.}

\dfnn{domfront}{A vertex $v_d$ is said ``domfront'' a vertex $v_x$ if  $v_d$ dominates an immediate predecessor of $v_x$, but $v_d$ does not strictly dominate $v_x$. Denoted by $v_d \domf v_x$.}

\dfnn{dominance frontier}{The dominance frontier of a vertex $v_d$ is the set of all nodes $v_x$ such that $v_d \domf v_x$. Denoted by $DF(v_d)$.}

From the above definitions of domfront and dominance frontier, the following holds. If $v_d \domf v_x$, then $v_x \in DF(v_d)$.

For a set $S$ of vertices, an event $v_d$  $\frac{2}{3}$-dominates $S$ if there are more than 2/3 of vertices $v_x$ in $S$ such that $v_d$ dominates $v_x$. 	
Recall that $R_1$ is the set of all leaf vertices in $G$. The $\frac{2}{3}$-dom set $D_0$ is the same as the set $R_1$.The $\frac{2}{3}$-dom set $D_i$ is defined as follows:

\dfnn{$\frac{2}{3}$-dom set]}{A vertex $v_d$ belongs to a $\frac{2}{3}$-dom set  within the graph $G[v_d]$, if $v_d$ $\frac{2}{3}$-dominates $R_1$.
	The $\frac{2}{3}$-dom set $D_k$ consists of all roots $d_i$ such that  $d_i$ $\not \in $ $D_i$, $\forall$ $i$ = 1..($k$-1), and $d_i$ $\frac{2}{3}$-dominates $D_{i-1}$.}

\begin{lem}
	The $\frac{2}{3}$-dom set $D_i$ is the same with the root set $R_i$, for all nodes.
\end{lem}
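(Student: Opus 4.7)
The plan is to proceed by induction on $i$, with the base case given directly by the definition quoted just above the statement (modulo the paper's index offset, $D_0 = R_1$; I would first align the indexing before invoking the inductive step). For the inductive step, assume $D_{k-1} = R_{k-1}$ holds and establish $D_k = R_k$ by double inclusion.

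First I would prove $D_k \subseteq R_k$. Take any $v \in D_k$. By the definition of the $\frac{2}{3}$-dom set, $v$ dominates more than $2/3$ of the vertices in $D_{k-1}$, which by the inductive hypothesis equals $R_{k-1}$. Because every $\top$-to-$v_x$ path in the OPERA DAG is required to pass through $v$, in particular the descending edges from $v$ reach $v_x$, so $v \eancestor v_x$ and thus $v_x \hbefore v$ by the Happened-Before Lemma. Aggregating across the $\frac{2}{3}$-majority gives $v$ reaching more than $2/3$ of the validating power in $R_{k-1}$, which by the root definition places $v$ in $R_k$.

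For the reverse inclusion $R_k \subseteq D_k$, take $v \in R_k$. By definition more than $2/3$ of the validating power in $R_{k-1}$ is witnessed by roots $r$ with $v \eancestor r$. The nontrivial task is to upgrade each such reachability fact to the dominance statement ``every path from $\top$ to $r$ passes through $v$''. I plan to use two structural properties of Lachesis: (i) along a creator's self-ref chain Lamport timestamps and frame numbers are strictly monotonic, so each honest creator contributes at most one root per frame; and (ii) the fork lemma proved earlier, which rules out two conflicting roots being accepted by honest nodes. Together these collapse any $\top$-to-$r$ path descending through $v$'s creator's branch to one passing through $v$, and the $2/3$ threshold then yields the required $\frac{2}{3}$-dominance.

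The hard part will be exactly this upgrade from ancestry to dominance, since reachability in a DAG is strictly weaker than dominance in general. I expect to lean heavily on the ``one root per creator per frame'' property together with the fork lemma to argue that the frame-$k$ root $v$ is the unique gateway into its creator's branch toward $R_{k-1}$; the $2/3$ stake majority then supplies enough such gateways to satisfy the dominance-frontier condition. Finally, the ``for all nodes'' clause follows from the preservation of $\hbefore$ and $\concur$ across any two honest OPERA DAGs $G_1, G_2$ noted in the excerpt, which guarantees that the induced dominance structure, and therefore the sets $D_i$, are the same across honest nodes.
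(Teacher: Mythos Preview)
The paper states this lemma without proof, so there is no paper argument to compare against directly. Nonetheless, your $R_k \subseteq D_k$ direction contains a genuine gap.

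Your plan is to upgrade ``$v$ reaches $r$'' to ``$v$ dominates $r$'' using (i) one-root-per-creator-per-frame and (ii) the fork lemma. But these two facts constrain only those $\top$-to-$r$ paths that descend through $v$'s \emph{own} creator's self-ref chain; they say nothing about $\top$-to-$r$ paths that route entirely through \emph{other} creators' events. In a typical OPERA DAG, a frame-$k$ root $v$ that reaches a frame-$(k{-}1)$ root $r$ will almost never dominate $r$ in the full graph, because the pseudo-vertex $\top$ is the parent of \emph{all} current top events and can reach $r$ through many branches that never touch $v$ or $v$'s creator. So the collapse you describe does not occur, and the upgrade from ancestry to dominance fails as stated.

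The resolution is hidden in the paper's definition: the phrase ``within the graph $G[v_d]$'' means the dominance relation is evaluated in the induced subgraph of $v_d$'s ancestors, not in the full OPERA DAG. In $G[v_d]$ the only top event is $v_d$ itself, so $\top$ has $v_d$ as its single child and $v_d$ trivially dominates every vertex of $G[v_d]$. Under that reading, ``$v_d$ $\tfrac{2}{3}$-dominates $D_{k-1}$'' collapses to ``more than $2/3$ of $D_{k-1}$ lies in $G[v_d]$'', i.e.\ $v_d$ reaches more than $2/3$ of $D_{k-1}$. Both inclusions then follow immediately from the root definition together with the inductive hypothesis $D_{k-1}=R_{k-1}$, and no fork-lemma or monotonicity argument is required.
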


\subsection{Proof of aBFT}\label{se:proof}
This section presents a proof to show that our Lachesis protocol is Byzantine fault tolerant when at most one-third of participants are compromised. We first provide some definitions, lemmas and theorems. We give formal semantics of PoS+DAG Lachesis protocol using the semantics in concurrent common knowledge.

\subsubsection{S-OPERA DAG - A Weighted DAG}

The OPERA DAG (DAG) is the local view of the DAG held by each node, this local view is used to identify topological ordering between events, to decide Clotho candidates and to compute consensus time of Atropos and events under it's subgraph.

\begin{defn}[S-OPERA DAG]
	An S-OPERA DAG is the local view of a weighted DAG $G$=($V$,$E$, $w$). Each vertex $v_i \in V$ is an event block. Each block has a weight, which is the \emph{validation score}. An edge ($v_i$,$v_j$) $\in E$ refers to a hashing reference from $v_i$ to $v_j$; that is, $v_i \erefz v_j$.
\end{defn}
	
\subsubsection{Consistency of DAGs}
	
\dfnn{Leaf}{The first created event block of a node is called a leaf event block.}

\dfnn{Root}{An event block $v$ is a root if either (1) it is the leaf event block of a node, or (2) $v$ can reach more than $2W/3$ validating power from previous roots.}

\dfnn{Root set}{The set of all first event blocks (leaf events) of all nodes form the first root set $R_1$ ($|R_1|$ = $n$). The root set $R_k$ consists of all roots $r_i$ such that $r_i$ $\not \in $ $R_i$, $\forall$ $i$ = 1..($k$-1) and $r_i$ can reach more than $2W/3$ validating power from other roots in the current frame, $i$ = 1..($k$-1).}

\dfnn{Frame}{The history of events are divided into frames. Each frame contains a disjoint set of roots and event blocks.}

\dfnn{Subgraph}{For a vertex $v$ in a DAG $G$, let $G[v] = (V_v,E_v)$ denote an induced-subgraph of $G$ such that $V_v$ consists of all ancestors of $v$ including $v$, and $E_v$ is the induced edges of $V_v$ in $G$.}

We define a definition of consistent DAGs, which is important to define the semantics of Lachesis protocol.

\begin{defn}[Consistent DAGs]
	Two OPERA DAGs $G_1$ and $G_2$ are consistent if for any event $v$ contained in both chains, $G_1[v] = G_2[v]$. Denoted by $G_1 \sim G_2$.
\end{defn}

\begin{thm}[Honest nodes have consistent DAGs]\label{thm:con-dags}
All honest nodes have consistent OPERA DAGs.
\end{thm}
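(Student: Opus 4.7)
The plan is to exploit the fact that event blocks are content-addressed: each block's hash commits to its parent hashes, its creator, and its payload, so the identity of a block uniquely determines the multiset of its parents, which are themselves uniquely determined blocks, and so on. Consequently, the induced subgraph $G[v]$ of ancestors of $v$ is an intrinsic property of $v$ and cannot differ between two honest nodes that both hold $v$. The proof will formalise this by structural induction along the happened-before order.

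First I would fix an event block $v$ that is contained in both local views $G_1$ and $G_2$ of two honest nodes, and induct on the depth of $v$, where depth is measured by the length of the longest chain of $\hookrightarrow$-edges from $v$ down to a leaf (equivalently, by the Lamport timestamp or event.Seq, since the Happened-Before Lemma gives $v_x \hbefore v_y$ iff $v_y \eancestor v_x$). For the base case, $v$ is a leaf event, so $G_i[v] = (\{v\}, \emptyset)$ in both DAGs and the two induced subgraphs coincide trivially. For the inductive step, since every honest node enforces the integrity check described in the Peer Synchronization subsection (``has all parent blocks already added in the node's local DAG'' before admitting an event, together with signature and hash verification), the presence of $v$ in $G_i$ forces the presence of each parent $p$ with $v \erefz p$ in $G_i$. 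The parents referenced by $v$ are literally the hashes stored in $v$'s header, so the two nodes see the exact same parent set. By the induction hypothesis applied to each such $p$ (whose depth is strictly smaller), $G_1[p] = G_2[p]$. Taking $G_i[v] = \{v\} \cup \bigcup_{v \erefz p} G_i[p]$ with the corresponding induced edges then yields $G_1[v] = G_2[v]$.

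The main obstacle is the case of forks by Byzantine creators: a dishonest node may emit two events $v_x \efork v_y$ with the same creator. One might worry that two honest nodes would disagree about which branch is ``the'' history at that creator. The resolution is that a fork produces two distinct event blocks with two distinct hashes, hence two distinct vertices; the induced subgraph $G[v]$ follows only the specific parent hashes actually recorded inside $v$ and its transitive ancestors, so a fork by some third party is either entirely inside $G[v]$ (and then both branches are pulled in identically on both nodes by the induction) or entirely outside $G[v]$ (and then it is irrelevant to the claim). Thus forks do not break the intrinsic determinism of $G[v]$.

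Finally, I would note the standing cryptographic assumption (collision resistance of the hash function and unforgeability of signatures), under which distinct blocks cannot share a hash and an honest node cannot be tricked into accepting a malformed parent reference; this is what turns the content-addressing argument from a heuristic into a proof. The conclusion then follows: for any event $v$ common to the two honest local views, $G_1[v] = G_2[v]$, i.e.\ $G_1 \sim G_2$.
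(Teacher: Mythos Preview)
Your proposal is correct and follows essentially the same approach as the paper: both argue that an honest node only accepts an event once all its hash-referenced parents are present, that secure hashing forces the parent set of any common event $v$ to agree across nodes, and then conclude $G_1[v]=G_2[v]$ by induction on the ancestor structure. Your version is more detailed---you make the induction explicit, handle the fork case separately, and spell out the cryptographic assumptions---but the underlying argument is the same content-addressing-plus-induction proof the paper gives.
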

	
If two nodes have OPERA DAGs containing event $v$, $v$ is valid and both contain all the parents referenced by $v$. In Lachesis, a node will not accept an event during a sync unless that node already has all references for that event, and thus both OPERA DAGs must contain $k$ references for $v$. The cryptographic hashes are assumed to be secure, therefore the references must be the same. By induction, all ancestors of $v$ must be the same. Therefore, for any two honest nodes, their OPERA DAGs are consistent. Thus, all honest nodes have consistent OPERA DAGs.\\

\dfnn{Creator}{If a node $n_x$ creates an event block $v$, then the creator of $v$, denoted by $cr(v)$, is $n_x$.}

\begin{defn}[Global DAG]
	A DAG $G^C$ is a global DAG of all $G_i$, if $G^C \sim G_i$ for all $G_i$.
\end{defn}

Let denote $G \sqsubseteq G'$ to stand for $G$ is a subgraph of $G'$. Some properties of the global DAG $G^C$ are given as follows:

\begin{enumerate}
	\item $\forall G_i$ ($G^C \sqsubseteq G_i$).
	\item
	$\forall v \in G^C$ $\forall G_i$ ($G^C[v] \sqsubseteq G_i[v]$).
	\item
	($\forall v_c \in G^C$) ($\forall v_p \in G_i$) (($v_p \hbefore v_c) \Rightarrow v_p \in G^C$).
\end{enumerate}

\subsubsection{Root}\label{pf:root}

We define the 'forkless cause' relation, as follows:
\begin{defn}[forkless cause]
The relation $forklessCause(A,B)$ denote that event $x$ is forkless caused by event $y$.	
\end{defn}

It means, in the subgraph of event $x$, $x$ did not observe any forks from $y$’s creator and at least QUORUM non-cheater validators have observed event $y$.

\begin{prop}
	If $G_1 \sim G_2$, and $x$ and $y$ exist in both, then forklessCause($x$, $y$) is true in $G_1$ iff forklessCause($x$, $y$) is true in $G_2$.
\end{prop}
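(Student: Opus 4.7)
The plan is to reduce the claim to the observation that $\mathrm{forklessCause}(x,y)$ is a predicate that depends only on the induced subgraph $G[x]$, together with the fact that consistency of DAGs guarantees $G_1[x] = G_2[x]$ whenever $x$ lies in both. So the argument is essentially a locality-plus-consistency argument, and the work is in being precise about what "depends only on $G[x]$" means for each of the two conjuncts in the definition of $\mathrm{forklessCause}$.

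First, I would unfold the definition. By the definition given just before the proposition, $\mathrm{forklessCause}(x,y)$ holds iff (i) within $G[x]$ there is no fork attributable to $cr(y)$, i.e., no pair of events $(v_a,v_b)$ with $cr(v_a)=cr(v_b)=cr(y)$ such that $v_a \efork v_b$ and both $v_a,v_b \in V_x$, and (ii) the set $S(x,y) := \{\, n \in \mathbb{V} \setminus \text{Cheaters} : \exists\, v \in V_x\text{ with }cr(v)=n \text{ and } y \hbefore v\,\}$ has total validating power at least $\mathrm{QUORUM}$. Note that $y \in V_x$ is required for (ii) to have any chance of holding, since $y$ itself must be observable from $x$.

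Next I would verify locality of both conjuncts. For (i), the pairs $(v_a,v_b)$ considered lie entirely in $V_x$, and the relations $\eself$, $\eancestor$, $\efork$ used to detect forks are determined by the edges among the vertices of $V_x$, i.e., by $E_x$. Hence (i) is a function of $G[x] = (V_x, E_x)$ alone. For (ii), membership in $S(x,y)$ requires a vertex $v \in V_x$ with $y \hbefore v$, and by the Happened-Before Lemma this is equivalent to $v \eancestor y$, which is witnessed by a path in $G[x]$. The identity of cheaters among $cr(y)$'s peers is likewise determined by forks observed within $G[x]$ (the same locality argument as (i), applied to each validator). Thus $S(x,y)$, and hence conjunct (ii), is a function of $G[x]$ alone.

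Finally I would conclude: by the assumption $x \in G_1 \cap G_2$ and $G_1 \sim G_2$, we have $G_1[x] = G_2[x]$. Since both conjuncts of $\mathrm{forklessCause}(x,y)$ are determined by $G[x]$, they take the same truth value in $G_1$ and $G_2$, and the biconditional follows. The main obstacle, if any, is conjunct (ii), because the phrase ``non-cheater validators'' could in principle refer to a global notion of cheater that might differ between nodes' views; the key point I would spell out is that whether $cr(y)$'s peers (or any validator relevant here) appears as a cheater from $x$'s vantage is itself detectable only through forks inside $V_x$, so this ``global'' notion collapses to something local once we fix $G[x]$.
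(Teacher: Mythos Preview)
Your proposal is correct and follows essentially the same approach as the paper: the paper's justification is the one-line observation that $G_1[x]=G_2[x]$, $G_1[y]=G_2[y]$, and that $cr(y)$ and QUORUM coincide in both DAGs, which is exactly your locality-plus-consistency argument. Your version is more careful in unpacking why each conjunct of $\mathrm{forklessCause}$ is determined by $G[x]$ and in flagging the ``non-cheater'' clause as the only subtle point, whereas the paper leaves all of this implicit.
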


For any two consistent graphs $G_1$ and $G_2$, the subgraphs $G_1[x]=G_2[x]$ and $G_1[y]=G_2[y]$. Also $y$'s creator and QUORUM is the same on both DAGs.

\dfnn{Validation Score}{For event block $v$ in both $G_1$ and $G_2$, and $G_1 \sim G_2$, the validation score of $v$ in $G_1$ is identical with that of $v$ in $G_2$.}

\begin{defn}[root]
An event is called a root, if it is \emph{forkless causes} by QUORUM roots of previous frame.
\end{defn}

For every event, its frame number is no smaller than self-parent's frame number. 

The lowest possible frame number is 1. Relation $forklessCause$ is a stricter version of \texttt{happened-before} relation, i.e. if $y$ forkless causes $x$, then $y$ is happened before A. 
Yet, unlike \texttt{happened-before} relation, \texttt{forklessCause} is not transitive because it returns false if fork is observed. So if $x$ forkless causes $y$, and $y$ forkless causes $z$, it does not mean that $x$ forkless causes $z$. If there's no forks in the graph, then the \texttt{forklessCause} is always transitive.

\begin{prop}
	If $G_1 \sim G_2$, and  then $G_1$ and $G_2$ are root consistent.
\end{prop}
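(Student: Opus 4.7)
The plan is to prove root consistency by induction on the frame number, leveraging the consistency of subgraphs guaranteed by $G_1 \sim G_2$ together with the \emph{forklessCause} consistency proposition stated just above. Concretely, for any event $v$ contained in both $G_1$ and $G_2$, I will show that $v$ is a root of frame $k$ in $G_1$ if and only if $v$ is a root of frame $k$ in $G_2$, and moreover that the root sets of frame $k$ restricted to the common vertices coincide.

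\textbf{Base case (frame 1).} First I would observe that the only roots of frame 1 are the leaf events of each node (the first event block of each process). Since $G_1 \sim G_2$, for any vertex $v$ common to both DAGs we have $G_1[v] = G_2[v]$, hence $v$ has a self-parent in $G_1$ iff it has one in $G_2$. So $v$ is a leaf in $G_1$ iff $v$ is a leaf in $G_2$, establishing the base case.

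\textbf{Inductive step.} Assume the claim for all frames $\leq k$: the frame-$j$ roots contained in both DAGs are identical for $j \leq k$. Consider a vertex $v$ present in both $G_1$ and $G_2$. By definition, $v$ is a frame-$(k{+}1)$ root iff it is forklessly caused by QUORUM frame-$k$ roots. By the inductive hypothesis, the set of frame-$k$ roots lying in $G_1[v] = G_2[v]$ is the same in both DAGs. By the previous proposition, for any such root $r$, $\mathit{forklessCause}(v,r)$ holds in $G_1$ iff it holds in $G_2$. Since QUORUM $= 2W/3 + 1$ is a function of the global weight distribution (which is part of the shared subgraph state encoded in $G_1[v] = G_2[v]$), the quorum threshold is identical on both sides. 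Therefore $v$ satisfies the root condition at frame $k{+}1$ in $G_1$ iff it does in $G_2$, and the induction closes.

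\textbf{Main obstacle.} The routine parts are the induction bookkeeping and the appeal to the forklessCause proposition. The delicate point is handling the frame assignment itself: the frame number of $v$ is defined recursively via its self-parent and the forklessCause count, so one must be careful that ``$v$ is a frame-$(k{+}1)$ root'' is itself a predicate that depends only on $G_1[v]$, not on events outside the subgraph. I would address this by making the induction statement slightly stronger, namely that for every $v$ in both DAGs, the frame number $\mathrm{fr}(v)$ and the root-status of $v$ agree in $G_1$ and $G_2$; this stronger statement is exactly what the subgraph-equality $G_1[v] = G_2[v]$ together with forklessCause consistency delivers, and it avoids any circularity between ``being a root'' and ``having a frame number.''
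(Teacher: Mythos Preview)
Your proposal is correct and follows essentially the same approach as the paper: induction on the frame index, with the base case given by leaf events and the inductive step carried by the subgraph equality $G_1[v]=G_2[v]$ to transport the witnessing set of frame-$k$ roots from one DAG to the other. The only cosmetic difference is that you invoke the \emph{forklessCause} consistency proposition explicitly, whereas the paper's proof phrases the inductive step in terms of the simpler happened-before reachability (``there exists a set $S$ reaching $2/3$ of members of frame $f_k$ such that $u \rightarrow v$''); your version is arguably more faithful to the formal root definition, and your added remark about strengthening the induction to include frame numbers is a nice bit of hygiene the paper leaves implicit.
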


Now we state the following important propositions.
\begin{defn}[Root consistency]
	Two DAGs $G_1$ and $G_2$ are root consistent: if for every $v$ contained in both DAGs, and $v$ is a root of $j$-th frame in $G_1$, then $v$ is a root of $j$-th frame in $G_2$.
\end{defn}
\begin{prop}
	If $G_1 \sim G_2$, then $G_1$ and $G_2$ are root consistent.
\end{prop}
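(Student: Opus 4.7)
The plan is to prove the statement by induction on the frame number $j$, using the already-established fact that $G_1 \sim G_2$ implies $G_1[v] = G_2[v]$ for every event $v$ contained in both, together with the previously stated proposition that \emph{forkless cause} is preserved between consistent graphs. The statement I will prove by induction is: for every event $v$ present in both $G_1$ and $G_2$, $v$ is a root of frame $j$ in $G_1$ if and only if $v$ is a root of frame $j$ in $G_2$.

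For the base case $j=1$, a root of frame $1$ is a leaf event, i.e., the first event of its creator with no self-parent. Whether $v$ has a self-parent is determined entirely by $G_1[v]$ (and by symmetry $G_2[v]$), and these subgraphs are equal by $G_1 \sim G_2$. Hence leaf status, and therefore frame-$1$ root status, is preserved.

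For the inductive step, assume the equivalence holds for all frames $\le j$. Suppose $v$ is a root of frame $j+1$ in $G_1$. By definition, there is a set $R$ of at least QUORUM frame-$j$ roots in $G_1$ with $\textit{forklessCause}(v, r)$ for each $r \in R$. Every such $r$ lies in $G_1[v] = G_2[v]$, so by the inductive hypothesis each $r$ is a frame-$j$ root in $G_2$, and by the \emph{forkless cause} preservation proposition $\textit{forklessCause}(v, r)$ continues to hold in $G_2$. Thus in $G_2$ the event $v$ is forkless-caused by QUORUM frame-$j$ roots, so $v$ is a root in $G_2$. The reverse direction is symmetric.

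The main obstacle is not that $v$ is a root in both, but that $v$ is assigned the \emph{same} frame number. I would handle this by arguing that frame assignment is a purely local property of $G_1[v]$: the frame of $v$ is determined by which previous-frame roots $v$ forkless-causes and by $v$'s self-parent's frame, and both of these inputs are identical in $G_1$ and $G_2$ by $G_1[v]=G_2[v]$ together with the inductive equivalence applied to every ancestor of $v$. Consequently, $v$ cannot be a frame-$j'$ root in $G_1$ and a frame-$j''$ root in $G_2$ with $j' \ne j''$, which completes the step and yields root consistency.
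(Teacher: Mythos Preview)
Your proposal is correct and follows essentially the same approach as the paper: induction on the frame number, using $G_1[v]=G_2[v]$ to transfer the quorum of previous-frame roots witnessing $v$ from one graph to the other. The paper's version is slightly looser (it phrases the witnessing relation as happened-before rather than \texttt{forklessCause}, and it does not separately argue that the assigned frame number is unique), so your write-up is in fact a bit more careful than the original while remaining structurally identical.
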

\begin{proof}
	By consistent chains, if $G_1 \sim G_2$ and $v$ belongs to both chains, then $G_1[v]$ = $G_2[v]$.
	We can prove the proposition by induction. For $j$ = 0, the first root set is the same in both $G_1$ and $G_2$. Hence, it holds for $j$ = 0. Suppose that the proposition holds for every $j$ from 0 to $k$. We prove that it also holds for $j$= $k$ + 1.
	Suppose that $v$ is a root of frame $f_{k+1}$ in $G_1$. 
	Then there exists a set $S$ reaching 2/3 of members in $G_1$ of frame $f_k$ such that $\forall u \in S$ ($u\hbefore v$). As $G_1 \sim G_2$, and $v$ in $G_2$, then $\forall u \in S$ ($u \in G_2$). Since the proposition holds for $j$=$k$, 
	As $u$ is a root of frame $f_{k}$ in $G_1$, $u$ is a root of frame $f_k$ in $G_2$. Hence, the set $S$ of 2/3 members $u$ happens before $v$ in $G_2$. So $v$ belongs to $f_{k+1}$ in $G_2$. The proposition is proved.
\end{proof}

From the above proposition, one can deduce the following:
\begin{lem}
	$G^C$ is root consistent with $G_i$ for all nodes.
\end{lem}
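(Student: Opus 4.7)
The plan is to derive this lemma as an immediate corollary of the preceding proposition, which states that any two consistent DAGs are root consistent. By the definition of the global DAG $G^C$, we have $G^C \sim G_i$ for every node's local DAG $G_i$. So the first step is simply to invoke the definition of $G^C$ to establish consistency with each $G_i$, and then apply the previous proposition to conclude root consistency.

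More explicitly, I would structure the proof as follows. First, fix an arbitrary node $i$ and its local DAG $G_i$. By the definition of a global DAG given just above, $G^C \sim G_i$. Second, apply the immediately preceding proposition (that $G_1 \sim G_2$ implies root consistency of $G_1$ and $G_2$) with $G_1 := G^C$ and $G_2 := G_i$. This yields that $G^C$ is root consistent with $G_i$. Since $i$ was arbitrary, the conclusion holds for all nodes.

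There is essentially no obstacle here; the lemma is a one-line corollary and all the real work has been carried out in the inductive argument of the preceding proposition, which handled the base case (leaf events form the same first root set) and the inductive step (a quorum of roots at frame $f_k$ in one DAG transports to a quorum of roots at frame $f_k$ in the other, thanks to subgraph equality $G_1[v] = G_2[v]$ for shared $v$). The only subtle point worth flagging is that the lemma implicitly requires $G^C$ to actually exist as a DAG satisfying $G^C \sim G_i$ for all $i$; this is guaranteed by Theorem~\ref{thm:con-dags}, which ensures honest nodes have pairwise consistent OPERA DAGs, so the intersection of their histories is well-defined and itself a consistent DAG. With this in hand, the proof reduces to a direct appeal to the previous proposition and needs no further machinery.
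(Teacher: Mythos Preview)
Your proposal is correct and relies on exactly the same mathematical content as the paper: the inductive argument showing that consistent DAGs share the same root sets frame by frame. The only difference is packaging---the paper's text following the lemma essentially repeats verbatim the induction from the preceding proposition (base case at $j=0$, inductive step carrying a quorum of frame-$f_k$ roots across via $G_1[v]=G_2[v]$), whereas you correctly recognize the lemma as an immediate corollary: $G^C \sim G_i$ by definition of the global DAG, so the preceding proposition applies directly. Your version is cleaner and avoids the redundancy, but there is no substantive difference in the underlying argument.
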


By consistent DAGs, if $G_1 \sim G_2$ and $v$ belongs to both chains, then $G_1[v]$ = $G_2[v]$.
We can prove the proposition by induction. For $j$ = 0, the first root set is the same in both $G_1$ and $G_2$. Hence, it holds for $j$ = 0. Suppose that the proposition holds for every $j$ from 0 to $k$. We prove that it also holds for $j$= $k$ + 1.
Suppose that $v$ is a root of frame $f_{k+1}$ in $G_1$. 
Then there exists a set $S$ reaching 2/3 of members in $G_1$ of frame $f_k$ such that $\forall u \in S$ ($u\hbefore v$). As $G_1 \sim G_2$, and $v$ in $G_2$, then $\forall u \in S$ ($u \in G_2$). Since the proposition holds for $j$=$k$, 
As $u$ is a root of frame $f_{k}$ in $G_1$, $u$ is a root of frame $f_k$ in $G_2$. Hence, the set $S$ of 2/3 members $u$ happens before $v$ in $G_2$. So $v$ belongs to $f_{k+1}$ in $G_2$.

Thus, all honest nodes have the same consistent root sets, which are the root sets in $G^C$. Frame numbers are consistent for all nodes.

\subsubsection{Clotho}\label{pf:clotho}

Algorithm~\ref{al:selClotho} shows our Lachesis's algorithm to decide a Clotho candidate, as given in Section~\ref{se:Clotho-sel}.

\begin{defn}[Clotho consistency]
Two DAGs $G_1$ and $G_2$ are Clotho consistent: if every $v$ contained in both DAGs, and $v$ is a Clotho candidate in $j$-th frame in $G_1$, then $v$ is a Clotho candidate in $j$-th frame in $G_2$.
\end{defn}

Thus, we have the following proposition.	
\begin{prop}
	If $G_1 \sim G_2$, and  then $G_1$ and $G_2$ are Clotho consistent.
\end{prop}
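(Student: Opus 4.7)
The plan is to reduce Clotho consistency to the already-established root consistency together with consistency of the \texttt{forklessCause} relation, and then induct on the voting round used inside Algorithm~\ref{al:selClotho}. Assume $G_1 \sim G_2$ and that $v$ is contained in both. By the preceding lemma, $v$ is a root of the same frame $j$ in both DAGs, and in fact the entire frame structure on $G_1[v]$ coincides with that on $G_2[v]$. Since Algorithm~\ref{al:selClotho} inspects only roots (and their votes on $v$) at frames $> j$, I first need to identify the right ``local'' witness set: any voting root $y$ from a later frame that participates in deciding $v$ must have $v \in G_1[y]$ (otherwise it cannot vote on $v$), hence by consistency its subgraph, its frame number, its creator and stake, and its set of \texttt{prevRoots} are the same in $G_2$ whenever $y$ is also present in $G_2$.

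The core step is an induction on the round $r = y.\text{frame} - x.\text{frame}$, with $x = v$ fixed. For the base case $r = 1$, the vote $y.\text{vote}[v]$ is defined as $\text{forklessCause}(v,y)$; by the proposition on \texttt{forklessCause} above (which follows from $G_1[y] = G_2[y]$ and agreement on $y$'s creator and on QUORUM), this predicate has the same value in both DAGs. For the inductive step $r > 1$, the vote is computed from the votes of those roots in \texttt{prevRoots} that forklessly cause $y$; by root consistency the set \texttt{prevRoots} is identical, by the forklessCause proposition the filtered subset is identical, and by the inductive hypothesis each of their votes agrees across $G_1, G_2$. Since the stake attached to each voter is a function of the validator identity and is therefore also identical, the computed \texttt{yesVotes} and \texttt{noVotes} tallies coincide, and hence so do the branch decisions (the $\geq$ QUORUM thresholds, with the same total stake $W$, fire in both chains simultaneously).

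Putting the pieces together, whenever $v$ is marked as a Clotho candidate in $G_1$ it must be marked on encountering some root $y$ in a later frame that witnesses $\text{yesVotes} \geq Q$; by the above, this same $y$ (if present in $G_2$) triggers the same assignment there. The remaining subtlety, and the only place where care is required, is to argue that such a witnessing $y$ is in fact present in $G_2$: but once $y$'s decision is recorded in $G_1$, every ancestor of $y$ relevant to the decision lives in $G_1[y] = G_2[y]$, and the Clotho-candidate flag on $v$ is a predicate of $G[v]$ together with the existence of a qualifying later root. Since we are quantifying over all $v$ contained in both DAGs and only claiming that a Clotho designation in $G_1$ propagates to $G_2$ once the relevant witnesses exist in both, the statement follows by symmetry.

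The main obstacle I anticipate is not the inductive core, which is essentially mechanical, but rather the careful handling of ``partial'' views: one DAG may simply not yet contain the deciding witness $y$, even though it contains $v$. The cleanest way to handle this is to formulate Clotho candidacy as a monotone predicate on $G[v]$ plus any extension containing a decided later root, and then observe that any such extension present in one DAG, once also present in the other, yields the same decision by the inductive argument above; this matches the statement of the proposition, which only constrains behaviour for events contained in both graphs.
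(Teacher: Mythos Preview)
Your proposal is correct and follows the same reduction the paper uses: Clotho candidacy is computed by Algorithm~\ref{al:selClotho} from roots and the \texttt{forklessCause} relation, both of which have already been shown consistent across $G_1 \sim G_2$, so the decision must agree. The paper's own argument is in fact only this two-line sketch; your explicit induction on the voting round $r = y.\text{frame} - x.\text{frame}$ and your discussion of the partial-view subtlety (when the witnessing root $y$ is present in one DAG but not yet in the other) supply details the paper omits, but the underlying idea is identical.
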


The Algorithm~\ref{al:selClotho} uses computed roots, and $forklessCause$ relation to determine which roots are candidates for Clotho.
As proved in previous proposition and lemma, honest nodes have consistent DAGs, consistent roots, and so is the $forklessCause$ relation.
Therefore, the Clotho candidate selection is consistent across honest nodes.
All honest nodes will decide the same set of Clotho events from the global DAG $G^C$.

\subsubsection{Atropos}\label{pf:atropos}

Algorithm for deciding Atropos is shown in Algorithm~\ref{al:acs} in Section~\ref{se:Atropos-sel}. 

The set of roots and Clothos are consistent in all honest nodes, as proved in previous sections. Validators and their stakes are unchanged within an epoch.
The selection of first Atropos from the sorted list of Clotho is deterministic. So Atropos candidate is consistent.

\begin{defn}[Atropos consistency]
	Two DAGs $G_1$ and $G_2$ are Atropos consistent: if every $v$ contained in both DAGs, and $v$ is an Atropos in $j$-th frame in $G_1$, then $v$ is an Atropos in $j$-th frame in $G_2$.
\end{defn}
	
Thus, we have the following proposition.	
\begin{prop}
	If $G_1 \sim G_2$, then $G_1$ and $G_2$ are Atropos consistent.
\end{prop}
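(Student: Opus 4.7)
The plan is to invoke the Clotho-consistency proposition proved just above and then observe that Algorithm~\ref{al:acs} is a purely deterministic function of three inputs: (a) the map \texttt{decidedRoots} at frame $j$ together with the \texttt{candidate} boolean attached to each decided root, (b) the list of validators with their stakes, and (c) the validator identifiers used for tiebreaking. Inputs (b) and (c) are fixed for the epoch and therefore identical on every honest node, so the entire argument reduces to showing that input (a), restricted to events present in both $G_1$ and $G_2$, agrees on the two DAGs.

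First I would fix an event $v$ that is Atropos in frame $j$ of $G_1$ and is contained in both DAGs. By the Clotho-consistency proposition, $v$ is a Clotho candidate in frame $j$ of $G_2$ as well, and for every other event $u$ lying in both DAGs, $u$'s frame number and its \texttt{candidate} value coincide in $G_1$ and $G_2$. Next I would run Algorithm~\ref{al:acs} simultaneously on $G_1$ and $G_2$: the outer loop iterates over the identical sorted validator list, and at each iteration the algorithm inspects \texttt{decidedRoots[validator]}. For every validator preceding $cr(v)$ in the sorted order, the fact that $v$ was returned as Atropos in $G_1$ means that validator's entry is either \texttt{NULL} or decided with \texttt{candidate = FALSE} in $G_1$; by Clotho consistency the same holds in $G_2$, so the loop progresses past those validators in the same way. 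When the loop reaches $cr(v)$, the entry is $v$ itself with \texttt{candidate = TRUE} in both DAGs, and the algorithm returns $v$ on both sides.

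The main obstacle is the asymmetry in the hypothesis "contained in both DAGs": it applies to $v$ but not \emph{a priori} to every witnessing root that Algorithm~\ref{al:acs} inspects while traversing the validator list before $cr(v)$. To close this gap I would argue that if any such preceding validator's decided root $u$ were absent from $G_2$, then the Clotho decision of $u$ in $G_1$ relies on a QUORUM of forkless-cause witnesses in $G_1[u]$; but $G_1 \sim G_2$ together with the fact that these witnesses lie in $G_1[v]$ (they are needed before the Atropos loop can skip $u$) forces them into $G_2$ as well, and hence into $G_2[u]$ with identical subgraphs, contradicting the supposed absence of $u$. Making this precise amounts to a short induction on the position in the sorted validator list, each step of which invokes root-consistency and Clotho-consistency on strictly smaller subgraphs, so no new machinery beyond the preceding propositions is required.
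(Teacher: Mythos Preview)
Your core approach matches the paper's exactly: invoke Clotho consistency, observe that the validator list and stakes are fixed within an epoch, and note that Algorithm~\ref{al:acs} is a deterministic function of those inputs. The paper's own argument is in fact just the short paragraph immediately preceding the proposition and stops there; it does not attempt the finer analysis you carry out in your ``main obstacle'' paragraph.

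That extra paragraph, however, contains a technical error. You assert that the QUORUM witnesses deciding a preceding validator's root $u$ at frame $j$ ``lie in $G_1[v]$'', but they do not: by Algorithm~\ref{al:selClotho} those witnesses are roots at frames $j{+}1, j{+}2,\dots$, hence later than (and not ancestors of) any frame-$j$ event, whereas $G_1[v]$ is by definition the subgraph of \emph{ancestors} of $v$. Consequently the hypothesis $G_1\sim G_2$, which acts only through the identity $G_1[w]=G_2[w]$ for common events $w$, cannot by itself force those higher-frame witnesses into $G_2$. Your instinct that something needs to be said here is sound --- the paper simply glosses over this point --- but the mechanism you sketch does not close the gap; a correct argument would instead have to reason about the later-frame roots that actually carried out the voting, not about $G_1[v]$.
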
	

All honest nodes will decide the same set of Atropos events from the global DAG $G^C$.

\subsubsection{Block}\label{pf:block}

After a new Atropos is decided, the Atropos is assigned a consensus time using the Atropos's median timestamp. The median timestamp is the median of the times sent by all nodes.
The median time makes sure that the consensus time is BFT, even in case there may exist Byzantine nodes accounting for up to 1/3 of the validation power.

When an Atropos is decided and assigned a consensus time, a new block is created. The events under the Atropos $a_i$ but not included yet on previous blocks will be sorted using a topological sorting algorithms outlined in Section~\ref{se:blocks} and ~\ref{se:toposort}. The subgraph $G[a_i]$ under the Atropos $a_i$ is consistent (as proved previous lemma). The sorting algorithms are deterministic, and thus the sorted list of events will be the same on all honest nodes.

After sorting, all these events under the new Atropos will be added into the new block, and they are all assigned the same consensus time, which is the Atropos's consensus time. Transactions of these events are added into the new block in the sorted order.

\begin{defn}[Block Consistency]
	Two DAGs $G_1$ and $G_2$ are Block consistent: if a block $b_i$ at position $i$ will consist of the same sequence of event blocks and transactions, for every $i$.
\end{defn}

From the above, event blocks in the subgraph rooted at the Atropos are also final events. 
It leads to the following proposition:
\begin{prop}
	If $G_1 \sim G_2$, then $G_1$ and $G_2$ are Block consistent.
\end{prop}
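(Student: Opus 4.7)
The plan is to chain this final statement onto the hierarchy of consistency results already established: DAG consistency $\Rightarrow$ root consistency $\Rightarrow$ Clotho consistency $\Rightarrow$ Atropos consistency $\Rightarrow$ Block consistency. By Atropos consistency, the sequence of Atropos events contained in both $G_1$ and $G_2$ is identical, and each shared Atropos $a_i$ occupies the same position in the ordered list of decided Atroposes. So the skeleton of the proof is an induction on the block index $i$, showing that once blocks $b_0,\dots,b_{i-1}$ are identical in both DAGs, block $b_i$ must also be identical.

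For the inductive step I would unfold the construction described in Section~\ref{se:blocks} and Section~\ref{se:toposort}. First, $G_1[a_i]=G_2[a_i]$ by consistency of DAGs, so the vertex set available for block $b_i$ is the same. Second, the set $U$ of previously-processed events is the same by the induction hypothesis, hence $V_a\setminus U$ is identical in both DAGs. Third, the call to \textsc{SortByLayer} depends only on layer, Lamport timestamp, and hash of the events, all of which are intrinsic properties of events present in both $G_1[a_i]$ and $G_2[a_i]$; thus the sorted list $T$ produced and appended to $S$ is identical. Fourth, the consensus time attached to $b_i$ is the \texttt{event.MedianTime} of $a_i$, computed as a weighted median of highest-observed CreationTimes from each validator excluding cheaters; since the subgraph under $a_i$, the validator set and their stakes within an epoch, and the cheater set are all determined by the shared subgraph, the median is the same in both DAGs. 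Finally, transactions inside each event are appended in the already-fixed event order, so the transaction sequence of $b_i$ is also identical.

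The base case $i=0$ follows from the same argument applied to the first decided Atropos with $U=\emptyset$. Combining these steps gives equality of the block sequences as ordered lists of event blocks and transactions, which is exactly the definition of Block consistency.

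The main obstacle I anticipate is not the combinatorial ordering but showing that the median timestamp really coincides on the two DAGs. One must verify that the set of ``highest observed events from each validator'' used by the weighted median is a function of the subgraph $G[a_i]$ alone (not of any events outside it), and that the identification of cheater validators is likewise determined by information already present in that subgraph; otherwise an honest node might compute a different weighted median on a larger local view. Once this locality property of \texttt{event.MedianTime} is pinned down, the rest of the argument reduces to checking that each ingredient of the block-building algorithm is a deterministic function of data that consistency of DAGs has already fixed.
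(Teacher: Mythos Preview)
Your approach is essentially the same as the paper's: it argues (in the text preceding the proposition rather than in a formal proof environment) that Atropos consistency plus $G_1[a_i]=G_2[a_i]$ plus determinism of the topological sorting yields identical blocks and consensus times. Your induction on the block index and explicit handling of the set $U$ of already-processed events make the argument tighter than the paper's sketch, but the underlying decomposition is the same.

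One remark on the obstacle you flag. The quantity \texttt{event.MedianTime} is listed in the paper as a \emph{field of the event block structure}, computed by the creator at emission time and included in the signed event. It is therefore intrinsic data of the event $a_i$, not a value each node recomputes from its own local view. Consequently, once $a_i$ is present in both $G_1$ and $G_2$, its \texttt{MedianTime} field is automatically identical in the two DAGs, and the locality concern you raise does not arise. With that observation your proof goes through without further work.
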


Let $\prec$ denote an arbitrary total ordering  of the nodes (processes) $p_i$ and $p_j$. 	

\dfnn{Total ordering}{\emph{Total ordering} is a relation $\Rightarrow$ satisfying the following: for any event $v_i$ in $p_i$ and any event $v_j$ in $p_j$, $v_i \Rightarrow v_j$ if and only if either (i) $C_i(v_i) < C_j(v_j)$ or (ii) $C_i(v_i)$=$C_j(v_j)$ and $p_i \prec p_j$.}

This defines a total ordering relation. The Clock Condition implies that if $v_i \rightarrow v_j$ then $v_i \Rightarrow v_j$. 

\subsubsection{Main chain}

The Main chain consists of new blocks, each of which is created based on a new Atropos.

\begin{defn}[Consistent Chain]
	Two DAGs $G_1$ and $G_2$ are Chain consistent: if $G_1$ and $G_2$ are Block consistent and they have the same number of blocks.
\end{defn}
\begin{prop}
	All honest nodes are Chain consistent.
\end{prop}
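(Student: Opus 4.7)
The plan is to reduce Chain consistency to the chain of consistency results already established in the preceding subsections: Consistent DAG $\Rightarrow$ Root consistency $\Rightarrow$ Clotho consistency $\Rightarrow$ Atropos consistency $\Rightarrow$ Block consistency, and then upgrade Block consistency to Chain consistency by showing that the two chains have the same length whenever we compare corresponding finalized prefixes. Concretely, fix two honest nodes with local DAGs $G_1$ and $G_2$. By Theorem on consistent DAGs, $G_1 \sim G_2$, so by the propositions in Sections~\ref{pf:root}--\ref{pf:block} they are Root, Clotho, Atropos, and Block consistent. This immediately gives the first half of Chain consistency.

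For the second half (same number of blocks), I would argue via the global DAG $G^C$. By the lemma that $G^C$ is root consistent with every $G_i$, and the subsequent Clotho/Atropos consistency propositions, the ordered sequence of Atroposes decided in $G^C$ is the unique reference sequence $a_1, a_2, \ldots$. Each honest node's Main chain is obtained from its local view by the deterministic procedure (decide Clotho via Algorithm~\ref{al:selClotho}, select Atropos via Algorithm~\ref{al:acs}, then order events in $G[a_i]\setminus \bigcup_{j<i} G[a_j]$ with \textsc{TopoSort}). Because each step is purely a function of the consistent subgraph $G^C[a_i]$, block $b_i$ produced locally equals $b_i$ produced from $G^C$. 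Thus the chains of $G_1$ and $G_2$, restricted to any index $i$ that has been decided by both, coincide block-by-block, which is precisely the statement that the two chains agree up to the shorter of their two lengths.

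To close the gap on equal length, I would invoke liveness of the Atropos selection together with DAG synchronization: once an honest node decides $a_i$ in $G^C$, the subgraph $G^C[a_i]$ is finite and, by the gossip/peer synchronization assumption, is eventually contained in every other honest node's local DAG; since the decision procedure is deterministic and monotone on subgraphs (a root colored \texttt{IS-CLOTHO} in $G^C[a_i]$ is colored identically in any extension, by the Clotho voting invariants already proved), every honest node eventually decides the same $a_i$ and emits the same $b_i$. Consequently, comparing two honest nodes at the same logical state (e.g.\ the end of an epoch, or in the limit of synchronization) yields identical chain lengths, completing Chain consistency.

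The main obstacle is the length-equality clause, because at any fixed wall-clock instant two honest nodes may trivially disagree on the number of finalized blocks: one may simply not have received the events needed to finalize $a_i$ yet. The proof must therefore be phrased as eventual equality (or equality relative to $G^C$), and care is needed to state precisely \emph{when} the two chains are compared. I would make this explicit by defining chain consistency relative to the global DAG $G^C$ and noting that any disagreement between $G_1$ and $G_2$ in chain length is strictly a matter of one node being a DAG-prefix of the other, which does not violate Block consistency on the common prefix.
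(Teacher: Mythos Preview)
Your overall reduction chain (consistent DAGs $\Rightarrow$ roots $\Rightarrow$ Clothos $\Rightarrow$ Atroposes $\Rightarrow$ Blocks $\Rightarrow$ Chain) matches exactly what the paper does, but you should know that the paper itself provides \emph{no explicit proof} for this proposition: it is stated bare, immediately after the Block-consistency proposition and definition of Chain consistency, with the next \texttt{proof} environment belonging to the Finality theorem. In effect the paper treats Chain consistency as a direct corollary of Block consistency and leaves the ``same number of blocks'' clause implicit.

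Where your proposal differs is that you actually engage with that second clause. Your observation is correct and is, if anything, \emph{more} careful than the paper: at a fixed wall-clock instant two honest nodes need not have the same chain length, so the claim only makes sense relative to the global DAG $G^C$ or as an eventual/prefix statement. Your resolution---interpret Chain consistency relative to $G^C$, argue that each $b_i$ is a deterministic function of $G^C[a_i]$, and invoke gossip to eventually propagate $G^C[a_i]$ to all honest nodes---is the right way to make the statement precise. The paper simply does not address this subtlety; it appears to intend exactly the ``equal on the common prefix, and the prefixes eventually catch up'' reading you supply, but never says so. So your argument is not a different route so much as a completion of a gap the paper leaves open.

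One small caution: your appeal to ``liveness of the Atropos selection'' and the monotonicity of Clotho coloring under DAG extension is reasonable given Algorithm~\ref{al:selClotho}, but the paper never states or proves a monotonicity lemma of that form. If you want the length-equality step to be fully rigorous, you would need to extract that monotonicity from the voting rules (once a root accumulates $\geq$ \textsc{Quorum} yes-votes in some view, it retains them in any extension), which the paper hints at in its bullet points after Algorithm~\ref{al:selClotho} but does not formalize.
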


\begin{thm}[Finality]
	\label{thm:finality}
	All honest nodes produce the same ordering and consensus time for finalized event blocks.
\end{thm}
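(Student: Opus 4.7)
The plan is to derive Theorem~\ref{thm:finality} as a direct consequence of the chain of consistency results already assembled in the excerpt. First I would invoke Theorem~\ref{thm:con-dags} to assert that for any two honest nodes with local views $G_1$ and $G_2$, we have $G_1 \sim G_2$ on their common subgraph, and in particular each honest view extends the global DAG $G^C$. Then I would apply in sequence the root consistency, Clotho consistency, Atropos consistency, block consistency, and chain consistency propositions, each already shown to follow from $G_1 \sim G_2$. The net effect is that the sequence of Atropos blocks, together with the induced subgraphs $G^C[a]$ beneath each Atropos $a$, is the same on every pair of honest nodes once finality is reached.

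Next I would address the ordering claim within a single finalized block. By Section~\ref{se:blocks} and Algorithm~\ref{algo:topoordering}, once an Atropos $a$ is decided, the block built from $a$ consists exactly of the vertices of $G^C[a]$ not yet included in earlier blocks, sorted by layer, Lamport timestamp, and event hash. Layers and Lamport timestamps are deterministic functions of the DAG structure alone, and cryptographic hashes are fixed, so the sort is a pure function of $G^C[a]$ and of the sequence of previously finalized Atroposes. By the consistency chain above this input is the same on every honest node, hence so is the output ordering.

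I would then treat the consensus time claim analogously. Each Atropos $a$ is assigned the timestamp $\texttt{event.MedianTime}$, computed as the stake-weighted median of the highest observed $\texttt{CreationTime}$ per non-cheater validator in $G^C[a]$. Because $G^C[a]$, the validator set, and the stakes within an epoch are identical across honest nodes, the multiset feeding the weighted median is identical, so the median evaluates to the same value everywhere. All events in the new block inherit this Atropos timestamp, giving identical consensus times across honest nodes.

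The main obstacle is a bookkeeping one: at a given real moment one honest node may know strictly more events than another, so one must argue that finalization never depends on this surplus. I would handle this by observing that a block is emitted only once its Atropos has been decided, that Atropos decisions are themselves functions of the roots lying in $G^C$ (by root and Clotho consistency), and that the block's contents and ordering are functions of $G^C[a]$ alone. Hence both the ordering and the timestamps of finalized events are determined by $G^C$ and are independent of any individual honest node's extra knowledge, which is precisely the statement of Theorem~\ref{thm:finality}.
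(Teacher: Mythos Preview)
Your proposal is correct and follows essentially the same route as the paper: chain together the consistency results (consistent DAGs $\Rightarrow$ root consistency $\Rightarrow$ Clotho consistency $\Rightarrow$ Atropos consistency $\Rightarrow$ block consistency), then observe that the topological sort and the Atropos median timestamp are deterministic functions of the shared subgraph $G^C[a]$. The paper's own proof is terser and simply cites Sections~\ref{pf:root}--\ref{pf:block} in sequence, so your version is if anything more explicit about why the sort keys (layer, Lamport timestamp, hash) and the median-time input are identical across honest nodes, and your final paragraph addressing the ``surplus knowledge'' issue is a nice clarification that the paper leaves implicit.
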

\begin{proof}
	As proved in the above Section~\ref{pf:root}, the set of roots is consistent across the nodes. From section~\ref{pf:clotho}, the set of Clotho events is consistent across the nodes. In Section~\ref{pf:atropos}, our topological sorting of Clotho, which uses validator'stake, Lamport's timestamp is deterministic, and so it will give the same ordering of Atropos events. The Atropos's timestamp is consistent between the nodes, as proved in Section~\ref{pf:atropos}. Section~\ref{pf:block} outlines a proof that the topology ordering of vertices in each subgraph under an Atropos is the same across the nodes. Also finalized event blocks in each block is assigned the same consensus time, which is the Atropos's consensus timestamp.
\end{proof}

\subsubsection{Fork free}

\begin{defn}[Fork]
	The pair of events ($v_x$, $v_y$) is a fork if $v_x$ and $v_y$ have the same creator, but neither is a self-ancestor of the other. Denoted by $v_x \efork v_y$.
\end{defn}
For example, let $v_z$ be an event in node $n_1$ and two child events $v_x$ and $v_y$ of $v_z$. if $v_x \eself v_z$, $v_y \eself v_z$, $v_x \not \eself v_y$, $v_y \not \eself v_z$, then ($v_x$, $v_y$) is a fork.
The fork relation is symmetric; that is $v_x \efork v_y$ iff $v_y \efork v_x$.
\begin{lem}
	$v_x \efork v_y$ iff $cr(v_x)=cr(v_y)$ and $v_x \concur v_y$.
\end{lem}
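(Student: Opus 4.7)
The plan is to prove both directions by unfolding the definitions of $\efork$, $\concur$, $\hbefore$, $\eancestor$, and $\eselfancestor$, and then invoking the Happened-Before Lemma ($v_x \hbefore v_y$ iff $v_y \eancestor v_x$). The crux is a single structural bridge: for two events with the same creator, being a self-ancestor and being an ancestor coincide. Once this bridge is in hand, each direction is just symbol chasing.

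For the forward direction, I would assume $v_x \efork v_y$. By the Fork definition this gives $cr(v_x) = cr(v_y)$ together with $v_x \not\eselfancestor v_y$ and $v_y \not\eselfancestor v_x$. Using the bridge claim, I would upgrade the two self-ancestor negations to ancestor negations, obtaining $v_x \not\eancestor v_y$ and $v_y \not\eancestor v_x$. Applying the Happened-Before Lemma in its contrapositive form then yields $v_x \not\hbefore v_y$ and $v_y \not\hbefore v_x$, which is exactly $v_x \concur v_y$.

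For the backward direction, I would assume $cr(v_x) = cr(v_y)$ and $v_x \concur v_y$. By the definition of concurrent, neither $v_x \hbefore v_y$ nor $v_y \hbefore v_x$. The Happened-Before Lemma then produces neither $v_y \eancestor v_x$ nor $v_x \eancestor v_y$. Since every self-ancestor path is a special case of an ancestor path, this immediately gives neither $v_x \eselfancestor v_y$ nor $v_y \eselfancestor v_x$; combined with the same-creator hypothesis, this meets the Fork definition and so $v_x \efork v_y$.

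The main obstacle is the bridge step used in the forward direction: showing that, for two events of the same creator $p$, being an ancestor forces being a self-ancestor. The intuition is that any reference path between two events of $p$ must factor through $p$'s self-parent chain, since other-refs point to events of different creators and each event has a unique self-parent, so once the path leaves $p$'s chain via an other-ref it cannot re-enter at an earlier event of $p$ without violating acyclicity of the DAG. I would formalize this by induction on the length of a putative ancestor path $v_x \erefz u_1 \erefz \dots \erefz v_y$ with $cr(v_x) = cr(v_y) = p$, shortening the path whenever it detours off $p$'s self-parent chain and eventually reducing it to a pure self-ancestor chain. I also note that Section~\ref{se:forkdetection} already restates the Fork condition using $\not\eancestor$ in place of $\not\eselfancestor$, so that restatement may be cited directly, and the lemma then collapses to an immediate consequence of the Happened-Before Lemma plus the definition of $\concur$.
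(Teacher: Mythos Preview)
Your closing remark---citing the restatement of the fork condition in Section~\ref{se:forkdetection} with $\not\eancestor$ in place of $\not\eselfancestor$, then applying the Happened-Before Lemma and the definition of $\concur$---is precisely the paper's proof. The paper simply writes ``By definition, $(v_x,v_y)$ is a fork if $cr(v_x)=cr(v_y)$, $v_x \not\eancestor v_y$ and $v_y \not\eancestor v_x$,'' and the rest is immediate.

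Your more careful route via the ``bridge'' claim, however, does not go through. The assertion that \emph{for two events of the same creator, ancestor implies self-ancestor} is false exactly in the situation the lemma concerns: a dishonest creator. Concretely, let node $p$ create $v_z$, then fork by creating $v_x$ and $v_y$ each with self-parent $v_z$; let some other node $q$ create $v_w$ with an other-ref to $v_x$; finally let $p$ create $v_u$ with self-parent $v_y$ and other-ref $v_w$. Then $v_u \eancestor v_x$ via $v_u \erefz v_w \erefz v_x$, yet $v_u \not\eselfancestor v_x$ since $v_u$'s self-chain is $v_u \eself v_y \eself v_z$. Your inductive sketch breaks at the step ``once the path leaves $p$'s chain via an other-ref it cannot re-enter at an earlier event of $p$ without violating acyclicity'': the path above re-enters at $v_x$ with no cycle created. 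In fact this example shows that, under the self-ancestor definition of fork, $(v_u,v_x)$ is a fork while $v_x \hbefore v_u$, so the forward direction of the lemma fails outright for that definition.

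In short: the paper silently trades the self-ancestor formulation for the ancestor formulation and proves the lemma for the latter; your attempt to justify that trade reveals that the two formulations are genuinely inequivalent, so the only viable route is the one you list last---take the $\not\eancestor$ version as the operative definition and proceed exactly as the paper does.
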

\begin{proof}
	By definition, ($v_x$, $v_y$) is a fork if $cr(v_x)=cr(v_y)$, $v_x \not \eancestor v_y$ and $v_y \not \eancestor v_x$. Using Happened-Before, the second part means $v_x \not \rightarrow v_y$ and $v_y \not \rightarrow v_x$. By definition of concurrent, we get $v_x \concur v_y$.
\end{proof}

\begin{lem} (Fork detection). If there is a fork $v_x \efork  v_y$, then $v_x$ and $v_y$ cannot both be roots on honest nodes.
\end{lem}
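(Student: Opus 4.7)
The plan is to argue by contradiction, mirroring the intuition already sketched after the statement: if two forked events could both be roots at honest nodes, then an overlap argument on the 2W/3 sets would force at least one honest validator to have endorsed both sides of the fork, contradicting honesty. So I would assume for contradiction that $v_x$ is a root at an honest node $n_x$ and $v_y$ is a root at an honest node $n_y$, while $v_x \efork v_y$.

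First I would unpack the root definition. By the definition of root given in Section~\ref{pf:root}, for $v_x$ to be a root (other than a leaf, which trivially cannot participate in a fork at an honest node) it must be forkless caused by a set $S_x$ of previous-frame roots whose creators carry more than $2W/3$ of total validating power; similarly $v_y$ is forkless caused by a set $S_y$ of creators with more than $2W/3$ of validating power. Both witnesses live in the consistent subgraphs $G[v_x]$ and $G[v_y]$, so their creators are well defined across honest nodes by Theorem~\ref{thm:con-dags}.

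Next I would apply a standard quorum-intersection computation. Letting $w(S)$ denote the total validating power of creators in $S$, we have
\[
w(S_x \cap S_y) \;\ge\; w(S_x) + w(S_y) - W \;>\; \tfrac{2W}{3} + \tfrac{2W}{3} - W \;=\; \tfrac{W}{3}.
\]
Since the Byzantine validating power is assumed to be strictly less than $W/3$, the intersection $S_x \cap S_y$ must contain at least one honest validator $n_m$.

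The final step is to derive the contradiction from the definition of \texttt{forklessCause}. Being in $S_x$ means $n_m$ has a root that forkless-causes $v_x$; being in $S_y$ means $n_m$ has a root that forkless-causes $v_y$. By the very meaning of the forkless-cause relation, the subgraphs of $v_x$ and $v_y$ each contain a contribution from $n_m$ in which no fork from $cr(v_x)=cr(v_y)$ is observed. But $v_x$ and $v_y$ themselves are a fork from that creator, so at least one of the two forkless-cause relations must fail at $n_m$ once $n_m$'s view is extended to include both. Hence $n_m$ cannot honestly have supported both roots; contradiction. The main obstacle is stating this last step cleanly: the forkless-cause relation is non-transitive and depends on the observer's local subgraph, so care is needed to phrase the contradiction in terms of the global DAG $G^C$ and its root consistency (Section~\ref{pf:root}), rather than in any single node's intermediate local view.
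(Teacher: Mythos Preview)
Your proposal is correct and follows essentially the same line as the paper's proof: assume for contradiction that both forked events are roots at honest nodes, use the $2W/3$ threshold for roots to get an overlap of more than $W/3$ validating power, invoke the $<W/3$ Byzantine assumption to find an honest validator $n_m$ in the overlap, and conclude that $n_m$ would detect and disallow the fork. The paper states the last step more tersely (``$n_m$ is honest, $n_m$ can detect and it does not allow the fork''), whereas you unpack it via the \texttt{forklessCause} relation; but the structure and key idea are identical.
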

\begin{proof}
	Here, we show a proof by contradiction. Any honest node cannot accept a fork so $v_x$ and $v_y$ cannot be roots on the same honest node. Now we prove a more general case. Suppose that both $v_x$ is a root on node $n_x$ and $v_y$ is root on node $n_y$, where $n_x$ and $n_y$ are honest nodes. Since $v_x$ is a root, it reached events created by more than $2/3$ of total validating power. Similarly, $v_y$ is a root, it reached events created by more than $2W/3$. Thus, there must be an overlap of more than $W$/3 members of those events in both sets. Since Lachesis protocol is 1/3-aBFT, we assume less than  1/3 of total validating power are from malicious nodes, so there must be at least one honest member in the overlap set. Let $n_m$ be such an honest member. Because $n_m$ is honest, $n_m$ can detect and it does not allow the fork. This contradicts the assumption. Thus, the lemma is proved.
\end{proof}

\begin{prop}[Fork-free]
	For any Clotho $c$, the subgraph $G[c]$ is fork-free.
\end{prop}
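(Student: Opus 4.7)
The plan is to proceed by contradiction. Suppose there exist events $v_x, v_y \in G[c]$ with $v_x \efork v_y$; let $n_m = cr(v_x) = cr(v_y)$, which must be a malicious (cheater) validator since $v_x \concur v_y$ by the fork lemma. Our goal is to show that the existence of such a fork inside the subgraph observable from $c$ is incompatible with $c$ being successfully decided as a Clotho.

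First, I would unfold what it means for $c$ to be a Clotho. By the Clotho-selection algorithm (Algorithm~\ref{al:selClotho}), $c$ is decided positively only when a set of voting roots with combined stake at least $Q = 2W/3 + 1$ votes YES for $c$, and in round 1 those YES votes come precisely from roots $y$ with $forklessCause(y, c)$. By definition of being a root, each such voter $y$ is itself $forklessCause$-d by QUORUM roots of the previous frame, and so on down to $c$'s own frame. Thus the entire support for $c$'s candidacy sits within $c$'s extended observation cone, and each link in this chain is a $forklessCause$ relation requiring the observer's subgraph to contain no forks from the observed event's creator.

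Next, I would argue that a fork from $n_m$ inside $G[c]$ forces one of these $forklessCause$ checks to fail. Because the QUORUM voting set accounts for stake strictly greater than $2W/3$ while malicious stake is strictly less than $W/3$, the voting set is dominated by honest validators, and moreover any two QUORUM-sized subsets of validators overlap in more than $W/3$ stake, which by the Byzantine bound must contain honest members. I would use this intersection argument, in the same style as the Fork Detection Lemma already proved, to exhibit an honest voter $y$ whose own local view must contain both branches $v_x$ and $v_y$ (either by both appearing in $G[y]$, or by the voter's forklessCause check against an intermediate root whose subgraph contains the fork). Such a voter would detect the fork from $n_m$ and hence would not certify the corresponding $forklessCause$ used in the voting chain supporting $c$. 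This contradicts $c$ achieving $Q$ YES votes.

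The main obstacle I anticipate is bridging the gap between what $forklessCause$ literally forbids (forks from a single specified creator $y$'s creator) and the stronger conclusion that $G[c]$ contains no forks from any creator. The right way to close this gap is to observe that the Clotho decision chains $forklessCause$ across QUORUM-many previous-frame roots at every level, so by the intersection/pigeonhole argument above, every creator whose events appear in $G[c]$ is effectively checked by an honest voter somewhere in the chain; a fork from $n_m$ therefore cannot hide from all honest voters. Once that combinatorial step is in place, the contradiction with the Fork Detection Lemma and the $1/3$-BFT stake bound closes the proof.
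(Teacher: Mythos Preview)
Your approach diverges substantially from the paper's treatment. The paper does not give a standalone proof of this proposition: the proof block that follows it in the text attaches to the subsequent \emph{Fork-free Lemma} (``if there is a fork $v_x \efork v_y$, this fork will be detected by a Clotho''), and that argument is a bare two-Clotho quorum-intersection count --- each Clotho reaches more than $2W/3$ validating power, the overlap exceeds $W/3$, so some honest node sees both branches and detects the fork. Nothing in the paper's text actually derives ``$G[c]$ is fork-free'' from that detection statement; the proposition is asserted and then the adjacent lemma is proved instead.

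Your plan, by contrast, tries to prove the proposition directly via the $forklessCause$ machinery, and you correctly flag the real obstacle: $forklessCause(x,y)$ only certifies the absence of forks from \emph{$y$'s creator} in $x$'s subgraph, not from an arbitrary malicious $n_m$. Your proposed bridge --- that ``every creator whose events appear in $G[c]$ is effectively checked by an honest voter somewhere in the chain'' --- does not close this gap. A malicious $n_m$ whose events sit in $G[c]$ but who never produces a root that appears as the second argument of any $forklessCause$ call in the voting chain deciding $c$ is never the creator being checked, so nothing in your argument forces its fork to obstruct any $forklessCause$ evaluation. The quorum-intersection step gives you overlap among \emph{voters}, not coverage of all \emph{creators} present in $G[c]$; as written, the plan does not reach a contradiction.

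If you want to salvage something, note that the paper's own development really only supports the weaker detection claim of the Lemma. The proposition as stated is stronger than what the protocol argument actually establishes: a fork can appear in $G[c]$; what the protocol ensures is that such a fork is \emph{detected} and the cheater is thereafter excluded from quorum counts. Aligning your target with the Lemma and using the simple quorum-intersection argument the paper uses would be the cleaner route.
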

\begin{lem}[Fork free lemma]\label{lem:fork}
	If there is a fork $v_x \efork v_y$, this fork will be detected by a Clotho.
\end{lem}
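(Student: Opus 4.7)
The plan is to combine the preceding Fork detection lemma with the liveness of gossip and the structural definition of Clotho. First I would observe that by the 1/3-aBFT assumption, honest validators control strictly more than $2W/3$ of total validating power, so any QUORUM-sized set of roots in any frame must contain at least one honest root. Next, I would invoke the gossip/propagation property implicit in the protocol: since events are continually disseminated, both $v_x$ and $v_y$ eventually appear in the subgraph of some future honest root $r$ in a frame sufficiently later than the frames of $v_x$ and $v_y$. At that point $r$ sees two events from the same creator $cr(v_x)=cr(v_y)$ that are concurrent (by the lemma $v_x \efork v_y \iff cr(v_x)=cr(v_y) \wedge v_x \concur v_y$), so $r$ structurally detects the fork in its local view of the DAG.

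Second, I would connect this local detection to the Clotho selection procedure of Section~\ref{se:Clotho-sel}. Recall that the relation $forklessCause(\cdot,\cdot)$ returns false whenever a fork from the referenced creator is observed. Consequently, once a frame is reached in which QUORUM honest roots have both $v_x$ and $v_y$ in their subgraphs, no further root in subsequent frames can be forkless caused by either $v_x$ or $v_y$. By the preceding Fork detection lemma, $v_x$ and $v_y$ cannot both be roots on honest nodes, and even if one of them were mistakenly treated as a root, the voting procedure in Algorithm~\ref{al:selClotho} would, on round $\ge 2$, receive QUORUM negative votes for it, so it is decided \texttt{IS-NOT-CLOTHO}. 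This shows fork participants themselves never survive Clotho election.

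Third, to argue that the fork is actually \emph{detected by a Clotho} (as opposed to merely preventing the fork events from becoming Clothos), I would exhibit an honest Clotho whose subgraph witnesses the fork. By Theorem~\ref{thm:con-dags} honest nodes maintain consistent DAGs, and by the proofs in Section~\ref{pf:root}--\ref{pf:clotho} the sequence of Clothos is infinite and consistent across honest nodes. Therefore some honest Clotho $c$ at a sufficiently late frame will include both $v_x$ and $v_y$ in $G[c]$, at which point $cr(v_x)$ is identified as a cheater by the standard fork-detection check, exactly as the above paragraph used to reject votes. This Clotho $c$ is the desired witness.

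The main obstacle I anticipate is formalising the ``eventually'' quantifier: the statement only makes sense if the DAG continues to grow and honest validators continue to emit roots. In an aBFT but otherwise asynchronous model, this is the liveness property one inherits from the assumption that honest validators account for more than $2W/3$ of stake together with the gossip mechanism's finite (but unbounded) message-delivery guarantee. I would either state this liveness assumption explicitly or cite the consistency/finality argument (Theorem~\ref{thm:finality}) which already presupposes that new frames, Clothos, and Atropos events are eventually produced.
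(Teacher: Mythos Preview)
Your plan is defensible, but it takes a genuinely different route from the paper. The paper's proof is a short quorum-intersection counting argument: it supposes there are Clothos $c_x$ and $c_y$ each reaching one side of the fork, notes that a Clotho is confirmed by more than $2W/3$ of validating power, and therefore the two confirming sets overlap in more than $W/3$; since fewer than $W/3$ are Byzantine, some honest node lies in the overlap, sees both $v_x$ and $v_y$, and detects the fork. No appeal to liveness, gossip propagation, or the eventual production of later frames is made; the argument is purely combinatorial on the weights attached to any pair of Clothos that could cover the two branches.

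By contrast, your argument is a liveness/eventuality argument: you wait until both $v_x$ and $v_y$ have propagated into the subgraph of a sufficiently late honest Clotho, and you spend effort showing that the fork participants themselves cannot survive the election in Algorithm~\ref{al:selClotho}. That second part is not needed for the lemma as stated (the claim is only that \emph{some} Clotho detects the fork, not that forked events fail to become Clothos), and the first part requires the explicit liveness hypothesis you flag at the end. What your approach buys is a cleaner identification of a single witnessing Clotho $c$ with $v_x,v_y\in G[c]$, which is exactly what the subsequent ``Fork-free'' proposition and Theorem~\ref{thm:same} want; what the paper's approach buys is that it works without any temporal or delivery assumption, using only the stake arithmetic already in hand. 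Either line is acceptable here, but be aware that your extra machinery (the discussion of \texttt{forklessCause} and the \texttt{IS-NOT-CLOTHO} vote) goes beyond what the lemma asks and could be trimmed.
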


\begin{proof}
	Suppose that a node creates two event blocks ($v_x, v_y$), which forms a fork. Let's assume there exist two Clotho $c_x$ and $c_y$ that can reach both events. From Algorithm~\ref{al:selClotho}, each Clotho must reach more than $2W/3$. Thus, there must more than $W/3$ of honest nodes would know both $c_x$ and $c_y$. Lachesis protocol assumes that less than $W/3$ are from malicious nodes. Therefore, there must exist at least a node that knows both Clothos $c_x$ and $c_y$. Then that node must know both $v_x$ and $v_y$. So the node will detect the fork. This is a contradiction. The lemma is proved.
\end{proof}

\begin{thm}[Fork Absence]\label{thm:same}
	All honest nodes build the same global DAG $G^C$, which contains no fork.
\end{thm}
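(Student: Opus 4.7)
The plan is to decompose the theorem into two claims: (i) the object $G^C$ is uniquely determined and identical across honest nodes, and (ii) $G^C$ contains no fork. I would handle these in that order, leaning heavily on Theorem~\ref{thm:con-dags} for the first and on Lemma~\ref{lem:fork} for the second.

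For (i), I would start from Theorem~\ref{thm:con-dags}, which gives $G_i[v] = G_j[v]$ whenever $v$ lies in both $G_i$ and $G_j$. This means the family of local DAGs is coherent under intersection. I would then define $G^C$ as the induced subgraph on the vertex set consisting of those events that are eventually accepted by every honest node (equivalently, events lying in the subgraph of some finalized Atropos, using the chain/block consistency results already established). Verifying that $G^C$ satisfies the three listed properties ($G^C \sqsubseteq G_i$; $G^C[v] \sqsubseteq G_i[v]$; ancestor-closure) is then a direct application of Theorem~\ref{thm:con-dags} together with the fact that ancestors of a finalized event are themselves contained in every honest $G_i$. Since the defining condition does not depend on the particular honest node, the resulting $G^C$ is unique.

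For (ii), I would argue by contradiction. Suppose $v_x \efork v_y$ with both $v_x, v_y \in G^C$. By construction of $G^C$, both events lie in every honest node's local DAG and in the subgraph of some Clotho $c$ (again using finalization: events of $G^C$ are ordered under Atroposes, hence reachable from Clothos). Lemma~\ref{lem:fork} then says any Clotho whose subgraph contains the fork will cause the fork to be detected by at least one honest validator, who by honesty refuses to treat both $v_x$ and $v_y$ as valid; this contradicts $v_x, v_y \in G_i$ for every honest $i$. Hence no such fork exists in $G^C$.

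The main obstacle I anticipate is pinning down the exact definition of ``the'' global DAG $G^C$ so that the uniqueness argument is airtight: the text defines $G^C$ implicitly through properties rather than giving a constructive definition, so I would need to justify that the candidate object (events finalized under some Atropos) both satisfies those properties and is forced by them. Once $G^C$ is tied to finalization via Theorem~\ref{thm:finality} and the block/chain consistency propositions, both claims follow without further combinatorial work; the fork-freeness step is then essentially a citation of Lemma~\ref{lem:fork} combined with the $1/3$-Byzantine assumption.
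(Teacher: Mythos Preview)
Your proposal is correct and rests on the same two ingredients the paper uses: Theorem~\ref{thm:con-dags} for the consistency of honest local DAGs and Lemma~\ref{lem:fork} for fork detection at the Clotho level. The paper's own argument is much terser and slightly differently framed: it supposes two honest DAGs $G_1$, $G_2$ disagree on the path between a common pair $v_x$, $v_y$, treats that path discrepancy as a fork attack, and then invokes Lemma~\ref{lem:fork} to conclude the fork would be detected and removed before any Clotho is generated, forcing consistency. Your explicit two-part decomposition (uniqueness of $G^C$, then fork-freeness by contradiction) and your attempt to pin $G^C$ to finalized events are more careful than what the paper actually writes; in particular, the definitional ambiguity about $G^C$ that you flag as the main obstacle is not resolved in the paper's proof either.
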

\begin{proof}
	Suppose that there are two event blocks $v_x$ and $v_y$ contained in both $G_1$ and $G_2$, and their path between $v_x$ and $v_y$ in $G_1$ is not equal to that in $G_2$. We can consider that the path difference between the nodes is a kind of fork attack. Based on Lemma~\ref{lem:fork}, if an attacker forks an event block, each chain of $G_i$ and $G_2$ can detect and remove the fork before the Clotho is generated. Thus, any two nodes have consistent OPERA DAG. 
\end{proof}

\subsection{Semantics of Lachesis}\label{sec:semantics} 

This section gives the formal semantics of Lachesis consensus protocol.
We use CCK model \cite{cck92} of an asynchronous system as the base of the semantics of our Lachesis protocol.
We present notations and concepts, which are important for Lachesis protocol. We adapt the notations and concepts of CCK paper to suit our Lachesis protocol.
Events are ordered based on Lamport's happened-before relation. 
In particular, we use Lamport’s theory to describe global states of an asynchronous system.
 
\subsubsection{Node State}
A node is a machine participating in the Lachesis protocol.
Each node has a local state consisting of local histories, messages, event blocks, and peer information.

\dfnn{State}{A (local) state of node $i$ is denoted by $s_j^i$ consisting of a sequence of event blocks $s_j^i$=$v_0^i$, $v_1^i$, $\dots$, $v_j^i$.}

In a DAG-based protocol, each  event block $v_j^i$ is \emph{valid} only if the reference blocks exist before it. A local state $s_j^i$ is corresponding to a unique DAG. In \stakedag, we simply denote the $j$-th local state of a node $i$ by the DAG $g_j^i$. Let $G_i$ denote the current local DAG of a process $i$.

\dfnn{Action}{An action is a function from one local state to another local state.}

An action can be either: a $send(m)$ action of a message $m$, a $receive(m)$ action, and an internal action. A message $m$ is a triple $\langle i,j,B \rangle$ where the sender $i$, the message recipient $j$, and the message body $B$. 
In \stakedag, $B$ consists of the content of an event block $v$. Let $M$ denote the set of messages.
Semantics-wise, there are two actions that can change a process's local state: creating a new event and receiving an event from another process.

\dfnn{Event}{An event is a tuple $\langle  s,\alpha,s' \rangle$ consisting of a state, an action, and a state.}

Sometimes, the event can be represented by the end state $s'$. 
The $j$-th event in history $h_i$ of process $i$ is $\langle  s_{j-1}^i,\alpha,s_j^i \rangle$, denoted by $v_j^i$.

\dfnn{Local history}{A local history $h_i$ of $i$ is a sequence of local states starting with an initial state. A set $H_i$ of possible local histories for each process $i$.}

A process's state can be obtained from its initial state and the sequence of actions or events that have occurred up to the current state. Lachesis protocol uses append-only semantics. The local history may be equivalently described as either of the followings: 
$$h_i = s_0^i,\alpha_1^i,\alpha_2^i, \alpha_3^i \dots $$
$$h_i = s_0^i, v_1^i,v_2^i, v_3^i \dots $$
$$h_i = s_0^i, s_1^i, s_2^i, s_3^i, \dots$$

In Lachesis,, a local history is equivalently expressed as:
$$h_i = g_0^i, g_1^i, g_2^i, g_3^i, \dots$$
where $g_j^i$ is the $j$-th local DAG (local state) of the process $i$.

\dfnn{Run}{Each asynchronous run is a vector of local histories. Denoted by
	$\sigma$ = $\langle h_1,h_2,h_3,...h_N \rangle$.}

Let $\Sigma$ denote the set of asynchronous runs. A global state of run $\sigma$ is an $n$-vector of prefixes of local histories of $\sigma$, one prefix per process. The happened-before relation can be used to define a consistent global state, often termed a consistent cut, as follows.

\subsubsection{Consistent Cut} 

An asynchronous system consists of the following sets: a set $P$ of process identifiers; a set $C$ of channels; a set $H_i$ is the set of possible local histories for each process $i$; a set $A$ of asynchronous runs; a set $M$ of all messages.  If node $i$ selects node $j$ as a peer, then $(i,j) \in C$. In general, one can express the history of each node in DAG-based protocol in general or in Lachesis protocol in particular, in the same manner as in the CCK paper~\cite{cck92}.

We can now use Lamport’s theory to talk about global states of an asynchronous system.
A global state of run $\sigma$ is an $n$-vector of prefixes of local histories of $\sigma$, one prefix per process.
The happened-before relation can be used to define a consistent global state, often termed a consistent cut, as follows.

\dfnn{Consistent cut}{A consistent cut of a run $\sigma$ is any global state such that if $v_x^i \rightarrow v_y^j$ and $v_y^j$ is in the global state, then $v_x^i$ is also in the global state. Denoted by $\vec{c}(\sigma)$.}

By Theorem~\ref{thm:con-dags}, all nodes have consistent local OPERA DAGs. The concept of consistent cut formalizes such a global state of a run. Consistent cuts represent the concept of scalar time in distributed computation, it is possible to distinguish between a ``before'' and an ``after''. In Lachesis, a consistent cut consists of all consistent OPERA DAGs. A received event block exists in the global state implies the existence of the original event block.
Note that a consistent cut is simply a vector of local states; we will use the notation $\vec{c}(\sigma)[i]$ to indicate the local state of $i$ in cut $\vec{c}$ of run $\sigma$.

The formal semantics of an asynchronous system is given via  the satisfaction relation $\vdash$. Intuitively $\vec{c}(\sigma) \vdash \phi$, ``$\vec{c}(\sigma)$ satisfies $\phi$,'' if fact $\phi$ is true in cut $\vec{c}$ of run $\sigma$. 
We assume that we are given a function $\pi$ that assigns a truth value to each primitive proposition $p$. The truth of a primitive proposition $p$ in $\vec{c}(\sigma)$ is determined by $\pi$ and $\vec{c}$. This defines $\vec{c}(\sigma) \vdash p$.

\begin{defn}[Equivalent cuts]
	Two cuts $\vec{c}(\sigma)$ and $\vec{c'}(\sigma')$ are equivalent  with respect to $i$ if: $$\vec{c}(\sigma) \sim_i \vec{c'}(\sigma') \Leftrightarrow \vec{c}(\sigma)[i] = \vec{c'}(\sigma')[i]$$
\end{defn}

A message chain of an asynchronous run is a sequence of messages $m_1$, $m_2$, $m_3$, $\dots$, such that, for all $i$, $receive(m_i)$ $\rightarrow$  $send(m_{i+1})$. Consequently, $send(m_1)$ $\rightarrow$ $receive(m_1)$ $\rightarrow$ $send(m_2)$ $\rightarrow$ $receive(m_2)$ $\rightarrow$ $send(m_3)$ $\dots$.

We introduce two families of modal operators, denoted by $K_i$ and $P_i$, respectively. Each family indexed by process identifiers. 
Given a fact $\phi$, the modal operators are defined as follows:

\dfnn{$i$ knows $\phi$}{$K_i(\phi)$ represents the statement ``$\phi$ is true in all possible consistent global states that include $i$’s local state''. 
	$$\vec{c}(\sigma) \vdash K_i(\phi) \Leftrightarrow \forall \vec{c'}(\sigma')   (\vec{c'}(\sigma') \sim_i \vec{c}(\sigma) \ \Rightarrow\ \vec{c'}(\sigma') \vdash \phi) $$}

\dfnn{$i$ partially knows $\phi$}{$P_i(\phi)$ represents the statement ``there is some consistent global state in this run that includes $i$’s local state, in which $\phi$ is true.''
	$$\vec{c}(\sigma) \vdash P_i(\phi) \Leftrightarrow \exists \vec{c'}(\sigma) ( \vec{c'}(\sigma) \sim_i \vec{c}(\sigma) \ \wedge\ \vec{c'}(\sigma) \vdash \phi )$$}

The last modal operator is concurrent common knowledge (CCK), denoted by $C^C$.
\begin{defn}[Concurrent common knowledge]
	$C^C(\phi)$ is defined as a fixed point of $M^C(\phi \wedge X)$
\end{defn}
CCK defines a state of process knowledge that implies that all processes are in that same state of knowledge, with respect to $\phi$, along some cut of the run. In other words, we want a state of knowledge $X$ satisfying: $X = M^C(\phi \wedge X)$.	
$C^C$ will be defined semantically as the weakest such fixed point, namely as the greatest fixed-point of $M^C(\phi \wedge X)$.
It therefore satisfies:
$$C^C(\phi) \Leftrightarrow  M^C(\phi \wedge C^C(\phi))$$

Thus, $P_i(\phi)$ states that there is some cut in the same asynchronous run $\sigma$ including $i$’s local state, such that $\phi$ is true in that cut.\\

Note that $\phi$ implies $P_i(\phi)$. But it is not the case, in general, that $P_i(\phi)$ implies $\phi$ or even that $M^C(\phi)$ implies $\phi$. The truth of $M^C(\phi)$ is determined with respect to some cut $\vec{c}(\sigma)$. A process cannot distinguish which cut, of the perhaps many cuts that are in the run and consistent with its local state, satisfies $\phi$; it can only know the existence of such a cut.\\ 

\dfnn{Global fact}{Fact $\phi$ is valid in system $\Sigma$, denoted by $\Sigma \vdash \phi$, if $\phi$ is true in all cuts of all runs of $\Sigma$.
	$$\Sigma \vdash \phi 
	\Leftrightarrow (\forall \sigma \in \Sigma)(\forall\vec{c}) (\vec{c}(a) \vdash \phi)$$}

\dfnn{Valid Fact}{Fact $\phi$ is valid, denoted $\vdash \phi$, if $\phi$ is valid in all systems, i.e. $(\forall \Sigma) (\Sigma \vdash \phi)$.
}

\dfnn{Local fact}{A fact $\phi$ is local to process $i$ in system $\Sigma$ if
	$\Sigma \vdash (\phi \Rightarrow K_i \phi)$.}

\subsubsection{State in aBFT Lachesis}

We introduce a new modal operator, written as $M^C$, which stands for ``majority concurrently knows''. The definition of $M^C(\phi)$ is as follows.

\begin{defn}[Majority concurrently knows]
	$$M^C(\phi) =_{def} \bigwedge_{i \in S} K_i P_i(\phi), $$ where $S \subseteq P$ and $|S| > 2n/3$.	
\end{defn}

This is adapted from the ``everyone concurrently knows'' in CCK paper~\cite{cck92}.
In the presence of one-third of faulty nodes, the original operator ``everyone concurrently knows'' is sometimes not feasible.
Our modal operator $M^C(\phi)$ fits precisely the semantics for BFT systems, in which unreliable processes may exist.\\

A weighted version of ``majority concurrently knows'' is ``Quorum concurrently knows'', which is defined as follows:

\begin{defn}[Quorum concurrently knows]
	$$Q^C(\phi) =_{def} \bigwedge_{i \in S} K_i P_i(\phi), $$ where $S \subseteq P$ and $w(S) > 2W/3$.
\end{defn}

The total weight of a set is equal to the sum of the weights of each node in the set; that is, $w(S) = \sum_{1}^{|S|}{w(n_i)}$.

\begin{thm} If $\phi$ is local to process $i$ in system $\Sigma$, then $\Sigma \vdash (P_i(\phi) \Rightarrow \phi)$.	
\end{thm}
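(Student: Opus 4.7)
The plan is a short modal-logic calculation: unfold $P_i$, apply locality, then unfold $K_i$ and use that $\sim_i$ is symmetric. Fix an arbitrary run $\sigma \in \Sigma$ and an arbitrary consistent cut $\vec{c}(\sigma)$, and assume $\vec{c}(\sigma) \vdash P_i(\phi)$; my goal is to deduce $\vec{c}(\sigma) \vdash \phi$, since then validity in $\Sigma$ follows by the arbitrariness of $\sigma$ and $\vec{c}$.

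First I would invoke the definition of $P_i$: the assumption supplies some cut $\vec{c'}(\sigma)$ \emph{in the same run} with $\vec{c'}(\sigma) \sim_i \vec{c}(\sigma)$ and $\vec{c'}(\sigma) \vdash \phi$. Next, applying the locality hypothesis $\Sigma \vdash (\phi \Rightarrow K_i \phi)$ to the cut $\vec{c'}(\sigma)$ yields $\vec{c'}(\sigma) \vdash K_i \phi$. Unfolding $K_i$ at $\vec{c'}(\sigma)$ gives: for every cut $\vec{d}(\tau)$ of every run with $\vec{d}(\tau) \sim_i \vec{c'}(\sigma)$, one has $\vec{d}(\tau) \vdash \phi$.

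The closing move relies on the symmetry of $\sim_i$. From the definition $\vec{c}(\sigma) \sim_i \vec{c'}(\sigma') \Leftrightarrow \vec{c}(\sigma)[i] = \vec{c'}(\sigma')[i]$, equality of local states is manifestly symmetric, so $\vec{c'}(\sigma) \sim_i \vec{c}(\sigma)$ gives $\vec{c}(\sigma) \sim_i \vec{c'}(\sigma)$. Instantiating $\vec{d}(\tau) := \vec{c}(\sigma)$ in the previous display then delivers $\vec{c}(\sigma) \vdash \phi$, as required.

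There is no real obstacle here; the only subtlety worth flagging is a scope-mismatch check. The operator $P_i$ quantifies only over cuts of the same run $\sigma$, while $K_i$ quantifies over cuts of all runs of all systems; since the second step applies $K_i$ at a cut of $\sigma$ and then instantiates it back at another cut of the \emph{same} $\sigma$, both quantifier ranges are respected and no additional hypothesis on $\Sigma$ is needed beyond locality of $\phi$.
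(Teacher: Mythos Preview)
Your argument is correct and is exactly the standard CCK-style proof: unfold $P_i$ to get a $\sim_i$-equivalent cut where $\phi$ holds, apply locality to obtain $K_i\phi$ there, then use symmetry of $\sim_i$ to push $\phi$ back to the original cut. The paper does not actually supply its own proof of this theorem; it simply states that the proofs in this section ``can use similar proofs as described in the original CCK paper,'' and your write-up is precisely that proof.
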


\begin{lem}
	If fact $\phi$ is local to 2/3 of the processes in a system $\Sigma$, then $\Sigma \vdash (M^C(\phi) \Rightarrow \phi)$ and furthermore $\Sigma \vdash (C^C(\phi) \Rightarrow \phi)$.
\end{lem}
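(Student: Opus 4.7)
The plan is to split the lemma into its two implications and leverage the preceding theorem ($\Sigma \vdash (P_i(\phi) \Rightarrow \phi)$ whenever $\phi$ is local to $i$) as the main engine, combined with a short counting argument on 2/3 quorums.

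First I would prove $\Sigma \vdash (M^C(\phi) \Rightarrow \phi)$. Fix an arbitrary run $\sigma$ and cut $\vec{c}$ with $\vec{c}(\sigma) \vdash M^C(\phi)$. By the definition of $M^C$, there is a set $S \subseteq P$ with $|S| > 2n/3$ such that $K_i P_i(\phi)$ holds in $\vec{c}(\sigma)$ for every $i \in S$. Let $L \subseteq P$ be the set of processes to which $\phi$ is local, with $|L| \geq 2n/3$. A pigeonhole step gives $|S \cap L| \geq |S| + |L| - n > n/3 \geq 1$, so there exists some $i_0 \in S \cap L$. Because $K_{i_0}$ is truthful (its semantics quantifies universally over cuts consistent with $i_0$'s local state, and $\vec{c}$ is one of them), $K_{i_0} P_{i_0}(\phi)$ entails $P_{i_0}(\phi)$ at $\vec{c}(\sigma)$. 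Since $\phi$ is local to $i_0$, the preceding theorem yields $P_{i_0}(\phi) \Rightarrow \phi$, and therefore $\vec{c}(\sigma) \vdash \phi$. As $\sigma$ and $\vec{c}$ were arbitrary, $\Sigma \vdash (M^C(\phi) \Rightarrow \phi)$.

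Second, for $\Sigma \vdash (C^C(\phi) \Rightarrow \phi)$ I would use the fixed-point characterization already given in the paper, $C^C(\phi) \Leftrightarrow M^C(\phi \wedge C^C(\phi))$. Since $\phi \wedge C^C(\phi) \Rightarrow \phi$ propositionally, and $M^C$ is monotone in its argument (if $\psi \Rightarrow \psi'$ then $K_i P_i(\psi) \Rightarrow K_i P_i(\psi')$ for each $i$), we get $M^C(\phi \wedge C^C(\phi)) \Rightarrow M^C(\phi)$. Chaining with the first implication gives $C^C(\phi) \Rightarrow M^C(\phi) \Rightarrow \phi$.

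The main obstacle I expect is getting the counting step airtight and consistent with the paper's conventions. One needs to read ``local to 2/3 of the processes'' as $|L| \geq 2n/3$ (or strictly greater) and combine it with the strict inequality $|S| > 2n/3$ so that $|S \cap L| \geq 1$; any off-by-one here collapses the argument. A secondary subtlety is verifying that the modal operator $K_i$ is indeed truthful in this model, which follows directly from its universal-quantifier semantics over consistent cuts that include $i$'s local state, but is worth stating explicitly. Once these are pinned down, the weighted (PoS) analogue with $Q^C$ in place of $M^C$ goes through verbatim by replacing cardinalities with stakes and the $n/3$/$2n/3$ bounds with $W/3$/$2W/3$, using the quorum intersection $w(S \cap L) > W/3$.
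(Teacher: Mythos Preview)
Your argument is correct and is precisely the standard CCK-style proof. Note, however, that the paper does not actually supply its own proof of this lemma: it states the result and, a few paragraphs later, remarks that ``for a proof of the above theorems and lemmas in this Section, we can use similar proofs as described in the original CCK paper.'' So there is no paper-specific proof to compare against; your quorum-intersection step combined with truthfulness of $K_i$ and the preceding theorem $\Sigma \vdash (P_i(\phi) \Rightarrow \phi)$, followed by the fixed-point unwinding $C^C(\phi) \Rightarrow M^C(\phi) \Rightarrow \phi$, is exactly the intended argument.

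One small cosmetic point: in the counting step you write $|S \cap L| > n/3 \geq 1$, but the last inequality only holds for $n \geq 3$. What you actually need (and have) is $|S \cap L| > n/3 \geq 0$, which forces the integer $|S \cap L|$ to be at least $1$; the conclusion is unaffected.
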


\begin{defn}
	A fact $\phi$ is attained in run $\sigma$ if $\exists \vec{c}(\sigma) (\vec{c}(\sigma) \vdash \phi)$.
\end{defn}

Often, we refer to ``knowing'' a fact $\phi$ in a state rather than in a consistent cut, since knowledge is dependent only on the local state of a process.
Formally, $i$ knows $\phi$ in state $s$ is shorthand for
$$\forall \vec{c}(\sigma) (\vec{c}(\sigma)[i] = s \Rightarrow \vec{c}(\sigma) \vdash \phi)$$

For example, if a process in Lachesis protocol knows a fork exists (i.e., $\phi$ is the exsistenc of fork) in its local state $s$ (i.e., $g_j^i$), then a consistent cut contains the state $s$ will know the existence of that fork.

\begin{defn}[$i$ learns $\phi$]
	Process $i$ learns $\phi$ in state $s_j^i$ of run $\sigma$ if $i$ knows $\phi$ in $s_j^i$ and, for all previous states $s_k^i$ in run $\sigma$, $k < j$, $i$ does not know $\phi$.
\end{defn}

The following theorem says that if $C_C(\phi$ is attained in a run then all processes $i$ learn $P_i C^C(\phi)$ along a single consistent cut.

\begin{thm}[attainment]
	If $C^C(\phi)$ is attained in a run $\sigma$, then the set of states in which all processes learn $P_i C^C(\phi)$ forms a consistent cut in $\sigma$.	
\end{thm}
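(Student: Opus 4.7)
The plan is to reduce the theorem to the fixed-point characterization of $C^C(\phi)$ together with the locality of the modal operator $P_i$. First, since $C^C(\phi)$ is attained in $\sigma$ there is some consistent cut $\vec{c}$ with $\vec{c}(\sigma) \vdash C^C(\phi)$. Applying the fixed-point equation $C^C(\phi) \Leftrightarrow M^C(\phi \wedge C^C(\phi))$ at this cut yields $K_i P_i C^C(\phi)$ for every process $i$ in the majority set. Hence each such $i$ already knows $P_i C^C(\phi)$ in the state $\vec{c}(\sigma)[i]$, and since ``$i$ knows $P_i C^C(\phi)$'' is a local predicate on $i$'s state, there is a well-defined earliest state $s_i^{\star}$ in $i$'s history at which $i$ learns $P_i C^C(\phi)$, and $s_i^{\star}$ is a prefix of $\vec{c}(\sigma)[i]$.

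Next I would show that the vector $\vec{d}(\sigma)$ defined by $\vec{d}(\sigma)[i] = s_i^{\star}$ is itself a consistent cut of $\sigma$. Suppose $v_x^j \rightarrow v_y^i$ and $v_y^i$ belongs to $s_i^{\star}$. I must show $v_x^j$ belongs to $s_j^{\star}$. Because $P_i C^C(\phi)$ is local to $i$, the state $s_i^{\star}$ alone determines that $P_i C^C(\phi)$ holds; unfolding the definition of $P_i$ then produces a cut $\vec{c}'(\sigma)$ with $\vec{c}'(\sigma) \sim_i \vec{d}(\sigma)$ and $\vec{c}'(\sigma) \vdash C^C(\phi)$. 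At this cut the fixed-point equation again provides $K_j P_j C^C(\phi)$, so $j$ has already learned $P_j C^C(\phi)$ by the state $\vec{c}'(\sigma)[j]$, i.e.\ $s_j^{\star}$ is a prefix of $\vec{c}'(\sigma)[j]$. Since $\vec{c}'(\sigma)$ is itself a consistent cut that contains $v_y^i$ and since $v_x^j \rightarrow v_y^i$, the cut $\vec{c}'(\sigma)$ must contain $v_x^j$, so $v_x^j$ lies in $\vec{c}'(\sigma)[j]$.

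The main obstacle, and the real subtlety of the proof, is moving from ``$v_x^j$ lies in $\vec{c}'(\sigma)[j]$'' to ``$v_x^j$ lies in the earlier prefix $s_j^{\star}$''. A priori $s_j^{\star}$ is a proper prefix of $\vec{c}'(\sigma)[j]$ and could fail to contain $v_x^j$, so the direct consistency argument is insufficient. To close this gap I would argue that the witness cut $\vec{c}'$ can be chosen minimally: starting from an arbitrary witness and truncating each coordinate $\vec{c}'(\sigma)[j]$ down to its own first-learning state $s_j^{\star}$ preserves both the fixed-point property (by locality of $P_j C^C(\phi)$ and monotonicity of knowledge) and consistency of the cut. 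With this minimal witness, $\vec{c}'(\sigma)[j] = s_j^{\star}$, and the consistency of $\vec{c}'$ then directly forces $v_x^j \in s_j^{\star}$. Verifying that this truncation indeed preserves $C^C(\phi)$ on the truncated vector is the delicate technical step; it rests on the fact that at $s_j^{\star}$ process $j$ already possesses all the local information needed to know $P_j C^C(\phi)$, so no later event on $j$ is required to sustain the fixed-point, mirroring the classical attainment argument of Panangaden and Taylor~\cite{cck92}.
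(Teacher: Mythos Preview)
The paper does not give its own proof of this theorem; immediately after stating it, the paper writes that ``for a proof of the above theorems and lemmas in this Section, we can use similar proofs as described in the original CCK paper.'' Your sketch is precisely a reconstruction of the Panangaden--Taylor attainment argument and you say so yourself in the last line, so your approach is by construction the one the paper defers to.

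One substantive issue is worth flagging, and it is a problem with the paper's adaptation rather than with your reconstruction. The paper replaces the CCK operator $E^C$ (\emph{everyone} concurrently knows) by $M^C$, which only requires $K_iP_i(\cdot)$ for processes $i$ in some set $S$ with $|S|>2n/3$, yet it restates the attainment theorem verbatim for ``all processes.'' Your first paragraph is careful to write ``for every process $i$ in the majority set,'' but the consistency argument in your second and third paragraphs silently reverts to treating every process $j$. Under the weakened $M^C$, unrolling the fixed point at a witnessing cut gives $K_iP_iC^C(\phi)$ only for $i\in S$; a process outside $S$ need not ever learn $P_iC^C(\phi)$, and even for $i\in S$ the witness cut $\vec{c}'$ you extract need not give $K_jP_jC^C(\phi)$ for the particular $j$ you care about. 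So the theorem as stated may simply fail with the paper's modified operator, and your truncation step inherits this gap. This is not a flaw in the classical CCK argument you are reproducing; it is a mismatch introduced when the paper changed $E^C$ to $M^C$ without adjusting the theorem statement.
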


We have presented a formal semantics of Lachesis protocol based on the concepts and notations of concurrent common knowledge~\cite{cck92}.
For a proof of the above theorems and lemmas in this Section, we can use similar proofs as described in the original CCK paper.

With the formal semantics of Lachesis, the theorems and lemmas described in Section~\ref{se:proof} can be expressed in term of CCK. For example, one can study a fact $\phi$ (or some primitive proposition $p$) in the following forms: `is there any existence of fork?'. One can make Lachesis-specific questions like 'is event block $v$ a root?', 'is $v$ a clotho?', or 'is $v$ a atropos?'. This is a remarkable result, since we are the first that define such a formal semantics for DAG-based protocol.

\section{Alternatives for scalability}

\subsection{Layering-based Model}\label{se:Sprotocol}

In this section, we present an approach that uses graph layering on top of our PoS DAG-based Lachesis protocol. Our general model is given in \stakedag\ protocol~\cite{stakedag}, which is based on layering-based approach of ONLAY framework~\cite{onlay19}. After using layering algorithms on the OPERA DAG, the assigned layers are then used to determine consensus of the event blocks.

\subsubsection{Layering Definitions}\label{se:layering}

For a directed acyclic graph $G$=($V$,$E$), a layering is to assign a layer number to each vertex in $G$.

\dfnn{Layering}{A layering (or levelling) of $G$ is a topological numbering $\phi$ of $G$, $\phi: V \rightarrow Z$,  mapping the  set  of  vertices $V$ of $G$ to  integers  such  that $\phi(v)$ $\geq$ $\phi(u)$ + 1 for every directed edge ($u$, $v$) $\in E$.  If $\phi(v)$=$j$, then $v$ is a layer-$j$ vertex and $V_j= \phi^{-1}(j)$ is the jth layer of $G$.}

From happened before relation, $v_i \hbefore v_j$ iff $\phi(v_i) < \phi(v_j)$.
Layering $\phi$ partitions the set of vertices $V$ into a finite number $l$ of \emph{non-empty} disjoint subsets (called layers) $V_1$,$V_2$,$\dots$, $V_l$, such that $V$ = $\cup_{i=1}^{l}{V_i}$. Each vertex is assigned to a layer $V_j$, where $1\leq j \leq l$, such that every edge ($u$,$v$) $\in E$, $u \in V_i$, $v \in V_j$, $1 \leq i < j \leq l$.

\dfnn{Hierarchical graph}{
	For a layering $\phi$, 
	the produced graph $H$=($V$,$E$,$\phi$) is a \emph{hierarchical graph}, which is also called an $l$-layered directed graph and could be represented as
	$H$=($V_1$,$V_2$,$\dots$,$V_l$;$E$).}

Approaches to DAG layering include minimizing the height~\cite{bang2008digraphs,Sedgewick2011}, fixed width layering~\cite{Coffman1972} and minimizing the total edge span. 
A simple layering algorithm to achieve minimum height is as follows:
First, all source vertices are placed in the first layer $V_1$.   
Then, the layer $\phi(v)$ for every remaining vertex $v$ is recursively defined by $\phi(v)$ = $max\{\phi(u) | (u, v) \in E\}$ + 1.
This algorithm produces a layering where many vertices will stay close to the bottom, and hence the number of layers $k$ is kept minimized. By using a topological ordering of the vertices~\cite{Mehlhorn84a}, the algorithm can be implemented in linear time $O$($|V|$+$|E|$).

\subsubsection{H-OPERA DAG}

We introduce a new notion of H-OPERA DAG, which is built on top of the OPERA DAG by using graph layering.

\begin{defn}[H-OPERA DAG]
	An H-OPERA DAG is the result hierarchical graph $H = (V,E, \phi)$.
\end{defn}

H-OPERA DAG can also be represented by $H$=($V_1$,$V_2$,$\dots$,$V_l$;$E$). Vertices are assigned with layers such that each edge ($u$,$v$) $\in E$ flows from a higher layer to a lower layer i.e., $\phi(u)$ $>$ $\phi(v)$.

Generally speaking, a layering can produce a proper hierarchical graph by introducing dummy vertices along every long edge. However, for consensus protocol, such dummy vertices can increase the computational cost of the H-OPERA itself and of the following steps to determine consensus. Thus, in \onlay\ we consider layering algorithms that do not introduce dummy vertices.

A simple approach to compute H-OPERA DAG is to consider OPERA DAG as a static graph $G$ and then apply a layering algorithm such as either LongestPathLayer (LPL) algorithm or Coffman-Graham (CG) algorithm on the graph $G$.
LPL algorithm can achieve HG in linear time $O$($|V|$+$|E|$) with the minimum height. CG algorithm has a time complexity of $O$($|V|^2$), for a given $W$.

%\subsubsubsection{Online Layer Assignment}: 
To handle large $G$ more efficiently, we introduce online layering algorithms. We adopt the online layering algorithms introduced in~\cite{onlay19}. Specifically, we use \emph{Online Longest Path Layering} (O-LPL) and \emph{Online Coffman-Graham} (O-CG) algorithm, which assign layers to the vertices in diff graph $G'$=($V'$,$E'$) consisting of new self events and received unknown events.
The algorithms are efficient and scalable to compute the layering of large dynamic S-OPERA DAG.
Event blocks from a new graph update are sorted in topological order before being assigned layering information.

\subsubsection{Root Graph}\label{se:rootgraph}

There is a Root selection algorithm given in Section~\ref{se:rootsel}.
Here, we present a new approach that uses root graph and frame assignment.

\begin{defn}[Root graph] 
	A root graph $G_R$=($V_R$, $E_R$) is a DAG consisting of vertices as roots and edges represent their reachability.
\end{defn}

In root graph $G_R$, the set of roots $V_R$ is a subset of $V$. Edge ($u$,$v$) $\in$ $E_R$ only if $u$ and $v$ are roots and $u$ can reach $v$ following edges in $E$ i.e., $v \hbefore u$. Root graph contains the $n$ genesis vertices i.e., the $n$ leaf event blocks. 
A vertex $v$ that can reach at least 2/3$W$ + 1 of the current set of all roots $V_R$ is itself a root. For each root $r_i$ that new root $r$ reaches, we include a new edge ($r$, $r_i$) into the set of root edges $E_R$.
Note that, if a root $r$ reaches two other roots $r_1$ and $r_2$ of the same node, and $\phi(r_1) >\phi(r_2)$ then we include only one edge ($r$, $r_1$) in $E_R$. This requirement ensures that each root of a node can have at most one edge to any other node.

\begin{algorithm}[htb]
	\caption{Root graph algorithm}\label{al:buildrootgraph}
	\begin{algorithmic}[1]
		\State Require: H-OPERA DAG
		\State Output: root graph $G_R=(V_R,E_R)$
		\State{$R \leftarrow$ set of leaf events}
		\State $V_R \leftarrow R$
		\State $E_R \leftarrow \emptyset$
		
		\Function{buildRootGraph}{$\phi$, $l$}
		\For{each layer $i$=1..$l$}
		\State$Z \leftarrow \emptyset$
		\For{each vertex $v$ in layer $\phi_i$}
		\State $S \leftarrow$ the set of vertices in $R$ that $v$ reaches
		\If{$w_S > 2/3W$}
		\For{each vertex $v_i$ in $S$}
		\State $E_R \leftarrow E_R \cup \{(v,v_i)\}$
		\EndFor
		\State $V_R \leftarrow V_R \cup \{v\}$
		\State $Z \leftarrow Z \cup \{v\}$ 
		\EndIf
		\EndFor
		
		\For{each vertex $v_j$ in $Z$}
		\State Let $v_{old}$ be a root in $R$ such that $cr(v_j) = cr(v_{old})$		
		\State $R \leftarrow R \setminus \{v_{old}\} \cup \{v_j\}$
		\EndFor
		\EndFor
		\EndFunction
	\end{algorithmic}
\end{algorithm}

Figure~\ref{fig:daglayerrootframe-3} shows an example of the H-OPERA DAG, which is the resulting hierarchical graph from a laying algorithm. Vertices of the same layer are placed on a horizontal line. There are 14 layers which are labelled as L0, L1, ..., to L14. Roots are shown in red colors.
Figure~\ref{fig:rootgraph-3} gives an example of the root graph constructed from the H-OPERA DAG.

\begin{figure}[htb]
	\centering			
	\subfloat[Hierarchical graph with frame assignment]{\label{fig:daglayerrootframe-3}
		\includegraphics[width=0.25\linewidth]{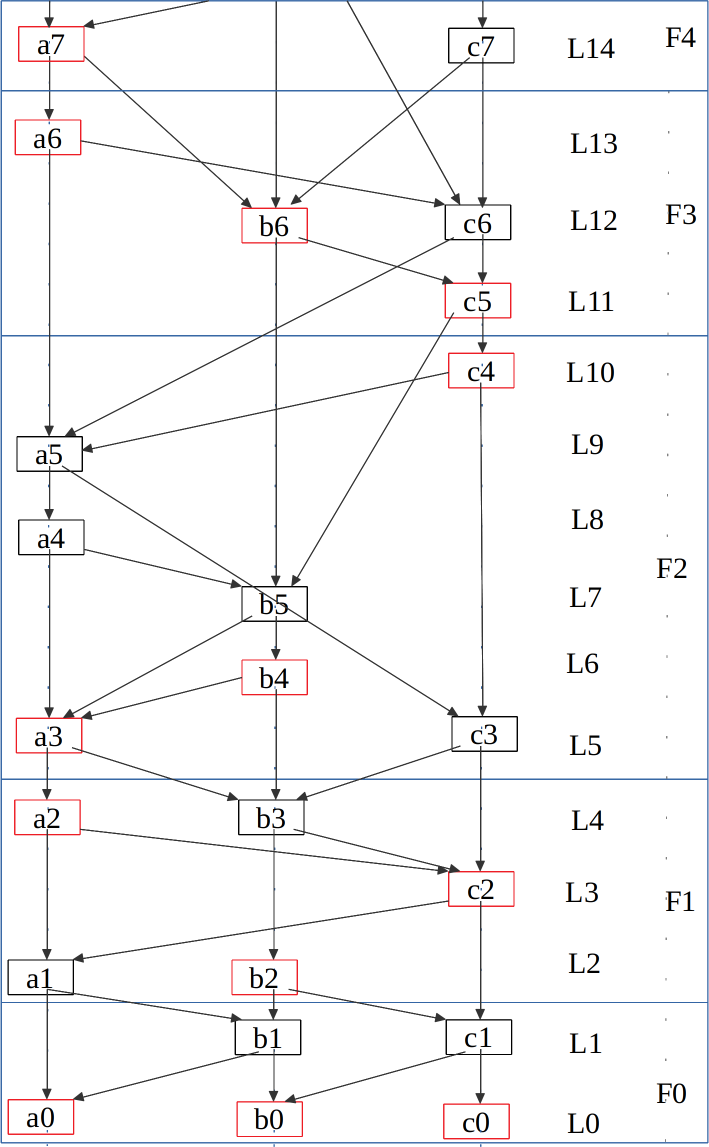}}
	\qquad
	\subfloat[Root graph]{\label{fig:rootgraph-3}\includegraphics[width=0.25\linewidth]{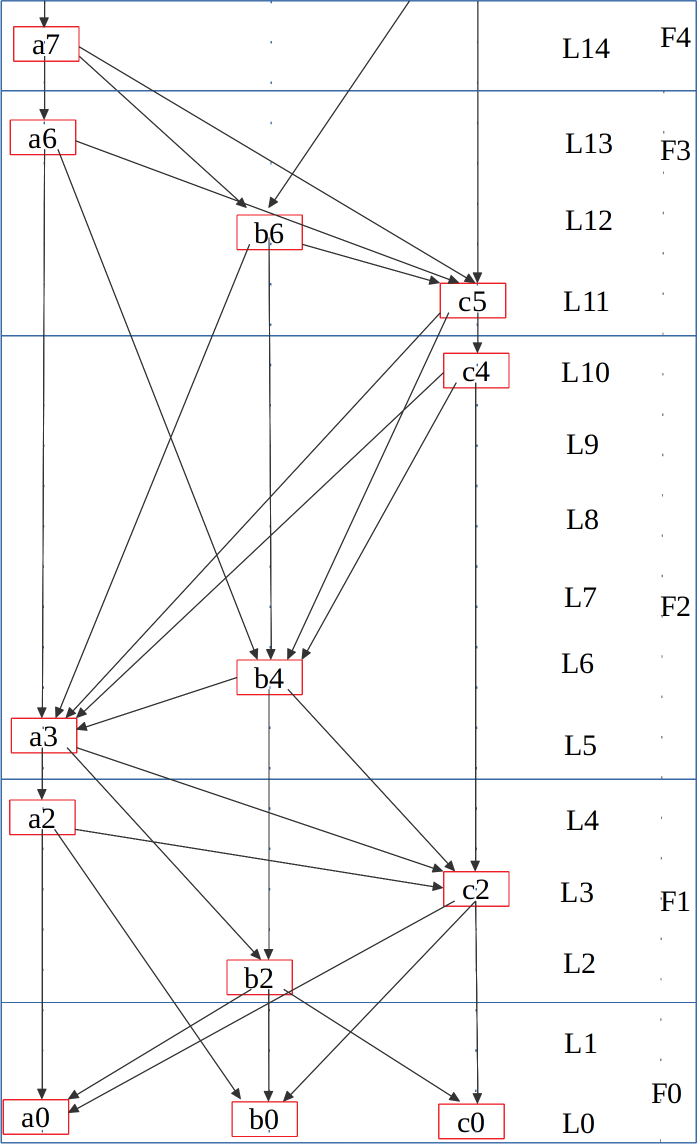}}
	\qquad
	\subfloat[Root-layering of a root graph]{\label{fig:rootgraph-3-layer}\includegraphics[width=0.25\linewidth]{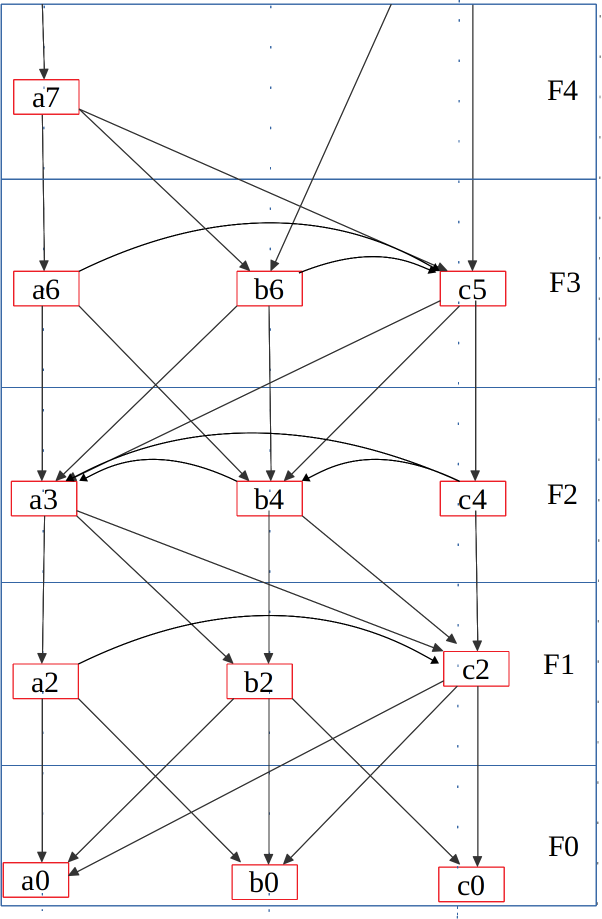}}
	\caption{An example of hierachical graph and root graph}
	\label{fig:layerframe}	
\end{figure}

\subsubsection{Frame Assignment}\label{se:frameassignment}

We then present a determinstic approach to  frame assignment, which assigns each event block a unique frame number.
First, we show how to assign frames to root vertices via the so-called \emph{root-layering}. Second, we then assign non-root vertices to the determined frame numbers with respect to the topological ordering of the vertices.

\begin{defn}[Root-layering]
	A root layering assigns each root with a unique layer (frame) number.	
\end{defn}

Here, we apply root layering on the root graph $G_R=(V_R, E_R)$. Then, we use root layering information to assign frames to non-root vertices. A frame assignment assigns every vertex $v$ in H-OPERA DAG $H$ a frame number $\phi_F(v)$ such that 
\begin{itemize}
	\item $\phi_F(u) \geq \phi_F(v)$ for every directed edge $(u,v)$ in $H$;
	\item for each root $r$, $\phi_F(r)= \phi_R(r)$;
	\item for every pair of $u$ and $v$: if $\phi(u) \geq \phi(v)$ then $\phi_F(u) \geq \phi_F(v)$;  if $\phi(u) = \phi(v)$ then $\phi_F(u) = \phi_F(v)$.
\end{itemize}
 Figure~\ref{fig:rootgraph-3-layer} depicts an example of root layering for the root graph in Figure~\ref{fig:rootgraph-3}.

\subsubsection{Consensus}

For an OPERA DAG $G$, let $G[v]$ denote the subgraph of $G$ that contains nodes and edges reachable from $v$.

\dfnn{Consistent chains}{For two chains $G_1$ and $G_2$, they are consistent if for any event $v$ contained in both chains, $G_1[v] = G_2[v]$. Denoted by $G_1 \sim G_2$.}

\dfnn{Global OPERA DAG}{A global consistent chain $G^C$ is a chain such that $G^C \sim G_i$ for all $G_i$.}

The layering of consistent OPERA DAGs is consistent itself.

\dfnn{Consistent layering}{For any two consistent OPERA DAGs $G_1$ and $G_2$,  layering results $\phi^{G_1}$ and $\phi^{G_2}$ are consistent if $\phi^{G_1}(v) = \phi^{G_2}(v)$, for any  vertex $v$ common to both chains. Denoted by $\phi^{G_1} \sim \phi^{G_2}$.}

\begin{thm}	
	For two consistent OPERA DAGs $G_1$ and $G_2$, the resulting H-OPERA DAGs using layering $\phi_{LPL}$ are consistent.
\end{thm}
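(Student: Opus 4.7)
The plan is to show that the Longest Path Layering value $\phi_{LPL}(v)$ is a function solely of the induced subgraph $G[v]$, and then to invoke the definition of consistent OPERA DAGs, which guarantees $G_1[v] = G_2[v]$ for every $v$ common to both. Once this locality property of $\phi_{LPL}$ is established, consistency of the layering is immediate.

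\paragraph{Key steps.}
First, I would recall the recursion defining $\phi_{LPL}$: leaf (source) events receive layer $1$, and for any non-leaf event $v$, $\phi_{LPL}(v) = 1 + \max\{\phi_{LPL}(u) : u \text{ is a parent of } v \text{ in } G\}$, where the maximum is taken over the (immediate) ancestors of $v$ via edges in $E$. Since every such parent $u$ lies in $G[v]$, and recursively all of $u$'s ancestors also lie in $G[v]$, the value $\phi_{LPL}(v)$ depends only on vertices and edges inside $G[v]$. Second, I would prove by strong induction on the longest-path height of $v$ (well-defined because $G$ is a DAG) the claim: for any two OPERA DAGs $G$ and $G'$ with $G[v] = G'[v]$, we have $\phi^{G}_{LPL}(v) = \phi^{G'}_{LPL}(v)$. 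The base case is when $v$ is a leaf in $G[v]$, so both values equal $1$. The inductive step applies the recursion to $v$'s parents, each of which satisfies the same subgraph-equality hypothesis (since $G[u] \subseteq G[v]$ for every ancestor $u$ of $v$), and then uses the induction hypothesis.

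Third, I would apply this locality lemma to the given consistent pair $G_1 \sim G_2$. By the definition of consistent chains, for any $v$ present in both $G_1$ and $G_2$ we have $G_1[v] = G_2[v]$, and therefore $\phi^{G_1}_{LPL}(v) = \phi^{G_2}_{LPL}(v)$. This is exactly the definition of consistent layering, so the H-OPERA DAGs produced by LPL on $G_1$ and $G_2$ are consistent, completing the proof.

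\paragraph{Main obstacle.}
There is no deep obstacle here; the argument is essentially a locality-of-computation observation plus structural induction. The only delicate point is to state clearly which direction of the edge relation the recursion traverses and to confirm that the LPL recursion only ever consults ancestors of $v$ (so that it is well-defined purely in terms of $G[v]$). This must be handled carefully because the paper uses the convention that an edge $(u,v)$ represents a hashing reference from the newer event $u$ to the older $v$, yet layer numbers grow away from the leaves; getting the direction right is the one place a routine proof could be derailed. Once that bookkeeping is pinned down, the induction and the appeal to $G_1[v] = G_2[v]$ are straightforward.
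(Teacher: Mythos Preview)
Your proposal is correct and follows essentially the same approach as the paper: the paper's proof simply observes that $G_1 \sim G_2$ gives $G_1[v]=G_2[v]$, so the height (and hence the LPL layer) of $v$ is the same in both chains. Your version is more detailed, spelling out via induction the locality property that the paper takes for granted, but the key idea is identical.
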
 

The theorem states that for any event $v$ contained in both OPERA DAGs, $\phi_{LPL}^{G_1}(v) = \phi_{LPL}^{G_2}(v)$. Since $G_1 \sim G_2$, we have $G_1[v]= G_2[v]$. Thus, the height of $v$ is the same in both $G_1$ and $G_2$. Thus, the assigned layer using $\phi_{LPL}$ is the same for $v$ in both chains.

\begin{prop}[Consistent root graphs]
	Two root graphs $G_R$ and $G'_R$ from two consistent H-OPERA DAGs are consistent.
\end{prop}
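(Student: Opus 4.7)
The plan is to first make the notion of ``consistent root graphs'' precise, by analogy with consistent H-OPERA DAGs: namely, that for every root $v$ contained in both $G_R$ and $G'_R$, the subgraphs $G_R[v]$ and $G'_R[v]$ are identical. With this definition in hand, I would prove the proposition by induction on the layer index $i$ used in the outer loop of Algorithm~\ref{al:buildrootgraph}, leveraging the fact that consistent H-OPERA DAGs already give us $H_1[v] = H_2[v]$ and $\phi^{H_1}(v) = \phi^{H_2}(v)$ for every common vertex $v$ (by the preceding theorem on consistent layering).

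The base case is immediate: the initial $V_R$ consists of the leaf (genesis) events of each node, and leaves agree across any two H-OPERA DAGs by the consistency of local DAGs. For the inductive step, assume that after processing layers $1,\dots,i-1$ the two algorithm states (the accumulated $V_R$, $E_R$ and the rolling set $R$ of ``current'' per-creator roots) are identical on the common part of the two graphs. Now consider a vertex $v$ at layer $i$ that lies in both H-OPERA DAGs. Because $H_1[v]=H_2[v]$, the set $S$ of previously-chosen roots reached from $v$ is the same in both runs, and so is its weight $w_S$. Thus $v$ satisfies $w_S > 2W/3$ in one iff in the other, and when it does, exactly the same set of edges $\{(v,v_i) : v_i \in S\}$ is added. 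This maintains equality of $V_R$ and $E_R$ through layer $i$.

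The subtle point, and the main obstacle, is the tie-breaking clause at the end of each layer iteration: for each new root $v_j$ we must replace some previously stored $v_{old}$ with $cr(v_j)=cr(v_{old})$ inside the rolling set $R$. I need to verify that this $v_{old}$ is deterministically determined, so that both executions make the same replacement. Here I would appeal to the Fork Absence theorem (Theorem~\ref{thm:same}) and the fork-free property of the subgraph under any Clotho/Atropos: within an honest global view there cannot be two distinct roots with the same creator at the same layer that both survive in $R$, so the previous $v_{old}$ is uniquely determined by $cr(v_j)$ and the shared history processed so far. Combined with the consistency of layer numbers $\phi$, this removes all nondeterminism from the inner update.

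Closing the induction then gives $V_R=V'_R$ and $E_R=E'_R$ on the common prefix of the two H-OPERA DAGs, hence $G_R[v]=G'_R[v]$ for every common root $v$, i.e., $G_R \sim G'_R$. As a corollary one also obtains that the root-layering $\phi_R$ of Section~\ref{se:frameassignment} is consistent across honest nodes, since it is computed deterministically from the consistent root graph; I would state this remark at the end of the proof for use in the frame-assignment consistency argument that follows.
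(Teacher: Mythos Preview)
Your proposal is correct and, in fact, considerably more detailed than what the paper provides. The paper does not spell out a proof for this proposition at all; immediately after stating it, the text simply says ``We can have similar proofs for the consistency of roots, clothos, and atropos as in Section~\ref{pf:root}, \ref{pf:clotho}, \ref{pf:atropos}.'' The template it points to is the earlier root-consistency argument, which is an induction on the frame index $j$: the base case is the leaf set, and the inductive step uses $G_1[v]=G_2[v]$ to show that the witnessing set $S$ of $2W/3$ prior-frame roots for $v$ is the same in both DAGs.

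Your argument follows the same inductive skeleton but is organized along the layer index $i$ of Algorithm~\ref{al:buildrootgraph} rather than along frame numbers. This is a natural choice here, since the root graph is literally built layer-by-layer, and it lets you track the algorithm's state $(V_R,E_R,R)$ explicitly. Your treatment of the replacement step for $v_{old}$ is more careful than anything the paper writes down; invoking fork absence to rule out two same-creator roots appearing at the same layer is the right move (and one could also note that $R$ maintains the invariant of at most one root per creator throughout, which makes $v_{old}$ unique). The closing remark about consistency of the induced root-layering $\phi_R$ is a useful addition that the paper leaves implicit.
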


We can have similar proofs for the consistency of roots, clothos, and atropos as in Section~\ref{pf:root}~\ref{pf:clotho}, ~\ref{pf:atropos}.

\subsubsection{Main procedure}
Algorithm~\ref{al:main} shows the main function serving as the entry point to launch \stakedag\ protocol. The main function consists of two main loops, which are they run asynchronously in parallel. The first loop attempts to request for new nodes from other nodes, to create new event blocks and then communicate about them with all other nodes. 
The second loop will accept any incoming sync request from other nodes. The node will retrieve updates from other nodes and will then send responses that consist of its known events.

%main procedure
\begin{algorithm}[H]
	\caption{\stakedag\ Main Function}\label{al:main}
	\begin{tabular}{ll}
		\begin{minipage}{0.5\textwidth}
			\begin{algorithmic}[1]
				\Function{Main Function}{}
				\BState \emph{loop}:
				\State Let $\{n_i\}$ $\leftarrow$ $k$-PeerSelectionAlgo()
				\State Sync request to each node in  $\{n_i\}$
				\State (SyncPeer) all known events to each node in $\{n_i\}$
				\State Create new event block: newBlock($n$, $\{n_i\}$)
				\State (SyncOther) Broadcast out the message
				\State Update DAG $G$
				\State Call ComputeConsensus($G$)
				\BState \emph{loop}:
				\State Accepts sync request from a node
				\State Sync all known events by \stakedag\ protocol
				\EndFunction
			\end{algorithmic} 
		\end{minipage}
		\begin{minipage}{0.5\textwidth}
			\begin{algorithmic}[1]
				\Function{ComputeConsensus}{DAG}
				\State Apply layering to obtain H-OPERA DAG*
				\State Update flagtable
				\State Compute validation score*
				\State Root selection
				\State Compute root graph*
				\State Compute global state
				\State Clotho selection
				\State Atropos selection
				\State Order vertices and assign consensus time
				\State Compute Main chain
				\EndFunction	
			\end{algorithmic}
		\end{minipage}
	\end{tabular}
\end{algorithm}

\subsection{\stair\ Model}\label{se:model}

We present a model to allow low-staked accounts to join as a validating node, as described in \stair~\cite{stairdag}. In this model, we distinguish nodes into users (low-staked), validators (high-staked) and monitors (zero-staked). Users have less than $U$ tokens, but can join the network as a validator node. We give a summary to highlights the idea of \stair's model.

{\bf Event Block Creation}: Unlike StakeDag, \stair\ requires one of the other-ref blocks is from a creator of an opposite type. That is, if $n_i$ is a User, then one of the other-ref parents must be from a Validator block. If $n_i$ is a Validator, then one of the other-ref parents must be a User block.

{\bf x-DAG: Cross-type DAG}: \stair\ protocol maintains the DAG structure x-DAG, which is based on the concept of S-OPERA DAG in our StakeDag protocol~\cite{stakedag}. 
x-Dag chain is a weighted directed acyclic graph $G$=($V$,$E$), where $V$ is the set of event blocks, $E$ is the set of edges. Each vertex (or event block) is associated with a \emph{validation score}, which is the total weights of the roots reachable from it. 
When a block becomes a root, it is assigned a \emph{weight}, which is the same with the validating power $p_i$ of the creator node.

\begin{figure}[ht]
	\centering
	\includegraphics[width=0.3\linewidth]{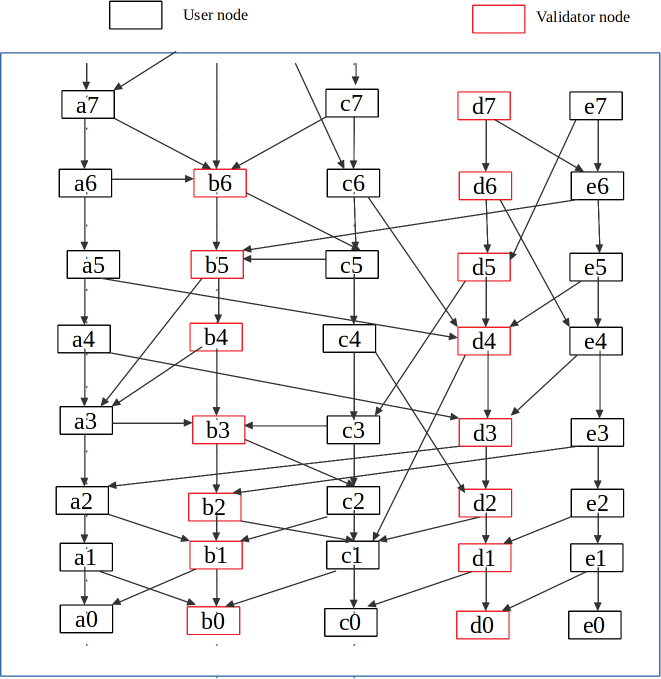}	
	\caption{Examples of x-DAG with users and validators}
	\label{fig:stakedag-ex}
\end{figure}

Figure~\ref{fig:stakedag-ex} gives an example of x-DAG, which is a weighted acyclic directed graph stored as the local view of each node. There are five nodes in the example, three of which are normal users with validating power of 1, and the rest are validators. Each event block has $k$=2 references. User blocks are colored Black and Validator blocks are in Red. Each Red block has one Red parent (same creator) and one Black parent (created by a User). Each Black block has one Black parent (same creator) and one Red parent (created by a Validator). As depicted, the example x-DAG is a RedBlack k-ary (directed acyclic) graph.

{\bf Stake-aware Peer Synchronization}: 
In \stair\ protocol, we present a approach for stake-aware peer synchronization to improve the fairness, safety and liveness of the protocol. \stair\ protocol enforces User nodes (low stake) to connect with Validators (high stake) in order to gain more validating power for User's new own blocks. Validators are also required to connect User nodes so as to validate the User's new blocks. Algorithm~\ref{al:main} shows the main function serving as the entry point to launch \stair\ protocol. 
%main procedure
\begin{algorithm}[H]
	\caption{\stair\ Main Function}\label{al:main}	
	\begin{algorithmic}[1]
		\State $\mathcal{U}$ $\leftarrow$ set of users ($t_i < U$).
		\State $\mathcal{V}$ $\leftarrow$ set of validators ($t_i \geq U$).
		\Function{Main Function}{$n_m$}
		\BState \emph{loop}:
		\State Let $\{n_i\}$ $\leftarrow$ $k$-PeerSelectionAlgo($n_m$, $\mathcal{U}$, $\mathcal{V}$)
		\State Sync request to each node in  $\{n_i\}$
		\State (SyncPeer) all known events to each node in $\{n_i\}$
		\State Create new event block: newBlock($n_m$, $\{n_i\}$)
		\State (SyncAll) Broadcast out the message		
		\State Update x-DAG $G$
		\State Call ComputeConsensus($G$)
		\BState \emph{loop}:
		\State Accepts sync request from a node
		\State Sync all known events by \stair\ protocol
		\EndFunction
	\end{algorithmic} 
\end{algorithm}

{\bf Saga points}: To measure the performance of participants, we propose a new notion, called Saga points (Sp). The Saga points or simply 'points' of a node $n_i$, is denoted by $\alpha_i$. As an example, Sp point is defined as the number of own events created by $n_i$ that has a parent (other-ref) as a finalized event (Atropos).

In \stair, we propose to use Saga points to scale the final value of validating power, say $w'_i = \alpha_i . w_i$.
Saga point is perfectly a good evidence for the validation achievement made by a node. Saga point is hard to get and nodes have to make a long time commitment to accumulate the points.

\clearpage
\section{Reference}\label{se:ref}

\renewcommand\refname{\vskip -1cm}
\bibliographystyle{abbrv}
\bibliography{Lachesis}

\end{document}